\documentclass[aps,prl,reprint,superscriptaddress,longbibliography]{revtex4-2}

\usepackage{amsmath,amssymb,amsthm,mathtools,bm}
\usepackage{graphicx}
\usepackage[svgnames]{xcolor}
\usepackage[colorlinks=true,linkcolor=blue,citecolor=blue,urlcolor=blue]{hyperref}
\usepackage{docmute}

\newtheorem{proposition}{Proposition}
\newtheorem{lemma}{Lemma}
\newtheorem{corollary}{Corollary}

\newcommand{\E}{\mathbb{E}}
\newcommand{\Prob}{\mathbb{P}}
\newcommand{\eps}{\varepsilon}
\newcommand{\rvxi}{\boldsymbol{\xi}}
\newcommand{\rvp}{\boldsymbol{p}}
\newcommand{\rvLambda}{\boldsymbol{\Lambda}}
\newcommand{\rvmu}{\boldsymbol{\mu}}
\newcommand{\guessLambda}{g_\Lambda}
\newcommand{\guessmu}{g_\mu}
\newcommand{\rvM}{\mathbf{M}}
\newcommand{\rvZ}{\mathbf{Z}}

\newcounter{suppsection}
\renewcommand{\thesuppsection}{\Roman{suppsection}}
\makeatletter
\newcommand{\suppsection}[1]{%
  \stepcounter{suppsection}%
  \section{Supplemental Material \thesuppsection: #1}%
  \edef\@currentlabel{\thesuppsection}%
}
\makeatother

\newcommand{\ratepanel}[2]{%
  \begin{minipage}[t]{0.48\textwidth}
  \centering
  \textbf{#1}\\[-0.2em]
  \includegraphics[width=\linewidth]{#2}
  \end{minipage}%
}

\newcommand{\columnratepanel}[2]{%
  \begin{minipage}[t]{0.98\columnwidth}
  \centering
  \textbf{#1}\\[-0.2em]
  \includegraphics[width=\linewidth]{#2}
  \end{minipage}%
}

\begin{document}

\let\combinedbibliography\bibliography
\renewcommand{\bibliography}[1]{}

\title{Quantum Key Distribution Beyond Stationary Channels}

\author{Vaisakh Mannalath}
\email{vmannalath@vqcc.uvigo.es}
\affiliation{Vigo Quantum Communication Center, University of Vigo, Vigo E-36310, Spain}
\affiliation{Escuela de Ingenier{\'i}a de Telecomunicaci{\'o}n, Department of Signal Theory and Communications, University of Vigo, Vigo E-36310, Spain}
\affiliation{AtlanTTic Research Center, University of Vigo, Vigo E-36310, Spain}
\author{V{\'\i}ctor Zapatero}
\affiliation{Vigo Quantum Communication Center, University of Vigo, Vigo E-36310, Spain}
\affiliation{Escuela de Ingenier{\'i}a de Telecomunicaci{\'o}n, Department of Signal Theory and Communications, University of Vigo, Vigo E-36310, Spain}
\affiliation{AtlanTTic Research Center, University of Vigo, Vigo E-36310, Spain}

\author{Kiyoshi Tamaki}
\affiliation{Faculty of Engineering, University of Toyama, Gofuku 3190, Toyama, 930-8555, Japan}

\author{Marcos Curty}
\affiliation{Vigo Quantum Communication Center, University of Vigo, Vigo E-36310, Spain}
\affiliation{Escuela de Ingenier{\'i}a de Telecomunicaci{\'o}n, Department of Signal Theory and Communications, University of Vigo, Vigo E-36310, Spain}
\affiliation{AtlanTTic Research Center, University of Vigo, Vigo E-36310, Spain}

\begin{abstract}
Quantum key distribution (QKD) over non-stationary channels, such as satellite links, is characterized by short, high-loss, and strongly fluctuating transmission windows that produce sparse detection events. In many QKD protocols, these data must be analyzed using non-IID statistical inequalities, yet existing methods either become loose for small sample sizes or heavily rely on fine-tuning, yielding poor estimates when the optical channel is mis-modeled. Using mixture martingale techniques, we introduce tight concentration inequalities that retain sharpness when the channel model is accurate, while remaining robust to model mismatch. In realistic simulations of satellite QKD with fluctuating loss, the resulting bounds can reduce the minimum required number of transmitted signals by more than $70\%$.
\end{abstract}

\maketitle

\section{Introduction}

Quantum key distribution (QKD) enables key expansion secure against computationally unbounded adversaries~\cite{portmann2022security}.
In practical implementations, however, security  is certified from finite data, so the tightness of the statistical bounds used in security proofs directly affects the achievable secret key rate~\cite{lim2021smallblock,mannalath2025sharp}.
Sharper finite-key inference also reduces the number of rounds required to establish a key, and hence the resources needed for deployment.
This issue is especially acute for satellite-based QKD~\cite{liao2017satellite,yin2017satellite_ground_entanglement,chen2021integrated,li2025microsatellite}, one of the clearest near-term routes to long-distance QKD, where each useful pass provides a short, high-loss, non-stationary channel~\cite{sidhu2022finite,islam2024finite_resource,trinh2022statistical,mannalath2026geo}.

In finite-key security proofs against coherent attacks, the number of single-photon detections and phase errors determining the key-length  must be bounded in an adaptive setting~\cite{boileau2005three_state,tamaki2009bennett,curraslorenzo2021tf,mizutani2026passive}.
In this setting, the conditional probability of a later detection event may depend on the record of earlier events. This essentially reduces to solving the following problem. Let $\{\rvxi_i\}_{i=1}^N$ be a sequence of $[0,1]$-valued random variables adapted to a filtration $\{\mathcal F_i\}_{i=0}^N$, where $\mathcal F_{i-1}$ is a $\sigma$-algebra modeling the information available before round $i$. Define $\rvLambda_N=\sum^N_{i=1}\rvxi_i$ and $\rvmu_N=\sum^N_{i=1}\rvp_i$, where $\rvp_i:=\mathbb E[\rvxi_i\vert\mathcal F_{i-1}]$ is the conditional expectation of $\rvxi_i$ given $\mathcal F_{i-1}$. The goal is to quantify the closeness between $\rvLambda_N$ and $\rvmu_N$ via statistical bounds.
More precisely, for a target failure probability $\eps$, we seek one-sided bounds of the form
\begin{equation}
    \Prob\!\left[\rvmu_N>U^\mu(\rvLambda_N)\right]\le\eps,
    \qquad
    \Prob\!\left[\rvLambda_N<L^\Lambda(\rvmu_N)\right]\le\eps,
\end{equation}
for certain functions $U^\mu(\cdot)$ and $L^\Lambda(\cdot)$.
The complementary lower-$\rvmu_N$ and upper-$\rvLambda_N$ bounds are obtained by applying the same construction to the complementary sequence $\{1-\rvxi_i\}_{i=1}^N$.

The standard way to handle this problem is a martingale construction followed by the application of Markov's inequality.
Azuma--Hoeffding bounds~\cite{azuma1967weighted} are a widely used example of this approach, but they are very loose when the number of counts is small~\cite{curraslorenzo2021tf}.
Kato bounds~\cite{kato2020concentration}, on the other hand, reduce this sparse-count penalty by incorporating a guess of $\rvLambda_N$ or $\rvmu_N$ based on prior knowledge.
These latter bounds have enabled tighter finite-key analyses in several recent QKD security proofs~\cite{curraslorenzo2021tf,curraslorenzo2021randomsampling,mizutani2026passive,zapatero2024passive,curraslorenzo2025securityframework,navarrete2026numerical}.
Their sharpness, however, is local to the chosen guess, and they are not very robust to incorrect guesses.

This guess sensitivity is a severe limitation for non-stationary channels like satellite QKD links, whose optical loss varies from pass to pass rather than remaining fixed~\cite{liorni2019satellite,scarfe2025adaptive}.
Indeed, LEO satellite experiments and related studies report link-budget shifts from satellite elevation, diffraction, atmospheric propagation and tracking, with magnitudes up to tens of decibels~\cite{lu2022micius,ecker2021strategies,takenaka2017satellite}.
That is, the transmittance of a satellite quantum channel is naturally modeled as a random variable~\cite{trinh2022statistical}, and a guess selected according to the expected channel can be displaced by several decibels in an actual implementation.
As a consequence, satellite-based QKD needs sharp non-IID bounds robust in non-stationary scenarios.

In this Article, we solve this problem by introducing guess-robust concentration bounds valid for non-IID data.
Remarkably, they retain the sparse-count sharpness of Kato bounds without committing to a single channel behavior.
In a nutshell, the key idea goes as follows.
We first construct a martingale, in the same spirit as Kato bounds, dependent on a parameter $\lambda$ that can be finely tuned for the expected channel behavior.
Importantly, the resulting bound is much less sensitive than Kato's to channel deviations.
We then implement a mixing technique that averages over $\lambda$ with a suitable weight function~\cite{robbins1969confidence,robbins1970statistical,lai1976confidence,kaufmann2021mixture}, trading local sharpness for robustness across channel mismatch.
As a result, the closed-form confidence bounds significantly outperform Kato's in non-stationary channels, and can be inserted directly into finite-key QKD security analyses. To demonstrate the tightness of our approach, we benchmark the results against the binomial tail, which provides a natural limit on how sharp a general non-IID bound can be. In a decoy-state BB84 simulation over a satellite link, our method remains robust under significant channel mismatch and substantially reduces the minimum number of rounds for positive-key extraction.

\section{Results}

\subsection{Martingale concentration bounds}

We use the notation in the Introduction, with bold symbols denoting random variables.
First, we construct a martingale with a tunable parameter $\lambda\geq0$,
\begin{equation}
    \rvM_n(\lambda):=
    \prod_{i=1}^n
    \frac{e^{-\lambda \rvxi_i}}
    {\E[e^{-\lambda \rvxi_i}\mid\mathcal F_{i-1}]} .
    \label{eq:fixed_lambda_martingale_main}
\end{equation}

To relate the denominator product to $\rvmu_N$, we first use $e^{-\lambda z}\le 1-(1-e^{-\lambda})z$ for $z\in[0,1]$, which gives $\E[e^{-\lambda\rvxi_i}\mid\mathcal F_{i-1}]\le 1-(1-e^{-\lambda})\rvp_i$.
We then apply Jensen's inequality~\cite{jensen1906fonctions} to the concave function $t\mapsto\ln[1-(1-e^{-\lambda})t]$, obtaining $\sum_{i=1}^N\ln[1-(1-e^{-\lambda})\rvp_i]\le N\ln[1-(1-e^{-\lambda})\rvmu_N/N]$.
It follows that

\begin{equation}
    \begin{aligned}
        \rvM_N(\lambda)
         & =
        e^{-\lambda \rvLambda_N}
        \prod_{i=1}^N
        \E[e^{-\lambda\rvxi_i}\mid\mathcal F_{i-1}]^{-1}
        \\
         & \ge
        e^{-\lambda \rvLambda_N}
        \prod_{i=1}^N
        \left[1-(1-e^{-\lambda})\rvp_i\right]^{-1}
        \\
         & \ge
        e^{-\lambda \rvLambda_N}
        \left[1-(1-e^{-\lambda})\frac{\rvmu_N}{N}\right]^{-N}
        =:
        K_\lambda(\rvLambda_N,\rvmu_N).
    \end{aligned}
\end{equation}
Consequently, $\{\rvM_N(\lambda)\ge1/\eps\}\supseteq\{K_\lambda(\rvLambda_N,\rvmu_N)\ge1/\eps\}$, and since $\rvM_N(\lambda)$ is nonnegative with mean one, Markov's inequality~\cite{steinshakarchi2005realanalysis} gives
\begin{equation}
    \Prob\!\left[K_\lambda(\rvLambda_N,\rvmu_N)\ge1/\eps\right]\le\eps.
\end{equation}
For a fixed $\lambda$, the tail inequality above can be inverted to obtain statistical bounds relating $\rvLambda_N$ and $\rvmu_N$, which we refer to as fixed-$\lambda$ martingale bounds.
These analytical bounds are given in Supplemental Material IV\nocite{apostol1974mathematicalanalysis,kechris1995classical,klenke2008probability,mastin_jaillet2013logquadratic,tonelli1909integrazione}, where a guess on $\rvLambda_N$ or $\rvmu_N$ based on prior knowledge is converted into a fixed value of $\lambda$.

To finely tune $\lambda$ for an expected channel behavior, consider a realization $(Nx,Ny)$ of $(\rvLambda_N,\rvmu_N)$, with $0<x\le y<1$.
For that realization, the choice $\lambda^*(x,y)=\ln\!\left[y(1-x)/x(1-y)\right]$ maximizes $K_\lambda(Nx,Ny)$, giving
\begin{equation}
    K_{\lambda^*(x,y)}(Nx,Ny)=e^{N D(x\|y)}.
    \label{eq:fixed_lambda_kl_match_main}
\end{equation}
Here, $D(x\|y):=x\ln(x/y)+(1-x)\ln[(1-x)/(1-y)]$ is the Bernoulli Kullback--Leibler (KL) divergence.
Indeed, the martingale in Eq.~\eqref{eq:fixed_lambda_martingale_main} has been chosen specifically to recover this KL exponent at the optimizer, which also
occurs in the Chernoff bound~\cite{chernoff1952measure} for a binomial distribution with mean fraction $y$ and observed fraction $x$.
Thus, at the optimal point, the fixed-$\lambda$ martingale recovers the IID Chernoff exponent while remaining valid in the non-IID setting.

In what follows, we remove the dependence on $\lambda$ by averaging this parameter according to a prior distribution $\pi$ of admissible values. This results in a so-called mixture martingale~\cite{robbins1969confidence,robbins1970statistical,lai1976confidence,kaufmann2021mixture}, of which the fixed-$\lambda$ martingale represents a particular case,  $\pi=\delta_{\lambda}$. Taking instead $\pi$ to be the uniform distribution on $[A,B]$, we get
\begin{equation}
    \widetilde{\rvM}_n:=
    \frac{1}{B-A}\int_A^B \rvM_n(\lambda)\,d\lambda .
\end{equation}
One can readily show that this is again a nonnegative martingale with mean one.
Repeating the inequality scaling argument above, we have
\begin{equation}
    \widetilde{\rvM}_N
    \ge
    \frac{1}{B-A}\int_A^B K_\lambda(\rvLambda_N,\rvmu_N)\,d\lambda
    =:
    K_{A,B}(\rvLambda_N,\rvmu_N),
    \label{eq:mix-kernel-main}
\end{equation}
and after applying Markov's inequality, we obtain
\begin{equation}
    \Prob\!\left[K_{A,B}(\rvLambda_N,\rvmu_N)\ge 1/\eps\right]\le\eps.
\end{equation}
The interval-mixture bounds are then calculated by inverting $K_{A,B}$ in the variable to be bounded, leading to the following proposition.

\begin{proposition}\label{thm:main-prl}
    \begingroup
    \small
    \setlength{\jot}{2pt}
    Set $\lambda\ge0$, $0\le A<B<\infty$, and $\eps\in(0,1)$.
    Let $\{\rvxi_i\}_{i=1}^N$ be a stochastic process adapted to the filtration $\{\mathcal F_i\}_{i=0}^N$, with $\rvxi_i\in[0,1]$ and $\rvp_i=\E[\rvxi_i\vert\mathcal F_{i-1}]$.
    We write $\rvLambda_N=\sum_{i=1}^N\rvxi_i$ and $\rvmu_N=\sum_{i=1}^N\rvp_i$, and denote by $K_I$ either the fixed-$\lambda$ function $K_\lambda$ or the mixture-martingale function $K_{A,B}$.
    For $s,u\in[0,N]$, define
    \begin{equation}
    \begin{aligned}
        U_{I}^{\mu}(s)
         & :=
        \inf\!\Bigl[\{v\in[0,N]:K_I(s,v)>1/\eps\}\cup\{N\}\Bigr],
        \\
        L_{I}^{\Lambda}(u)
         & :=
        \sup\!\Bigl[\{r\in[0,N]:K_I(r,u)>1/\eps\}\cup\{0\}\Bigr].
    \end{aligned}
    \label{eq:main_inversion_maps}
    \end{equation}
    Then,
    \begin{equation}
        \Prob\!\left[\rvmu_N>U_{I}^{\mu}(\rvLambda_N)\right]\le \eps,\qquad \Prob\!\left[\rvLambda_N<L_{I}^{\Lambda}(\rvmu_N)\right]\le \eps.
        \label{eq:mu-upper-main}
    \end{equation}
    The complementary lower-$\rvmu_N$ and upper-$\rvLambda_N$ bounds are obtained by applying the same inversions to the complementary variables $\bar{\rvxi}_i:=1-\rvxi_i$, $\bar{\rvp}_i:=1-\rvp_i$, $\bar{\rvLambda}_N:=N-\rvLambda_N$, and $\bar{\rvmu}_N:=N-\rvmu_N$.
    \endgroup
\end{proposition}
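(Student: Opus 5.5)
The plan is to reduce both inequalities in Eq.~\eqref{eq:mu-upper-main} to the tail bound $\Prob[K_I(\rvLambda_N,\rvmu_N)\ge 1/\eps]\le\eps$, already established above for both $K_\lambda$ and $K_{A,B}$. It then suffices to show the two event inclusions
\[
\{\rvmu_N>U_I^\mu(\rvLambda_N)\}\subseteq\{K_I(\rvLambda_N,\rvmu_N)\ge1/\eps\},\qquad
\{\rvLambda_N<L_I^\Lambda(\rvmu_N)\}\subseteq\{K_I(\rvLambda_N,\rvmu_N)\ge1/\eps\}.
\]
These inclusions follow from monotonicity of $K_I$ in each argument. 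First I record that for fixed $\lambda\ge0$ the function $K_\lambda(r,v)=e^{-\lambda r}[1-(1-e^{-\lambda})v/N]^{-N}$ is nonincreasing in $r$ and nondecreasing in $v$ on $[0,N]^2$. The base $1-(1-e^{-\lambda})v/N\ge e^{-\lambda}>0$ decreases in $v$, so the expression is finite and positive. Integrating over $\lambda\in[A,B]$ preserves both monotonicities, so they also hold for $K_{A,B}$. I also check that $K_I$ is continuous (for $K_{A,B}$, by dominated convergence on the compact $\lambda$-interval), which I may need at the boundaries.

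\emph{Upper bound on $\rvmu_N$.} Fix a realization with $\rvLambda_N=s$ and $\rvmu_N=v$, and suppose $v>U_I^\mu(s)$. Since $v\le N$, we have $U_I^\mu(s)<N$. So the infimum is not attained only through the added point $\{N\}$, and the set $S=\{v'\in[0,N]:K_I(s,v')>1/\eps\}$ contains points $v'$ arbitrarily close to $U_I^\mu(s)$ from above. In particular it contains some $v'\le v$. Monotonicity in the second argument then gives $K_I(s,v)\ge K_I(s,v')>1/\eps$, which proves the first inclusion.

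\emph{Lower bound on $\rvLambda_N$.} The argument is symmetric. If $r=\rvLambda_N<L_I^\Lambda(u)$ with $u=\rvmu_N$, then $L_I^\Lambda(u)>0$, so there exists $r'>r$ with $K_I(r',u)>1/\eps$. Monotonicity in the first argument gives $K_I(r,u)\ge K_I(r',u)>1/\eps$.

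Measurability of the events follows either from the monotone/continuous structure or by phrasing everything as inclusion in the measurable event $\{K_I\ge1/\eps\}$ and using outer probability. Combining each inclusion with the Markov tail bound yields Eq.~\eqref{eq:mu-upper-main}.

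\emph{Complementary bounds.} The complementary statements follow by applying the same argument to $\bar{\rvxi}_i=1-\rvxi_i\in[0,1]$. This sequence is adapted to the same filtration, with $\E[\bar{\rvxi}_i\mid\mathcal F_{i-1}]=1-\rvp_i$, so all hypotheses hold verbatim for $(\bar{\rvLambda}_N,\bar{\rvmu}_N)$.

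The main obstacle is small: handling the inf/sup carefully when they are not attained and when the fallback values $N$ or $0$ are taken. The key point is that strict inequality $\rvmu_N>U$ forces the infimum to come from genuine elements of $S$. This is also why the monotonicity of $K_I$ must be verified for the mixture $K_{A,B}$ and not only for $K_\lambda$.
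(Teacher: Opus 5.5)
Your proposal is correct and follows essentially the same route as the paper. The paper re-derives the tail bound $\Prob[K_\pi(\rvLambda_N,\rvmu_N)\ge 1/\eps]\le\eps$ in Supplemental Material I (Proposition S1, with Corollary S1 handling the complemented variables), where you instead cite it from the main text; it then proves the monotonicity of $K_\pi$ in each argument (Lemma S1) and obtains exactly your two event inclusions from the inf/sup definitions (Proposition S2). Your handling of the non-attained infimum/supremum and of the fallback values $N$ and $0$ is somewhat more explicit than the paper's one-line appeal to ``the definition of the infimum boundary.''
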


A proof of Proposition~\ref{thm:main-prl} is given in Supplemental Material I.

Now, we relate $K_{A,B}$ to the Bernoulli KL divergence.
For this purpose, suppose that the optimizer lies inside the mixture interval, i.e., $\lambda^*(x,y)\in[A,B]$.
Then, it is shown in Supplemental Material III that
\begin{equation}
    K_{A,B}(Nx,Ny)
    \ge
    e^{N D(x\|y)}
    \left[1+\frac{N(B-A)^2}{2\pi}\right]^{-1/2}.
    \label{eq:mix-kl-factor-main}
\end{equation}
Equation~\eqref{eq:mix-kl-factor-main} shows that, independently of the realization, the departure from the optimal fixed-$\lambda$ performance induced by the mixing is only affected by a small penalty dependent on the interval width.

In Supplemental Material III, we combine Eq.~\eqref{eq:mix-kl-factor-main} with rational KL approximations~\cite{mannalath2025sharp,topsoe2007some} to obtain analytical mixture-martingale bounds. Precisely, while Ref.~\cite{mannalath2025sharp} derives bounds for sums of independent Bernoulli trials, here, the same relaxation yields a martingale counterpart valid in the non-IID setting, with the same algebraic form up to a rescaling of the failure probability induced by the mixing.

A natural question is how much tighter a general non-IID bound can be compared to our results. In this regard, as discussed in Supplemental Material V, the binomial-tail inversions underlying the Clopper-Pearson bounds~\cite{clopper1934confidence} are uniformly tighter than any one-sided bound valid for general non-IID processes (whether the latter follows the martingale-Markov route or not).
Therefore, we can use the former for benchmarking purposes.
To make this benchmark concrete, we define the lower binomial tail $B_N(s,u):=\Prob[\operatorname{Bin}(N,u/N)\le s]$.
The Clopper--Pearson benchmark is obtained from Eq.~\eqref{eq:main_inversion_maps} by replacing the inversion condition $K_I(s,u)>1/\eps$ with $B_N(s,u)<\eps$.
This benchmark also gives the right scale for comparison with Eq.~\eqref{eq:mix-kl-factor-main}.
Writing $s=\lfloor Nx\rfloor$ and $u=Ny$ with fixed $0<x<y<1$, the sharp binomial tail asymptotics give~\cite{bahadurrao1960deviations,arratia1989tutorial,zhu2022nearly}

\begin{equation}
\label{eq:binomial-asymptotics}
    B_N(s,u)
    =
    e^{-N D(x\|y)+O(1)}N^{-1/2}.
\end{equation}
Comparing Eqs.~\eqref{eq:mix-kl-factor-main} and \eqref{eq:binomial-asymptotics}, we find that the mixture martingale has the same leading $N D(x\|y)$ exponent whenever $\lambda^*(x,y)\in[A,B]$, with only an $O(1)$ difference inside the exponent.

\begin{figure}[!htbp]
    \centering
    \includegraphics[width=0.95\columnwidth]{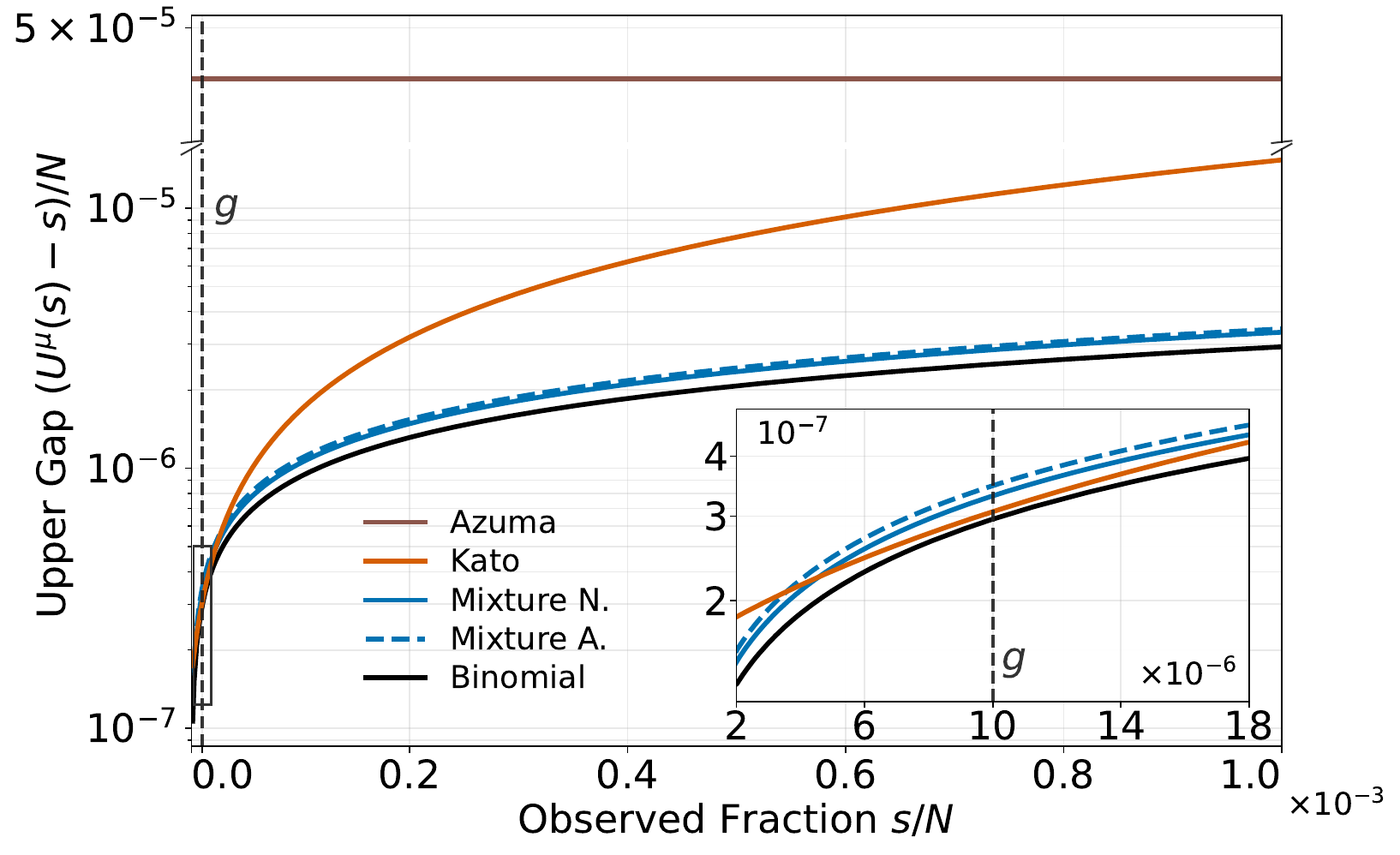}
    \caption{Comparison of upper bounds on $\mu$ for $N=10^{10}$ and $\eps=10^{-20}$, plotted as the normalized gap $[U^\mu(s)-s]/N$ versus $x=s/N$.
    The dashed vertical line marks the normalized guess $g_\Lambda/N=10^{-5}$ and the inset magnifies the neighborhood of $g_\Lambda/N$.
    Blue curves use the numerical mixture-martingale bound of Proposition~\ref{thm:main-prl} with $[A,B]=[0,10]$ (solid) and its corresponding analytical relaxation (dashed).
    The remaining curves are the Clopper--Pearson binomial benchmark (black), the Kato bound~\cite{kato2020concentration} (orange), and the Azuma--Hoeffding bound~\cite{azuma1967weighted} (brown).}
    \label{fig:mu-gap}
\end{figure}
This closeness is also visible in Fig.~\ref{fig:mu-gap}, which compares the mixture-martingale upper bound of Proposition~\ref{thm:main-prl} with $[A,B]=[0,10]$, the Kato upper bound~\cite{kato2020concentration} with normalized guess $g_\Lambda/N=10^{-5}$, the Azuma--Hoeffding upper bound~\cite{azuma1967weighted}, and the Clopper--Pearson binomial benchmark for $\rvmu_N$ with $N=10^{10}$ and $\eps=10^{-20}$.
As expected, the interval-mixture bound remains very close to the binomial benchmark across the plotted range.
The Kato bound is sharp near its guess fraction, $g_\Lambda/N$, but deteriorates as the observed fraction $x$ moves away from that calibration. Azuma--Hoeffding, which is independent of $x$, remains visibly looser throughout this sparse-count window. The formulas for the Kato and Azuma--Hoeffding bounds are given in Supplemental Material VI and VII, respectively.
At this scale, the fixed-$\lambda$ bound is barely distinguishable from Kato's.
See Fig.~S1 in the Supplemental Material for a separate comparison of Kato and the fixed-$\lambda$ bound using a different guess value and $x$-axis range.

\subsection{Simulations}
Let us now compare the performance of the bounds for a decoy-state BB84 protocol over a satellite link.
In each protocol round, Alice, at the satellite, sends Bob a phase-randomized weak coherent pulse chosen from three intensities and two bases.
Bob, at the optical ground station, measures the incoming pulse with an asymmetric passive BB84 receiver~\cite{mizutani2026passive}, splitting it between $Z$- and $X$-basis detection arms.
He records $Z$-arm clicks, $X$-arm clicks, and cross-clicks, with same-arm double clicks assigned uniformly at random to single-click events.
After sifting, the parties apply the non-IID bounds to the intensity-resolved $Z$-basis counts, $X$-basis error counts, and cross-click counts to estimate the single-photon quantities that determine the secret key length~\cite{koashi2009simple}. 

The channel model is characterized by the total system transmittance $\eta_{\rm sys}$, the per-detector dark-count probability $d$ and the misalignment contribution $\delta_{\mathrm{mis}}$.
We use an overall secrecy parameter $\eps_{\mathrm{sec}}\approx10^{-10}$, with each concentration failure probability set to $\eps=10^{-20}/144\approx7\times10^{-23}$.
The exact protocol steps and simulation details are provided in Supplemental Material VIII.

Figures~\ref{fig:adaptive-key} and~\ref{fig:geo-adaptive-key} plot the finite-key rate versus the number of protocol rounds for the different concentration bounds.
In particular, in Fig.~\ref{fig:adaptive-key} we set the expected channel loss to $L_{\rm exp}=30\,\mathrm{dB}$ and the observed channel loss to $L_{\rm obs}=35$ (Fig.~\ref{fig:adaptive-key}a) and $40\,\mathrm{dB}$ (Fig.~\ref{fig:adaptive-key}b), and take $d=10^{-7}$ and $\delta_{\mathrm{mis}}=0.03$, corresponding to typical channel parameters for a LEO satellite link~\cite{liao2017satellite,sidhu2022finite,avesani2021daylight}.
In Fig.~\ref{fig:geo-adaptive-key}, we set $L_{\rm exp}=50\,\mathrm{dB}$ and $L_{\rm obs}=55$ (Fig.~\ref{fig:geo-adaptive-key}a) and $60\,\mathrm{dB}$ (Fig.~\ref{fig:geo-adaptive-key}b), and take $d=10^{-9}$ and $\delta_{\mathrm{mis}}=0.01$, emulating a GEO-QKD link with higher losses and detector-limited noise~\cite{mannalath2026geo}.
In both plots, the decoy intensities and their probabilities, the basis probabilities, and the passive beam-splitter ratio are optimized to maximize the secret key rate using the \emph{expected-loss} channel model.
These parameters are then used to evaluate the finite-key rate with the \emph{observed-loss} channel model.
For the two guess-dependent bound families, Kato~\cite{kato2020concentration} and the fixed-$\lambda$ bounds, the guess parameters are constructed from the expected-loss channel model.
For the mixture-martingale bounds, we use the same integration range $[A,B]=[0,10]$ throughout the plots, 
so that the relevant optimizer $\lambda^*(x,y)$ for the analytical guarantee remains inside the interval in all the simulations.
We note that the range $[A,B]$ could be tuned separately for each concentration bound to improve performance.
In practice, channel variation can guide this choice: for each bound, one could evaluate $\lambda^*(x,y)$ at the channel extremes and set $A$ and $B$ to the smallest and largest values obtained.
However, using a common range as we do here keeps the comparison simple and demonstrates robustness without fine-tuning.

As a result, we find that the mixture-martingale bounds reduce the threshold number of rounds by about $13\%$ and $5\%$ in the LEO and GEO benchmarks, respectively, for a $5\,\mathrm{dB}$ loss mismatch.
For a $10\,\mathrm{dB}$ loss mismatch, the corresponding reductions are about $71\%$ and $44\%$, respectively.
The mixture curves track the binomial bounds closely for higher mismatches, while the fixed-$\lambda$ bounds give a smaller but visible improvement over Kato.

In Supplemental Material IX we show additional key-rate plots in which the protocol parameters are optimized for the observed ---rather than the expected--- channel loss. In doing so, we isolate the raw impact of the concentration bounds, unaffected by any incidental mismatch from their optimal parameters (of course, the guesses for the bounds are still calculated assuming the expected channel model).
Under loss mismatch, the mixture-martingale bounds show the same behavior: they improve over Kato and remain closer to the binomial benchmark.
Additionally, we plot the case where there is no loss mismatch.
In this scenario, since Kato and the fixed-$\lambda$ martingale bounds are correctly calibrated, they outperform the mixture-martingale bound.
Notably though, by taking a union bound of the fixed-$\lambda$ and mixture-martingale bounds one can still improve over Kato in all cases.

\begin{figure}[!t]
    \centering
    \columnratepanel{(a)}{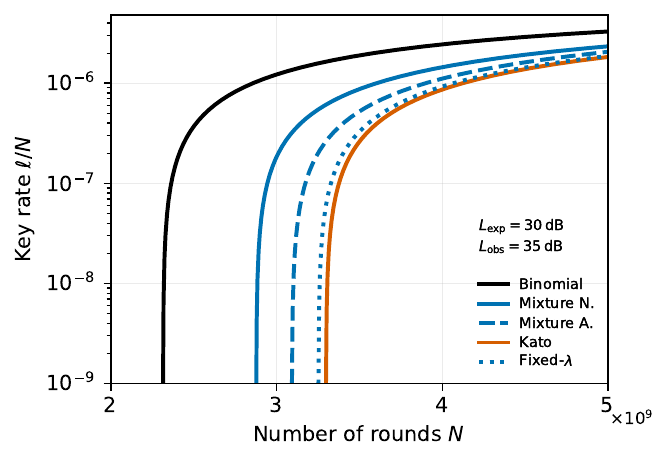}\\[0.6em]
    \columnratepanel{(b)}{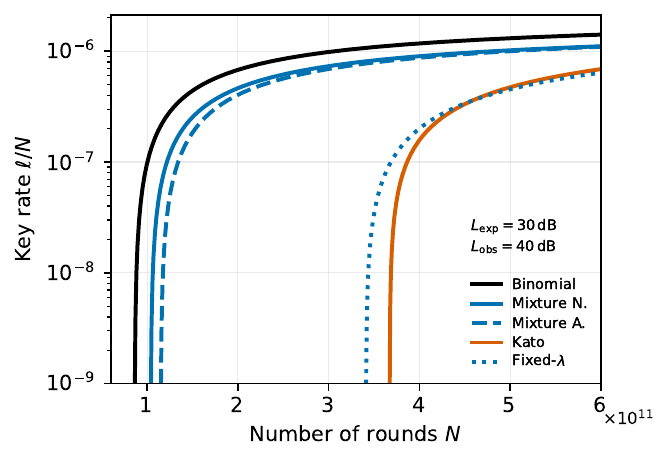}
    \caption{Finite secret key rate versus number of rounds for a decoy-state BB84 protocol with an asymmetric passive receiver over a LEO-satellite link with channel parameters $d=10^{-7}$ and $\delta_{\mathrm{mis}}=0.03$.
    The expected loss is $L_{\rm exp}=30\,\mathrm{dB}$, while panels (a) and (b) use $L_{\rm obs}=35$ and $40\,\mathrm{dB}$, respectively.
    Blue curves use the numerical mixture-martingale bounds of Proposition~\ref{thm:main-prl} with $[A,B]=[0,10]$ (solid), their corresponding analytical relaxations (dashed), and the fixed-$\lambda$ martingale bound of Proposition~\ref{thm:main-prl} (dotted).
    The remaining curves use the Clopper--Pearson binomial benchmark (black) and Kato's bound~\cite{kato2020concentration} (orange).
    Azuma--Hoeffding~\cite{azuma1967weighted} curves are omitted since their positive-key thresholds are orders of magnitude larger than Kato's.}
    \label{fig:adaptive-key}
\end{figure}

\begin{figure}[!t]
    \centering
    \columnratepanel{(a)}{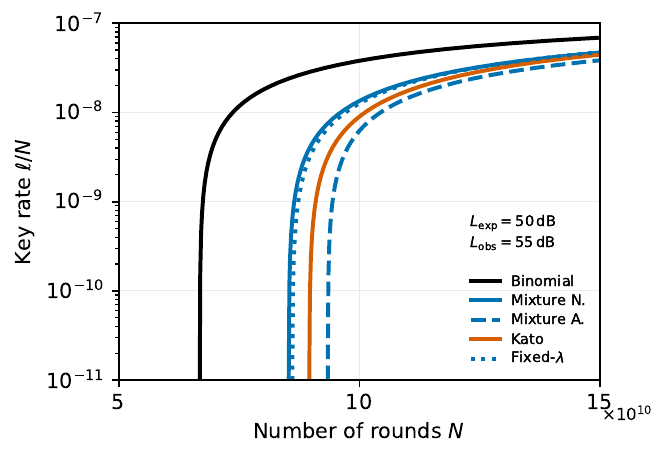}\\[0.6em]
    \columnratepanel{(b)}{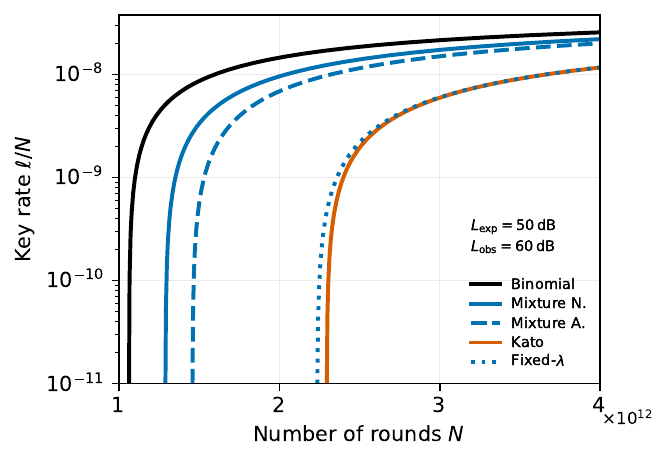}
    \caption{Finite secret key rate versus number of rounds for an asymmetric passive decoy-state BB84 protocol over a GEO-satellite link with channel parameters $d=10^{-9}$ and $\delta_{\mathrm{mis}}=0.01$.
    The expected loss is $L_{\rm exp}=50\,\mathrm{dB}$, while panels (a) and (b) use $L_{\rm obs}=55$ and $60\,\mathrm{dB}$, respectively.
    All color and line-style conventions match Fig.~\ref{fig:adaptive-key}.}
    \label{fig:geo-adaptive-key}
\end{figure}

\section{Discussion}
Finite statistics are a practical bottleneck for QKD.
This is especially acute in satellite-based QKD, where each overpass gives only a limited acquisition window, and the realized loss can deviate substantially from the design value.
In this regime, sharp statistics can save emitted pulses, acquisition time and link resources, while a bound calibrated to the wrong channel can turn usable data into no secret key rate.

In this work, we have developed mixture martingale bounds for adaptive processes, which provide robustness against non-stationary channels and are straightforward to insert into existing finite-key analyses.
Compared with Kato bounds~\cite{kato2020concentration}, the construction retains sparse-count sharpness near the expected channel behavior, and significantly reduces the penalty when the channel conditions change, enabling key extraction with substantially lower data acquisition.
The comparison with the binomial benchmark highlights the tightness of the developed bounds.
The interval width, in turn, constitutes a practical design knob: a wider interval protects against larger channel variation, whereas a narrower one gives tighter bounds for a well-characterized link.
This makes statistical robustness part of the same design space as loss, background, and acquisition time.

These results are also relevant to other quantum communication protocols involving non-i.i.d. processes with sparse data and uncertain operating conditions.

\section*{Data availability}

The data and code supporting the findings of this study are available from the authors upon reasonable request.

\section*{Acknowledgements}
    We acknowledge valuable discussions with Go Kato. This work was supported by the European Union’s Horizon Europe Framework Programme under the Marie Sklodowska-Curie Grant No. 101072637 (Project QSI), the Galician Regional Government (consolidation of research units: atlanTTic), the Spanish Ministry of Science, Innovation and Universities (MICIU), the Fondo Europeo de Desarrollo Regional (FEDER) through the grant No. PID2024-162270OB-I00, the “Hub Nacional de Excelencia en Comunicaciones Cuanticas” funded by the Spanish Ministry for Digital Transformation and the Public Service and the European Union NextGenerationEU, the European Union’s Horizon Europe Framework Programme under the project ``Quantum Secure Networks Partnership" (QSNP, grant agreement No 101114043), the European Union under the Project IberianQCI (grant 101249593), and the “Programa de Cooperación Interreg VI-A España–Portugal” (POCTEP) 2021–2027 through the project QUANTUM IBER\_IA. K.T. acknowledges support from JSPS KAKENHI Grant Numbers 23K25793 and 23H01096.
An OpenAI GPT model assisted with the literature survey and parts of the technical proofs; the authors take full responsibility for the final manuscript.

\section*{Author contributions}

K.T. and V.M. conceptualized the study.
M.C. established the overall structure and methodological framework of the study.
V.M. and V.Z. performed the formal analysis.
V.M. developed the simulations and wrote the original draft.
M.C., V.Z., and K.T. supervised the work.
M.C. and V.Z. made substantial revisions to the manuscript.
All authors reviewed and edited the manuscript.

\section*{Competing interests}

The authors declare no competing interests.

\bibliography{refs}

\clearpage

\begin{widetext}

\begin{center}
{\large\bfseries Supplemental Material for ``Quantum Key Distribution Beyond Stationary Channels''}\\[0.6em]
Vaisakh Mannalath,$^{1,2,3}$ V{\'\i}ctor Zapatero,$^{1,2,3}$ Kiyoshi Tamaki,$^{4}$ and Marcos Curty$^{1,2,3}$\\[0.5em]
{\small
$^{1}$Vigo Quantum Communication Center, University of Vigo, Vigo E-36310, Spain\\
$^{2}$Escuela de Ingenier{\'i}a de Telecomunicaci{\'o}n, Department of Signal\\ Theory and Communications, University of Vigo, Vigo E-36310, Spain\\
$^{3}$AtlanTTic Research Center, University of Vigo, Vigo E-36310, Spain\\
$^{4}$Faculty of Engineering, University of Toyama, Gofuku 3190, Toyama, 930-8555, Japan
}
\end{center}
\end{widetext}
\appendix
\setcounter{section}{0}
\setcounter{subsection}{0}
\setcounter{equation}{0}
\setcounter{figure}{0}
\setcounter{table}{0}
\setcounter{theorem}{0}
\setcounter{proposition}{0}
\setcounter{lemma}{0}
\setcounter{corollary}{0}
\setcounter{suppsection}{0}
\renewcommand{\theequation}{S\arabic{equation}}
\renewcommand{\thefigure}{S\arabic{figure}}
\renewcommand{\thetable}{S\arabic{table}}
\renewcommand{\thetheorem}{S\arabic{theorem}}
\renewcommand{\theproposition}{S\arabic{proposition}}
\renewcommand{\thelemma}{S\arabic{lemma}}
\renewcommand{\thecorollary}{S\arabic{corollary}}
\renewcommand{\theHequation}{supp.\arabic{equation}}
\renewcommand{\theHfigure}{supp.\arabic{figure}}
\renewcommand{\theHtable}{supp.\arabic{table}}
\renewcommand{\theHtheorem}{supp.\arabic{theorem}}
\renewcommand{\theHproposition}{supp.\arabic{proposition}}
\renewcommand{\theHlemma}{supp.\arabic{lemma}}
\renewcommand{\theHcorollary}{supp.\arabic{corollary}}

\suppsection{Mixture-martingale bounds}\label{app:mixture_concentration_bounds}

This section derives the mixture-martingale bounds from Proposition~1 of the Article.

Let $(\Omega,\mathcal F,\mathbb P)$ be a probability space with filtration $(\mathcal F_n)_{n=0}^N$.
For $i=1,\dots,N$, take $\rvxi_i:\Omega\to[0,1]$ to be $\mathcal F_i$-measurable, and define
\begin{equation}
    \rvp_i:=\mathbb E[\rvxi_i\mid \mathcal F_{i-1}],
    \qquad
    \rvLambda_n:=\sum_{i=1}^n \rvxi_i,
    \qquad
    \rvmu_n:=\sum_{i=1}^n \rvp_i.
\end{equation}
Fix $\varepsilon\in(0,1)$ and a probability measure $\pi$ on $[0,\infty)$.

\begin{proposition}\label{thm:mix_joint_tail}
    Define
    \begin{equation}\label{eq:mix_gpi_def}
        \begin{aligned}
            K_\lambda(s,u)
             &:=
            e^{-\lambda s}
            \left[1-(1-e^{-\lambda})\frac{u}{N}\right]^{-N}
            \qquad \lambda\ge0,
            \\
            K_\pi(s,u)
             &:=
            \int_0^\infty K_\lambda(s,u)\,\pi(d\lambda),
            \qquad s,u\in[0,N].
        \end{aligned}
    \end{equation}
    Then, we have that
    \begin{equation}\label{eq:mix_joint_tail}
        \mathbb P\!\left[K_\pi(\rvLambda_N,\rvmu_N)\ge 1/\varepsilon\right]\le \varepsilon.
    \end{equation}
\end{proposition}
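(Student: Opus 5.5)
The plan is to build the mixture martingale $\widetilde{\rvM}_N:=\int_0^\infty \rvM_N(\lambda)\,\pi(d\lambda)$ from the fixed-$\lambda$ martingale of Eq.~\eqref{eq:fixed_lambda_martingale_main}, show that it has expectation one, bound it pointwise from below by $K_\pi(\rvLambda_N,\rvmu_N)$, and conclude with Markov's inequality.

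\emph{Fixed-$\lambda$ step.} For each fixed $\lambda\ge0$, the denominator $\E[e^{-\lambda\rvxi_i}\mid\mathcal F_{i-1}]$ lies in $[e^{-\lambda},1]$ because $\rvxi_i\in[0,1]$. Hence every factor of $\rvM_n(\lambda)$ is a well-defined nonnegative bounded random variable, and it has conditional mean one given $\mathcal F_{i-1}$. By the tower property, $\E[\rvM_N(\lambda)]=1$.

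\emph{Mixing step.} To integrate over $\lambda$, I first need joint measurability of $(\lambda,\omega)\mapsto\rvM_N(\lambda,\omega)$. I would obtain it by choosing a regular version of the conditional expectations that is continuous in $\lambda$. One way is to use regular conditional distributions of $\rvxi_i$ given $\mathcal F_{i-1}$, which exist since $[0,1]$ is Polish. Then $\E[e^{-\lambda\rvxi_i}\mid\mathcal F_{i-1}](\omega)=\int e^{-\lambda z}\,\kappa_i(\omega,dz)$, which is continuous in $\lambda$ and measurable in $\omega$, so the product is Carathéodory and hence jointly measurable. Tonelli's theorem then gives $\E[\widetilde{\rvM}_N]=\int_0^\infty\E[\rvM_N(\lambda)]\,\pi(d\lambda)=1$. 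This works for any probability measure $\pi$, with no integrability assumption, since everything is nonnegative.

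\emph{Pointwise lower bound.} I repeat the chain from the main text for each $\lambda$ and $\omega$. By convexity, $e^{-\lambda z}\le1-(1-e^{-\lambda})z$ for $z\in[0,1]$, and taking conditional expectations gives $\E[e^{-\lambda\rvxi_i}\mid\mathcal F_{i-1}]\le1-(1-e^{-\lambda})\rvp_i$. The right-hand side is at least $e^{-\lambda}>0$, so I can invert. Jensen's inequality applied to the concave map $t\mapsto\ln[1-(1-e^{-\lambda})t]$ on $[0,1]$ then gives $\prod_i[1-(1-e^{-\lambda})\rvp_i]\le[1-(1-e^{-\lambda})\rvmu_N/N]^N$. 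Together these yield $\rvM_N(\lambda)\ge K_\lambda(\rvLambda_N,\rvmu_N)$ almost surely. Because the regular version makes this hold on a single null-set complement simultaneously for all $\lambda$, I can integrate against $\pi$ to get $\widetilde{\rvM}_N\ge K_\pi(\rvLambda_N,\rvmu_N)$. Finally, Markov's inequality gives $\Prob[K_\pi\ge1/\eps]\le\Prob[\widetilde{\rvM}_N\ge1/\eps]\le\eps\,\E[\widetilde{\rvM}_N]=\eps$.

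\emph{Main obstacle.} The main obstacle is the measure-theoretic step of mixing. It requires choosing versions of the conditional expectations so that the inequality $\rvM_N(\lambda)\ge K_\lambda$ holds almost surely uniformly in $\lambda$ and so that Tonelli applies. The regular conditional distribution plus continuity-in-$\lambda$ argument handles this. Alternatively, one can restrict to rational $\lambda$ and extend by continuity, since $K_\lambda$ is continuous in $\lambda$.
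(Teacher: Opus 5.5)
Your proposal is correct and follows the paper's proof essentially step by step: the fixed-$\lambda$ martingale has mean one by the tower property, Tonelli gives $\E[\widetilde{\rvM}_N]=1$, the chord inequality plus Jensen give the pointwise bound $\rvM_N(\lambda)\ge K_\lambda(\rvLambda_N,\rvmu_N)$, integrating against $\pi$ gives $\widetilde{\rvM}_N\ge K_\pi(\rvLambda_N,\rvmu_N)$, and Markov's inequality finishes. The only difference is that you handle the choice of versions and joint measurability explicitly, via regular conditional distributions that make $\E[e^{-\lambda\rvxi_i}\mid\mathcal F_{i-1}]$ continuous in $\lambda$; the paper (Supplemental Material II) simply asserts measurability of $\rvM_N(\lambda)(\omega)$ as a finite product of measurable factors, so your version is the more careful treatment of the same argument.
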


\begin{proof}
    Fix $\lambda\ge 0$ and define
    \begin{equation}\label{eq:mix_martingale_def}
        \rvM_n(\lambda)
        :=
        \prod_{i=1}^n
        \frac{e^{-\lambda \rvxi_i}}
        {\mathbb E[e^{-\lambda \rvxi_i}\mid \mathcal F_{i-1}]},
        \qquad n=0,1,\dots,N,
    \end{equation}
    with the empty product interpreted as $1$.
    For $x\in[0,1]$, convexity of $x\mapsto e^{-\lambda x}$ gives
    \begin{equation}
        \label{eq:mix_binary_exp}
        0< e^{-\lambda x}\le 1-(1-e^{-\lambda})x.
    \end{equation}
    Applying \eqref{eq:mix_binary_exp} to the random variable $\rvxi_i$ and then taking conditional expectation yields
    \begin{equation}
        \label{eq:mix_cond_mgf_identity}
        0<\mathbb E\!\left[e^{-\lambda \rvxi_i}\mid \mathcal F_{i-1}\right]
        \le
        1-(1-e^{-\lambda})\rvp_i.
    \end{equation}
    For $n\ge 1$, the definition \eqref{eq:mix_martingale_def} gives
    \begin{equation}\label{eq:mix_martingale_recursion}
        \rvM_n(\lambda)
        =
        \rvM_{n-1}(\lambda)
        \frac{e^{-\lambda \rvxi_n}}{\mathbb E[e^{-\lambda \rvxi_n}\mid \mathcal F_{n-1}]}.
    \end{equation}
    Taking conditional expectation given $\mathcal F_{n-1}$, and using the fact that $\rvM_{n-1}(\lambda)$ is $\mathcal F_{n-1}$-measurable, we obtain
    \begin{equation}\label{eq:mix_martingale_ce}
        \mathbb E\!\left[\rvM_n(\lambda)\mid \mathcal F_{n-1}\right]
        =
        \rvM_{n-1}(\lambda).
    \end{equation}
Hence $\{\rvM_n(\lambda)\}_{n=0}^N$ is a nonnegative martingale, and repeated use of \eqref{eq:mix_martingale_ce} gives    \begin{equation}
        \label{eq:mix_fixed_mean_one}
        \begin{aligned}
            \mathbb E[\rvM_N(\lambda)]
             & =\mathbb E\!\left[\mathbb E\!\left[\rvM_N(\lambda)\mid\mathcal F_{N-1}\right]\right] \\
             & =\mathbb E[\rvM_{N-1}(\lambda)]
            =\cdots
            =\mathbb E[\rvM_0(\lambda)]
            =1.
        \end{aligned}
    \end{equation}
    Now define the terminal mixture statistic, following the standard mixture-martingale construction~\cite{kaufmann2021mixture}, by
    \begin{equation}
        \widetilde{\rvM}_N:=\int_0^\infty \rvM_N(\lambda)\,\pi(d\lambda).
    \end{equation}
    Since $\rvM_N(\lambda)$ is nonnegative, Tonelli's theorem~\cite{tonelli1909integrazione} gives
    \begin{equation}\label{eq:mix_tonelli_mean}
        \begin{aligned}
            \mathbb E[\widetilde{\rvM}_N]
             & =
            \int_0^\infty \mathbb E\!\left[\rvM_N(\lambda)\right]\pi(d\lambda)
            \\
             & =
            \int_0^\infty 1\,\pi(d\lambda)
            =1.
        \end{aligned}
    \end{equation}
    Applying Markov's inequality to $\widetilde{\rvM}_N$, we obtain
    \begin{equation}\label{eq:mix_markov}
        \mathbb P\!\left[\widetilde{\rvM}_N\ge 1/\varepsilon\right]
        \le
        \varepsilon\mathbb E[\widetilde{\rvM}_N]
        =
        \varepsilon.
    \end{equation}
    Next define, for fixed $\lambda$,
    \begin{equation}\label{eq:f_second_derivative}
        f_\lambda(x):=\ln\!\bigl[1-(1-e^{-\lambda})x\bigr],
        \qquad x\in[0,1].
    \end{equation}
    We have that
    \begin{equation}\label{eq:jensen_generic}
        f_\lambda''(x)=-\frac{(1-e^{-\lambda})^2}{\bigl[1-(1-e^{-\lambda})x\bigr]^2}\le 0,
    \end{equation}
    so $f_\lambda (x)$ is concave on $x\in[0,1]$.
    Therefore, by Jensen's inequality~\cite{jensen1906fonctions}, for every $x_1,\dots,x_N\in[0,1]$,
    \begin{equation}
        \frac{1}{N}\sum_{i=1}^N f_\lambda(x_i)\le f_\lambda\!\left(\frac{1}{N}\sum_{i=1}^N x_i\right).
    \end{equation}
    For each $\omega\in\Omega$, apply this inequality to the numbers $\rvp_i(\omega)\in[0,1]$.
    This yields
    \begin{equation}\label{eq:mix_jensen}
        \sum_{i=1}^N \ln\!\bigl[1-(1-e^{-\lambda})\rvp_i\bigr]
        \le
        N\ln\!\left[1-(1-e^{-\lambda})\frac{\rvmu_N}{N}\right].
    \end{equation}
    Exponentiating \eqref{eq:mix_jensen}, taking reciprocals, and multiplying by $e^{-\lambda \rvLambda_N}$  give
    \begin{equation}
        \begin{aligned}
            \prod_{i=1}^N
            \frac{e^{-\lambda \rvxi_i}}
            {1-(1-e^{-\lambda})\rvp_i}
             & \ge
            e^{-\lambda \rvLambda_N} 
            \left[
                1-(1-e^{-\lambda})\frac{\rvmu_N}{N}
            \right]^{-N}.
        \end{aligned}
    \end{equation}
    By \eqref{eq:mix_martingale_def} and \eqref{eq:mix_cond_mgf_identity}, we have that
    \begin{equation}\label{eq:mix_pointwise_bound}
        \rvM_N(\lambda)
        \ge
        \prod_{i=1}^N\frac{e^{-\lambda \rvxi_i}}{1-(1-e^{-\lambda})\rvp_i}.
    \end{equation}
    Combining the last two equations and integrating over $\pi(d\lambda)$ gives
    \begin{equation}
        \widetilde{\rvM}_N
        \ge
        K_\pi(\rvLambda_N,\rvmu_N).
    \end{equation}
    Therefore, it is guaranteed that
    \begin{equation}
        \left\{\widetilde{\rvM}_N\ge 1/\varepsilon\right\}
        \supseteq
        \left\{K_\pi(\rvLambda_N,\rvmu_N)\ge 1/\varepsilon\right\},
    \end{equation}
    and taking probabilities and using \eqref{eq:mix_markov} gives
    \[
        \mathbb P\!\left[K_\pi(\rvLambda_N,\rvmu_N)\ge 1/\varepsilon\right]\le \varepsilon.
    \]
\end{proof}


We note that the full process $\widetilde{\rvM}_n:=\int_0^\infty \rvM_n(\lambda)\,\pi(d\lambda)$, $n=0,1,\dots,N$, is also a martingale~\cite{kaufmann2021mixture}.

\begin{corollary}\label{cor:mix_complement}
    \(\mathbb P\!\left[K_\pi(N-\rvLambda_N,N-\rvmu_N)\ge 1/\varepsilon\right]\le \varepsilon.\)
\end{corollary}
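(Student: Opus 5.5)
The plan is to apply Proposition~\ref{thm:mix_joint_tail} directly to the complementary process, so no new martingale argument is needed. Define $\bar{\rvxi}_i:=1-\rvxi_i$ for $i=1,\dots,N$, with the same filtration $\{\mathcal F_n\}_{n=0}^N$. The first step is to check that this process satisfies the hypotheses of Proposition~\ref{thm:mix_joint_tail}:
\begin{itemize}
\item each $\bar{\rvxi}_i$ is $\mathcal F_i$-measurable, being an affine function of the $\mathcal F_i$-measurable $\rvxi_i$;
\item each $\bar{\rvxi}_i$ takes values in $[0,1]$ because $\rvxi_i$ does.
\end{itemize}

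The second step identifies the associated conditional means and sums. By linearity of conditional expectation,
\begin{equation}
\bar{\rvp}_i:=\mathbb E[\bar{\rvxi}_i\mid\mathcal F_{i-1}]=1-\rvp_i
\end{equation}
almost surely. Summing over $i$ then gives
\begin{equation}
\bar{\rvLambda}_N=\sum_{i=1}^N\bar{\rvxi}_i=N-\rvLambda_N,\qquad \bar{\rvmu}_N=\sum_{i=1}^N\bar{\rvp}_i=N-\rvmu_N.
\end{equation}

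The third step invokes Proposition~\ref{thm:mix_joint_tail} for the process $\{\bar{\rvxi}_i\}$ with the same fixed $\pi$ and $\varepsilon$. The function $K_\pi$ in \eqref{eq:mix_gpi_def} depends only on $N$ and $\pi$, not on the underlying process, so the proposition yields
\begin{equation}
\mathbb P\!\left[K_\pi(\bar{\rvLambda}_N,\bar{\rvmu}_N)\ge 1/\varepsilon\right]\le\varepsilon .
\end{equation}
Substituting the identities for $\bar{\rvLambda}_N$ and $\bar{\rvmu}_N$ gives exactly the claimed inequality.

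There is no real obstacle here. The only point needing care is that $\bar{\rvp}_i=1-\rvp_i$ holds almost surely rather than pointwise, since conditional expectations are defined only up to null sets. This does not affect the probability bound, because the two events differ by at most a null set. One can also simply choose the version $\bar{\rvp}_i:=1-\rvp_i$ from the outset, which is a valid conditional expectation and lies in $[0,1]$ as required by the Jensen step in the proof of Proposition~\ref{thm:mix_joint_tail}.
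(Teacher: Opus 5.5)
Your proposal is correct and is essentially the paper's argument: apply Proposition~\ref{thm:mix_joint_tail} to the complemented process $\bar{\rvxi}_i=1-\rvxi_i$, whose conditional means are $1-\rvp_i$, so that the sums become $N-\rvLambda_N$ and $N-\rvmu_N$. Your additional checks (measurability, and choosing the version $1-\rvp_i\in[0,1]$ of the conditional expectation so the pointwise Jensen step applies) are harmless refinements that the paper leaves implicit.
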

\begin{proof}
    The complemented variables $\bar{\rvxi}_i:=1-\rvxi_i$ satisfy $\bar{\rvxi}_i\in[0,1]$ and $\mathbb E[\bar{\rvxi}_i\mid\mathcal F_{i-1}]=1-\rvp_i$, so Proposition~\ref{thm:mix_joint_tail} applies to $(\bar{\rvxi}_i,1-\rvp_i)$ with $\bar\rvLambda_N=N-\rvLambda_N$ and $\bar\rvmu_N=N-\rvmu_N$.
\end{proof}

Proposition~\ref{thm:mix_joint_tail} and Corollary~\ref{cor:mix_complement} control the events where $K_\pi$ or its complemented version exceeds $1/\varepsilon$.
To turn these into bounds for $\rvmu_N$ and $\rvLambda_N$, we need the following monotonicity properties of $K_\pi$.
\begin{lemma}\label{lem:mix_gpi_monotonicity}
    For every $s,u\in[0,N]$, the following monotonicity properties hold:
    \begin{align}
        s &\longmapsto K_\pi(s,u)
          && \text{ is nonincreasing on }[0,N], \\
        s &\longmapsto K_\pi(N-s,N-u)
          && \text{ is nondecreasing on }[0,N], \\
        u &\longmapsto K_\pi(s,u)
          && \text{ is nondecreasing on }[0,N], \\
        u &\longmapsto K_\pi(N-s,N-u)
          && \text{ is nonincreasing on }[0,N]. 
    \end{align}
\end{lemma}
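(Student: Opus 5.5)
The plan is to prove each monotonicity property pointwise in $\lambda$ for the integrand $K_\lambda$, and then integrate against the nonnegative measure $\pi$. If $\lambda\mapsto K_\lambda(s,u)$ is monotone in $s$ (or in $u$) in the same direction for every fixed $\lambda\ge0$, then for $s\le s'$ we have $K_\lambda(s,u)\ge K_\lambda(s',u)$ for all $\lambda$. Integrating over $\pi(d\lambda)$ preserves the inequality, since integrals of nonnegative measurable functions are monotone; this holds even when the integrals are $+\infty$. So the whole argument reduces to elementary facts about the factors of $K_\lambda$.

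\textbf{Monotonicity in $s$.} Write $c_\lambda:=1-e^{-\lambda}\in[0,1)$. Then
\[
K_\lambda(s,u)=e^{-\lambda s}\bigl[1-c_\lambda u/N\bigr]^{-N}.
\]
The bracket lies in $(0,1]$ for $u\in[0,N]$, because $1-c_\lambda u/N\ge 1-c_\lambda=e^{-\lambda}>0$. Hence $K_\lambda$ is finite and positive. The factor $e^{-\lambda s}$ is nonincreasing in $s$ since $\lambda\ge0$, and the bracket does not depend on $s$. This gives the first property. For the second, $s\mapsto K_\lambda(N-s,N-u)=e^{-\lambda(N-s)}[\cdots]^{-N}$ is nondecreasing in $s$.

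\textbf{Monotonicity in $u$.} The map $u\mapsto 1-c_\lambda u/N$ is nonincreasing and positive on $[0,N]$. Raising it to the power $-N<0$ therefore gives a nondecreasing function of $u$, and multiplying by the positive factor $e^{-\lambda s}$ keeps it nondecreasing. This is the third property. Substituting $u\to N-u$ reverses the direction, which gives the fourth.

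\textbf{Main obstacle.} The only delicate point is measure-theoretic: $K_\pi$ may take the value $+\infty$ for some measures $\pi$, and $\lambda\mapsto K_\lambda(s,u)$ must be measurable. Measurability is immediate because the map is continuous in $\lambda$. Monotonicity of the Lebesgue integral for $[0,\infty]$-valued integrands covers the infinite case, so no extra assumptions on $\pi$ are needed.
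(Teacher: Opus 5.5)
Your proposal is correct and follows essentially the same route as the paper: you establish monotonicity of each factor of $K_\lambda$ for fixed $\lambda$ (the paper via derivatives, you via positivity of the bracket and the negative exponent $-N$), integrate against $\pi$, and obtain the complemented properties from the substitution $(s,u)\mapsto(N-s,N-u)$. Your remarks on measurability in $\lambda$ and on $K_\pi$ possibly taking the value $+\infty$ add a bit of rigor that the paper leaves implicit.
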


\begin{proof}
    We have
    \begin{equation}
        \frac{\partial}{\partial s}e^{-\lambda s}=-\lambda e^{-\lambda s}\le 0,
    \end{equation}
    so the first factor of $K_\pi(s,u)$ is nonincreasing in $s$.
    Also,
    \begin{equation}
        \begin{aligned}
            &\frac{\partial}{\partial u}
            \left[
                1-(1-e^{-\lambda})\frac{u}{N}
            \right]^{-N}
            \\
             &\quad =
            (1-e^{-\lambda})
            \left[
                1-(1-e^{-\lambda})\frac{u}{N}
            \right]^{-N-1}
            \\
             &\quad \ge 0,
        \end{aligned}
    \end{equation}
so the second factor in $K_\lambda(s,u)$ is nondecreasing in $u$.
    Integrating with respect to the probability measure $\pi(d\lambda)$ preserves these monotonicity properties, which proves the first and third claims.
    Replacing $(s,u)$ by $(N-s,N-u)$ then gives the second and fourth claims.
\end{proof}

With these monotonicity properties, each $K_\pi>1/\varepsilon$ event can be inverted in the variable to be bounded.
This gives the following numerical bounds.

\begin{proposition}\label{thm:mix_general_implicit_bounds}
    For $s,u\in[0,N]$, define the boundaries
    \begin{align}
        U_{\pi}^{\mu}(s)
         &:= \inf\!\Bigl[
        \{u\in[0,N]:
        \nonumber \\
         &\qquad
        K_\pi(s,u)> 1/\varepsilon\}
        \cup\{N\}\Bigr]
        \label{eq:mix_general_uimp_mu}     \\
        L_{\pi}^{\mu}(s)
         &:= \sup\!\Bigl[
        \{u\in[0,N]:
        \nonumber \\
         &\qquad
        K_\pi(N-s,N-u)> 1/\varepsilon\}
        \cup\{0\}\Bigr]
        \label{eq:mix_general_limp_mu}     \\
        U_{\pi}^{\Lambda}(u)
         &:= \inf\!\Bigl[
        \{s\in[0,N]:
        \nonumber \\
         &\qquad
        K_\pi(N-s,N-u)> 1/\varepsilon\}
        \cup\{N\}\Bigr]
        \label{eq:mix_general_uimp_lambda}     \\
        L_{\pi}^{\Lambda}(u)
         &:= \sup\!\Bigl[
        \{s\in[0,N]:
        \nonumber \\
         &\qquad
        K_\pi(s,u)> 1/\varepsilon\}
        \cup\{0\}\Bigr]
        \label{eq:mix_general_limp_lambda}
    \end{align}
    Then
    \begin{align}
        \mathbb P\!\left[\rvmu_N> U_{\pi}^{\mu}(\rvLambda_N)\right]     & \le \varepsilon,
        \label{eq:mix_general_mu_upper}                                                    \\
        \mathbb P\!\left[\rvmu_N< L_{\pi}^{\mu}(\rvLambda_N)\right]     & \le \varepsilon,
        \label{eq:mix_general_mu_lower}                                                    \\
        \mathbb P\!\left[\rvLambda_N> U_{\pi}^{\Lambda}(\rvmu_N)\right] & \le \varepsilon.
        \label{eq:mix_general_lambda_upper}
                                     \\
        \mathbb P\!\left[\rvLambda_N< L_{\pi}^{\Lambda}(\rvmu_N)\right] & \le \varepsilon,
        \label{eq:mix_general_lambda_lower}                   
    \end{align}
\end{proposition}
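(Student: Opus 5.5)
The plan is to reduce each of the four tail bounds to a containment of events: the bad event sits inside $\{K_\pi(\rvLambda_N,\rvmu_N)\ge1/\varepsilon\}$ or inside its complemented version. Proposition~\ref{thm:mix_joint_tail} and Corollary~\ref{cor:mix_complement} then finish the job. The only extra ingredients are the monotonicity properties of Lemma~\ref{lem:mix_gpi_monotonicity} and the definitions of the boundary maps as infima and suprema.

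I would carry out \eqref{eq:mix_general_mu_upper} in detail.

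\begin{itemize}
\item Fix $\omega$ with $\rvmu_N(\omega)>U_\pi^\mu(\rvLambda_N(\omega))$, and write $s=\rvLambda_N(\omega)$, $u=\rvmu_N(\omega)\le N$.
\item Since $u\le N$, the strict inequality forces $U_\pi^\mu(s)<N$. So the infimum is not attained by the appended element $\{N\}$, and the set $S_s:=\{v\in[0,N]:K_\pi(s,v)>1/\varepsilon\}$ is nonempty with $\inf S_s<u$.
\item By definition of the infimum there is $v\in S_s$ with $v<u$.
\item Lemma~\ref{lem:mix_gpi_monotonicity} says $u\mapsto K_\pi(s,u)$ is nondecreasing, so $K_\pi(s,u)\ge K_\pi(s,v)>1/\varepsilon$.
\end{itemize}

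This gives $\{\rvmu_N>U_\pi^\mu(\rvLambda_N)\}\subseteq\{K_\pi(\rvLambda_N,\rvmu_N)>1/\varepsilon\}$, and Proposition~\ref{thm:mix_joint_tail} bounds the latter by $\varepsilon$.

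The other three cases have the same structure.

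\begin{itemize}
\item For \eqref{eq:mix_general_lambda_lower}: if $\rvLambda_N<L_\pi^\Lambda(\rvmu_N)$, then since $\rvLambda_N\ge0$ the supremum exceeds $0$. There is then some $r>\rvLambda_N$ with $K_\pi(r,\rvmu_N)>1/\varepsilon$. Because $s\mapsto K_\pi(s,u)$ is nonincreasing, $K_\pi(\rvLambda_N,\rvmu_N)>1/\varepsilon$.
\item For \eqref{eq:mix_general_mu_lower}: pick $u'>\rvmu_N$ in the defining set. Then use that $u\mapsto K_\pi(N-s,N-u)$ is nonincreasing, and conclude with Corollary~\ref{cor:mix_complement}.
\item For \eqref{eq:mix_general_lambda_upper}: pick $s'<\rvLambda_N$ in the defining set. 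Then use that $s\mapsto K_\pi(N-s,N-u)$ is nondecreasing, and again conclude with Corollary~\ref{cor:mix_complement}.
\end{itemize}

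There is no real obstacle. The one point needing care is the bookkeeping with the appended elements $\{N\}$ and $\{0\}$. One must check that the strict inequality in the bad event excludes the degenerate value. This holds because $\rvLambda_N,\rvmu_N\in[0,N]$: if the boundary equals $N$ (respectively $0$), the bad event is empty.

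A second detail concerns attainment. The infimum or supremum need not belong to the set, but strictness of the bad event still yields an element of the set strictly on the correct side. Since monotonicity is only non-strict, I would argue through such an element rather than through the boundary itself.

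Measurability of the bad events is not needed beyond containment in the measurable events from Proposition~\ref{thm:mix_joint_tail}. If one prefers, $U_\pi^\mu$ is monotone in $s$, hence Borel, and similarly for the other maps.

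Finally, Proposition~\ref{thm:main-prl} follows as the special cases $\pi=\delta_\lambda$ and $\pi$ uniform on $[A,B]$.
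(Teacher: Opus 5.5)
Your proposal is correct and follows the paper's proof: each bad event is placed inside $\{K_\pi(\rvLambda_N,\rvmu_N)\ge 1/\varepsilon\}$ or its complemented version using the inf/sup definitions and the monotonicity of Lemma~\ref{lem:mix_gpi_monotonicity}. Proposition~\ref{thm:mix_joint_tail} or Corollary~\ref{cor:mix_complement} then gives the bound. You additionally spell out what the paper compresses into ``by the definition of the infimum boundary'': the appended endpoints $\{N\},\{0\}$ cannot matter, and a witness element lying strictly on the correct side of the observed value lets you get by with non-strict monotonicity. Your inclusion into the strict event $\{K_\pi>1/\varepsilon\}$ is harmless, since it is contained in the paper's $\{K_\pi\ge1/\varepsilon\}$.
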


\begin{proof}
    The proof has two steps for each claim.
    First, the required event inclusion follows by combining the relevant boundary definition with a monotonicity property in Lemma~\ref{lem:mix_gpi_monotonicity}.
    Then, the claim follows either from Proposition~\ref{thm:mix_joint_tail} or from Corollary~\ref{cor:mix_complement}.
    The details are given next.

    \medskip\noindent
    \eqref{eq:mix_general_mu_upper}: The map $u\mapsto K_\pi(s,u)$ is nondecreasing.
    By the definition of the infimum boundary $U_{\pi}^{\mu}$, we have
    \[
        \{\rvmu_N>U_{\pi}^{\mu}(\rvLambda_N)\}
        \subseteq
        \{K_\pi(\rvLambda_N,\rvmu_N)\ge 1/\varepsilon\}.
    \]
    Proposition~\ref{thm:mix_joint_tail} gives the claim.

    \medskip\noindent
    \eqref{eq:mix_general_mu_lower}: The map $u\mapsto K_\pi(N-s,N-u)$ is nonincreasing.
    By the definition of the supremum boundary $L_{\pi}^{\mu}$, we have
    \[
        \{\rvmu_N<L_{\pi}^{\mu}(\rvLambda_N)\}
        \subseteq
        \{K_\pi(N-\rvLambda_N,N-\rvmu_N)\ge 1/\varepsilon\}.
    \]
    Corollary~\ref{cor:mix_complement} gives the claim.

    \medskip\noindent
    \eqref{eq:mix_general_lambda_upper}: The map $s\mapsto K_\pi(N-s,N-u)$ is nondecreasing.
    By the definition of the infimum boundary $U_{\pi}^{\Lambda}$, we have
    \[
        \{\rvLambda_N>U_{\pi}^{\Lambda}(\rvmu_N)\}
        \subseteq
        \{K_\pi(N-\rvLambda_N,N-\rvmu_N)\ge 1/\varepsilon\}.
    \]
    Corollary~\ref{cor:mix_complement} gives the claim.

    \medskip\noindent
    \eqref{eq:mix_general_lambda_lower}: The map $s\mapsto K_\pi(s,u)$ is nonincreasing.
    By the definition of the supremum boundary $L_{\pi}^{\Lambda}$, we have
    \[
        \{\rvLambda_N<L_{\pi}^{\Lambda}(\rvmu_N)\}
        \subseteq
        \{K_\pi(\rvLambda_N,\rvmu_N)\ge 1/\varepsilon\}.
    \]
    Proposition~\ref{thm:mix_joint_tail} gives the claim.

\end{proof}

\begingroup
\suppsection{Tonelli's theorem}
\label{app:measure_theory}

Let $(X,\mathcal A,\mu)$ and $(Y,\mathcal B,\nu)$ be $\sigma$-finite measure spaces.
If $h:X\times Y\to[0,\infty]$ is $(\mathcal A\otimes\mathcal B)$-measurable, Tonelli's theorem gives~\cite{tonelli1909integrazione}
\begin{equation}
    \begin{aligned}
         &
        \int_X
        \left[\int_Y h(x,y)\,\nu(dy)\right]\mu(dx)
        \\
         &\quad =
        \int_Y
        \left[\int_X h(x,y)\,\mu(dx)\right]\nu(dy).
    \end{aligned}
\end{equation}
Both sides are well-defined in $[0,\infty]$ and equal the integral of $h$ over the product space.
In Proposition~\ref{thm:mix_joint_tail}, we use this theorem with
\begin{equation}
    \begin{aligned}
        (X,\mathcal A,\mu)
         & :=([0,\infty),\mathfrak B([0,\infty)),\pi),
        \\
        (Y,\mathcal B,\nu)
         & :=(\Omega,\mathcal F,\mathbb P).
    \end{aligned}
\end{equation}
Here $\mathfrak B([0,\infty))$ denotes the Borel $\sigma$-algebra on $[0,\infty)$~\cite{kechris1995classical}.
Both are probability spaces and hence $\sigma$-finite~\cite{klenke2008probability}.
The integrand used in Eq.~\eqref{eq:mix_tonelli_mean} is
\begin{equation}
    h(\lambda,\omega):=\rvM_N(\lambda)(\omega).
\end{equation}
It is nonnegative by construction and measurable as a finite product of measurable factors.
Thus the Tonelli step used in Eq.~\eqref{eq:mix_tonelli_mean} is justified.

\endgroup

\suppsection{Uniform mixture bounds}\label{app:uniform_interval_bounds}

We now specialize the mixture bounds to a uniform distribution on a finite interval.
Fix $0\le A<B$ and write
\begin{equation}
    w:=B-A.
\end{equation}
Define the  uniform distribution by
\begin{equation}
    \pi_{A,B}(d\lambda)=\frac{1}{w}\mathbf 1_{[A,B]}(\lambda)\,d\lambda,
\end{equation}
The corresponding bounds from Proposition~\ref{thm:mix_general_implicit_bounds} are denoted by
$
    U_{A,B}^{\mu},
    L_{A,B}^{\mu},
    L_{A,B}^{\Lambda} \text{ and }
    U_{A,B}^{\Lambda}.
$
For $s,u\in[0,N]$, write
\begin{equation}
    x:=\frac{s}{N},
    \qquad
    y:=\frac{u}{N}.
\end{equation}
Set
\begin{equation}\label{eq:mix_interval_phi}
    \begin{aligned}
        \phi_{x,y}(\lambda)
         & :=
        -\lambda s
        -N\ln\!\left[1-(1-e^{-\lambda})\frac{u}{N}\right]
        \\
         & =
        -N x\lambda
        -N\ln\!\bigl[(1-y)+y e^{-\lambda}\bigr],
        \qquad \lambda\ge 0.
    \end{aligned}
\end{equation}
Then
\begin{equation}\label{eq:mix_interval_kernel}
    \begin{aligned}
        K_{A,B}(s,u)
        &:=
        \frac{1}{w}
        \int_A^B K_\lambda(s,u)\,d\lambda
        \\
        &=
        \frac{1}{w}
        \int_A^B e^{\phi_{x,y}(\lambda)}\,d\lambda.
    \end{aligned}
\end{equation}

We next lower-bound the interval integral in Eq.~\eqref{eq:mix_interval_kernel}, using the part near the maximizer of $\phi_{x,y}(\lambda)$.
The next two lemmas provide the maximizer and the local lower envelope needed for this estimate.
\begin{lemma}\label{lem:mix_phi_facts}
    Assume $s,u\in[0,N]$ and write $x:=s/N$ and $y:=u/N$.
    Assume $0<y<1$.
    Then the following hold:
    \begin{enumerate}
        \item The first and second derivatives of $\phi_{x,y}(\lambda)$ are
              \begin{align}
                  \phi_{x,y}'(\lambda)
                   & =
                  -N x + N\frac{y e^{-\lambda}}{(1-y)+y e^{-\lambda}},
                  \label{eq:mix_phi_prime} \\
                  \phi_{x,y}''(\lambda)
                   & =
                  -N\,\widehat p_\lambda(1-\widehat p_\lambda),
                  \qquad
                  \widehat p_\lambda:=\frac{y e^{-\lambda}}{(1-y)+y e^{-\lambda}}.
                  \label{eq:mix_phi_second}
              \end{align}
        \item The function $\phi_{x,y}(\lambda)$ is strictly concave on $[0,\infty)$.
        \item If $0<x\le y<1$, then the unique stationary point of $\phi_{x,y}(\lambda)$ on $\mathbb R$ is
              \begin{equation}\label{eq:mix_lambda_star}
                  \lambda^*(x,y):=
                  \ln\!\left[\frac{y(1-x)}{x(1-y)}\right],
              \end{equation}
              and this point satisfies $\lambda^*(x,y)\ge 0$.
              \item At that maximizer,
              \begin{equation}\label{eq:mix_phi_at_lambda_star}
                  \phi_{x,y}\!\left[\lambda^*(x,y)\right]=N\,D(x\|y),
              \end{equation}
              where
              \begin{equation}
                  D(x\|y):=
                  x\ln\!\left(\frac{x}{y}\right)+(1-x)\ln\!\left(\frac{1-x}{1-y}\right).
              \end{equation}
    \end{enumerate}
\end{lemma}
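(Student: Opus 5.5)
The plan is to verify the four items by direct calculus on the second form of $\phi_{x,y}$ in Eq.~\eqref{eq:mix_interval_phi}, namely $\phi_{x,y}(\lambda)=-Nx\lambda-N\ln[(1-y)+ye^{-\lambda}]$. Throughout, $0<y<1$ guarantees that $g(\lambda):=(1-y)+ye^{-\lambda}>0$ for all real $\lambda$, so $\phi_{x,y}$ is smooth on $\mathbb R$.

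\emph{Item 1.} Differentiate directly. Since $g'(\lambda)=-ye^{-\lambda}$, we get $\phi'_{x,y}(\lambda)=-Nx+N\,ye^{-\lambda}/g(\lambda)=-Nx+N\widehat p_\lambda$. Next, note that $\widehat p_\lambda'=-\widehat p_\lambda(1-\widehat p_\lambda)$; this follows from $d/d\lambda$ of $ye^{-\lambda}/g$ by the quotient rule, using $1-\widehat p_\lambda=(1-y)/g$. This gives Eq.~\eqref{eq:mix_phi_second}.

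\emph{Item 2.} Because $0<y<1$, we have $\widehat p_\lambda\in(0,1)$ for every $\lambda$. Hence $\phi''_{x,y}<0$ everywhere, which gives strict concavity on $\mathbb R$, and in particular on $[0,\infty)$.

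\emph{Item 3.} Set $\phi'=0$, i.e.\ $\widehat p_\lambda=x$, which reads $ye^{-\lambda}(1-x)=x(1-y)$. This solves to $e^{\lambda}=y(1-x)/[x(1-y)]$, well defined since $0<x<1$. Uniqueness follows because $\widehat p_\lambda$ is strictly decreasing, by item 1. Nonnegativity follows because $x\le y$ implies $y(1-x)\ge x(1-y)$, so the ratio is at least $1$. Strict concavity then makes this stationary point the global maximizer.

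\emph{Item 4.} Substitute $\lambda^*$. We have $e^{-\lambda^*}=x(1-y)/[y(1-x)]$, so $g(\lambda^*)=(1-y)+x(1-y)/(1-x)=(1-y)/(1-x)$. Therefore
\[
\phi_{x,y}(\lambda^*)=-Nx\ln\frac{y(1-x)}{x(1-y)}+N\ln\frac{1-x}{1-y}.
\]
Expanding the first logarithm and collecting terms gives $N\bigl[x\ln(x/y)+(1-x)\ln((1-x)/(1-y))\bigr]=ND(x\|y)$.

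There is no real obstacle here. The only care needed is tracking the sign conventions in item 4 and noting where the hypotheses $x>0$ and $y<1$ enter: they are what make $\lambda^*$ finite.
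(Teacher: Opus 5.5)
Your proposal is correct and takes the same approach as the paper: direct differentiation, $\widehat p_\lambda\in(0,1)$ for strict concavity, solving $\widehat p_\lambda=x$ for $\lambda^*$ with $x\le y$ giving $\lambda^*\ge 0$, and substituting $(1-y)+ye^{-\lambda^*}=(1-y)/(1-x)$ to obtain $N\,D(x\|y)$. Your identity $\widehat p_\lambda'=-\widehat p_\lambda(1-\widehat p_\lambda)$ and the uniqueness argument via strict monotonicity of $\widehat p_\lambda$ supply details the paper leaves implicit.
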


\begin{proof}
    Item 1 follows by direct differentiation.
    Since $0<y<1$, we have $\widehat p_\lambda\in(0,1)$ for every $\lambda\ge 0$, and therefore Item 2 follows from \eqref{eq:mix_phi_second}.
    For Item 3, solving $\phi_{x,y}'(\lambda)=0$ using \eqref{eq:mix_phi_prime} gives
    \begin{equation}
        x=\frac{y e^{-\lambda}}{(1-y)+y e^{-\lambda}},
    \end{equation}
    which rearranges to \eqref{eq:mix_lambda_star}; the condition $x\le y$ implies $\lambda^*(x,y)\ge 0$.
    Finally, substituting \eqref{eq:mix_lambda_star} into $\phi_{x,y}(\lambda)$ yields
    \begin{equation}
        \begin{aligned}
            e^{-\lambda^*(x,y)}
             & =
            \frac{x(1-y)}{y(1-x)},
            \\
            (1-y)+y e^{-\lambda^*(x,y)}
             & =
            \frac{1-y}{1-x},
        \end{aligned}
    \end{equation}
    and hence
    \begin{equation}
        \begin{aligned}
            \phi_{x,y}\!\left[\lambda^*(x,y)\right]
             & =
            Nx\ln\!\left[\frac{x(1-y)}{y(1-x)}\right]
            -N\ln\!\left[\frac{1-y}{1-x}\right]
            \\
             & =
            N\,D(x\|y),
        \end{aligned}
    \end{equation}
    which is Item 4.
\end{proof}

\begin{lemma}\label{lem:mix_curvature}
    Assume $0<y<1$.
    Then
    \begin{equation}\label{eq:mix_curvature_bound}
        -\phi_{x,y}''(\lambda)\le \frac{N}{4}
        \qquad\text{for every }\lambda\ge 0.
    \end{equation}
    Consequently, if $0<x\le y<1$ and $\lambda^*(x,y)\in[A,B]$, then
    \begin{equation}\label{eq:mix_quadratic_envelope}
        \begin{aligned}
            \phi_{x,y}(\lambda)
             & \ge
            \phi_{x,y}\!\left[\lambda^*(x,y)\right]
            -\frac{N}{8}
            \bigl[\lambda-\lambda^*(x,y)\bigr]^2
        \end{aligned}
    \end{equation}
for every $\lambda\in[A,B]$.
\end{lemma}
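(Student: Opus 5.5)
The curvature bound follows directly from the second-derivative formula in Lemma~\ref{lem:mix_phi_facts}, Item 1. There, $-\phi_{x,y}''(\lambda)=N\,\widehat p_\lambda(1-\widehat p_\lambda)$ with $\widehat p_\lambda\in(0,1)$ whenever $0<y<1$. The elementary inequality $t(1-t)\le 1/4$ for $t\in[0,1]$, which is just $(t-\tfrac12)^2\ge0$, then gives $-\phi_{x,y}''(\lambda)\le N/4$ for every $\lambda\ge0$. This settles Eq.~\eqref{eq:mix_curvature_bound} and uses nothing beyond Lemma~\ref{lem:mix_phi_facts}.

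For the quadratic envelope, assume $0<x\le y<1$. By Lemma~\ref{lem:mix_phi_facts}, Item 3, $\lambda^*:=\lambda^*(x,y)\ge0$ is the stationary point, so $\phi_{x,y}'(\lambda^*)=0$. The function $\phi_{x,y}$ is smooth on a neighborhood of $[0,\infty)$, since $(1-y)+ye^{-\lambda}>0$, so I would apply Taylor's theorem with integral remainder around $\lambda^*$. For any $\lambda\in[A,B]\subset[0,\infty)$,
\begin{equation}
\phi_{x,y}(\lambda)=\phi_{x,y}(\lambda^*)+\phi_{x,y}'(\lambda^*)(\lambda-\lambda^*)+\int_{\lambda^*}^{\lambda}(\lambda-t)\,\phi_{x,y}''(t)\,dt .
\end{equation}
The linear term vanishes. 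Because $[A,B]$ is an interval containing both $\lambda$ and $\lambda^*$, every $t$ between them lies in $[A,B]\subseteq[0,\infty)$, so the first part of the lemma applies and gives $\phi_{x,y}''(t)\ge -N/4$. The weight $(\lambda-t)$ has the same sign as the orientation of the integral. Hence $\int_{\lambda^*}^{\lambda}(\lambda-t)\phi''_{x,y}(t)\,dt\ge -\tfrac{N}{4}\int_{\lambda^*}^{\lambda}(\lambda-t)\,dt=-\tfrac N8(\lambda-\lambda^*)^2$, in both cases $\lambda\ge\lambda^*$ and $\lambda<\lambda^*$. This yields Eq.~\eqref{eq:mix_quadratic_envelope}. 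Alternatively, one can use the Lagrange form with an intermediate point $\xi$ between $\lambda$ and $\lambda^*$, which also lies in $[0,\infty)$.

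There is no real obstacle here. The only points needing care are that the intermediate points stay in the domain $[0,\infty)$ where the curvature bound was established, which is guaranteed by $\lambda^*\ge0$ and $A\ge0$, and the sign bookkeeping when $\lambda<\lambda^*$. The hypothesis $\lambda^*\in[A,B]$ is not actually needed for the pointwise inequality, since $\lambda^*\ge 0$ suffices. It is stated because the envelope will later be integrated over $[A,B]$ around an interior maximizer, which is how the Gaussian-integral factor in Eq.~\eqref{eq:mix-kl-factor-main} arises.
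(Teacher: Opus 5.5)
Your proof is correct and is essentially the paper's own argument: you bound $\widehat p_\lambda(1-\widehat p_\lambda)\le 1/4$ to get the curvature bound, then apply Taylor's theorem with integral remainder about the stationary point $\lambda^*$, where $\phi_{x,y}'(\lambda^*)=0$ and $\phi_{x,y}''\ge -N/4$ give the $-\tfrac{N}{8}(\lambda-\lambda^*)^2$ envelope. The paper writes the remainder as $(\lambda-\lambda^*)^2\int_0^1(1-t)\,\phi_{x,y}''[\lambda^*+t(\lambda-\lambda^*)]\,dt$, which is your form after a change of variables, and your side remark is also accurate: $\lambda^*\in[A,B]$ is not needed for the pointwise inequality and only matters in the subsequent Gaussian-integral estimate.
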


\begin{proof}
    By \eqref{eq:mix_phi_second},
    \begin{equation}
        -\phi_{x,y}''(\lambda)=N\,\widehat p_\lambda(1-\widehat p_\lambda).
    \end{equation}
    Since $\widehat p_\lambda\in[0,1]$ and $\widehat p_\lambda(1-\widehat p_\lambda)\le 1/4$, \eqref{eq:mix_curvature_bound} follows immediately.

    Now assume $0<x\le y<1$ and write $\lambda^*:=\lambda^*(x,y)$.
    Taylor's theorem~\cite{apostol1974mathematicalanalysis} with integral remainder around $\lambda^*$ gives, for every $\lambda\in[A,B]$,
    \begin{equation}
        \begin{aligned}
            \phi_{x,y}(\lambda)
             & =
            \phi_{x,y}(\lambda^*)
            +(\lambda-\lambda^*)\phi_{x,y}'(\lambda^*)
            \\
             &\quad
            +(\lambda-\lambda^*)^2
            \int_0^1 (1-t)
            \phi_{x,y}''\!\bigl[\lambda^*+t(\lambda-\lambda^*)\bigr]
            \,dt.
        \end{aligned}
    \end{equation}
    Because $\phi_{x,y}'(\lambda^*)=0$ and $\phi_{x,y}''\ge -N/4$ by \eqref{eq:mix_curvature_bound}, we obtain
    \begin{align}
        \phi_{x,y}(\lambda)
         & \ge
        \phi_{x,y}(\lambda^*)
        -\frac{N}{4}(\lambda-\lambda^*)^2\int_0^1(1-t)\,dt\nonumber \\
         & =
        \phi_{x,y}(\lambda^*)-\frac{N}{8}(\lambda-\lambda^*)^2,
    \end{align}
    which is \eqref{eq:mix_quadratic_envelope}.
\end{proof}

We now apply the lower bound in Eq.~\eqref{eq:mix_quadratic_envelope} to the integral representation in Eq.~\eqref{eq:mix_interval_kernel}.
This allows us to relate the threshold $K_{A,B}(Nx,Ny)\ge1/\varepsilon$ to the KL divergence, as shown next.
For this analytical comparison we assume $\lambda^*(x,y)\in[A,B]$, but we remark that the numerical inversion of $K_{A,B}$ remains valid for any fixed interval $[A,B]$, even when $\lambda^*(x,y)$ lies outside it.

\begin{proposition}\label{prop:mix_interval_width_only}
    Assume $0<x\le y<1$, $\lambda^*(x,y)\in[A,B]$, and $w:=B-A$.
    Let
    \begin{equation}\label{eq:mix_effective_epsilon_def}
        \widetilde{\varepsilon}_{A,B}
        :=
        \varepsilon
        \left[1+\frac{Nw^2}{2\pi}\right]^{-1/2}.
    \end{equation}
    Then 
    \begin{equation}\label{eq:mix_interval_kl_log_only}
        N\,D(x\|y)\ge \ln(1/\widetilde{\varepsilon}_{A,B})
        \quad\Longrightarrow\quad
        K_{A,B}(Nx,Ny)\ge 1/\varepsilon.
    \end{equation}
\end{proposition}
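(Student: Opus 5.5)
The plan is to combine the quadratic lower envelope of Lemma~\ref{lem:mix_curvature} with the integral representation~\eqref{eq:mix_interval_kernel}. This reduces the claim to an elementary inequality for a truncated Gaussian integral, which we then verify by a one-variable monotonicity argument.

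\emph{Step 1 (reduction to a Gaussian integral).} Write $\lambda^*:=\lambda^*(x,y)\in[A,B]$ and $a:=N/8$. By Lemma~\ref{lem:mix_curvature} and Item~4 of Lemma~\ref{lem:mix_phi_facts}, $\phi_{x,y}(\lambda)\ge N D(x\|y)-a(\lambda-\lambda^*)^2$ on $[A,B]$. Hence
\begin{equation*}
K_{A,B}(Nx,Ny)\ \ge\ e^{N D(x\|y)}\,\frac{1}{w}\int_A^B e^{-a(\lambda-\lambda^*)^2}\,d\lambda .
\end{equation*}
The hypothesis $N D(x\|y)\ge\ln(1/\widetilde\varepsilon_{A,B})$ gives $e^{ND(x\|y)}\ge \varepsilon^{-1}\bigl[1+Nw^2/(2\pi)\bigr]^{1/2}$. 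Therefore it suffices to prove
\begin{equation*}
\frac{1}{w}\int_A^B e^{-a(\lambda-\lambda^*)^2}\,d\lambda\ \ge\ \Bigl[1+\frac{4aw^2}{\pi}\Bigr]^{-1/2},
\end{equation*}
using $4aw^2/\pi=Nw^2/(2\pi)$.

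\emph{Step 2 (worst case: $\lambda^*$ at an endpoint).} Let $F(t):=\int_0^t e^{-a\tau^2}d\tau$, so the integral equals $F(B-\lambda^*)+F(\lambda^*-A)$. The function $F$ is concave on $[0,\infty)$ with $F(0)=0$, so it is subadditive. This gives $F(p)+F(q)\ge F(p+q)=F(w)$ for $p,q\ge0$. It remains to show $F(w)/w\ge(1+4aw^2/\pi)^{-1/2}$. Substituting $z=\sqrt a\,w$ turns this into
\begin{equation*}
g(z):=\operatorname{erf}(z)-\frac{2z}{\sqrt{\pi+4z^2}}\ \ge\ 0\qquad (z\ge0).
\end{equation*}

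\emph{Step 3 (the erf inequality; main obstacle).} This is the only nontrivial step. We have $g(0)=0$ and $g(z)\to 1-1=0$ as $z\to\infty$. Differentiating gives
\begin{equation*}
g'(z)=\frac{2}{\sqrt\pi}e^{-z^2}-\frac{2\pi}{(\pi+4z^2)^{3/2}},
\end{equation*}
so $g'(z)\ge0$ if and only if $h(t):=\tfrac32\ln(1+4t/\pi)-t\ge0$, where $t=z^2$. The function $h$ is concave, satisfies $h(0)=0$ and $h'(0)=6/\pi-1>0$, and tends to $-\infty$. Hence $h$ has exactly one positive root $t_0$. Consequently $g$ increases on $[0,\sqrt{t_0}]$ and decreases afterwards. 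Since $g$ vanishes at $0$ and at $\infty$, this forces $g\ge0$ everywhere. Chaining Steps~1–3 yields $K_{A,B}(Nx,Ny)\ge1/\varepsilon$, which is~\eqref{eq:mix_interval_kl_log_only}.
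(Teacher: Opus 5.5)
Your proof is correct. Its skeleton is the paper's: the quadratic envelope from Lemma~\ref{lem:mix_curvature}, reduction to a truncated Gaussian mass, the worst case with $\lambda^*$ at an endpoint, and the inequality $\operatorname{erf}(z)\ge 2z/\sqrt{\pi+4z^2}$ at $z=w\sqrt{N/8}$.

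Two steps are argued differently.

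\emph{Endpoint reduction.} The paper appeals only to ``symmetry and monotonicity away from the peak.'' You note instead that $F(t)=\int_0^t e^{-a\tau^2}d\tau$ is concave with $F(0)=0$, hence subadditive, so $F(B-\lambda^*)+F(\lambda^*-A)\ge F(w)$. This makes the step rigorous in one line.

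\emph{The erf inequality.} The paper imports the log-quadratic bound $\operatorname{erf}(z)\ge 1-\exp(-2z^2/\pi-2z/\sqrt{\pi})$ of Mastin and Jaillet, then weakens it with $e^\theta\ge 1+\theta+\theta^2/2$; the remaining algebra is left to the reader. You prove the inequality from scratch. You use $g(0)=0=\lim_{z\to\infty}g(z)$, and observe that the sign of $g'$ is governed by the strictly concave $h(t)=\tfrac32\ln(1+4t/\pi)-t$. That function has a single positive root, so $g$ is unimodal and therefore nonnegative.

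Your route is self-contained and fully checkable at the cost of a short calculus argument. The paper's is shorter on the page but depends on an external Gaussian-tail result. Both give exactly the same penalty factor $[1+Nw^2/(2\pi)]^{-1/2}$.
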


\begin{proof}
Using \eqref{eq:mix_interval_kernel} and \eqref{eq:mix_quadratic_envelope}, we have that
    \begin{equation}
        \begin{aligned}
            K_{A,B}(Nx,Ny)
             & =
            \frac{1}{w}\int_A^B e^{\phi_{x,y}(\lambda)}\,d\lambda
            \\
             & \ge
            \frac{e^{\phi_{x,y}(\lambda^*)}}{w}
            \\
             &\quad \times
            \int_A^B
            e^{-\frac{N}{8}(\lambda-\lambda^*)^2}d\lambda,
        \end{aligned}
    \end{equation}
    where $\lambda^*=\lambda^*(x,y)$.
    It remains to lower-bound the mass over $[A,B]$ of the Gaussian curve peaked at $\lambda^*$.
    Since this peak lies inside $[A,B]$, symmetry and monotonicity away from the peak imply that the captured mass is minimized when the peak is at an endpoint.
    Therefore
    \begin{equation}
        \begin{aligned}
            &\int_A^B e^{-N(\lambda-\lambda^*)^2/8}\,d\lambda
            \\
            &\quad\ge
            \int_A^B e^{-N(\lambda-A)^2/8}\,d\lambda
            =
            \int_0^w e^{-Nv^2/8}\,dv .
        \end{aligned}
    \end{equation}
    Using this and \eqref{eq:mix_phi_at_lambda_star} gives
    \begin{equation}
        K_{A,B}(Nx,Ny)
        \ge
        \frac{e^{N D(x\|y)}}{w}
        \int_0^w e^{-\frac{N}{8}v^2}dv.
    \end{equation}
    The substitution $z=v\sqrt{N/8}$ gives $dv=\sqrt{8/N}\,dz$ and yields
    \begin{equation}
        \begin{aligned}
            \int_0^w e^{-\frac{N}{8}v^2}dv
             & =
            \sqrt{\frac{8}{N}}
            \int_0^{w\sqrt{N/8}} e^{-z^2}\,dz
            \\
             & =
            \sqrt{\frac{2\pi}{N}}\,
            \operatorname{erf}\!\left(w\sqrt{\frac{N}{8}}\right),
        \end{aligned}
    \end{equation}
    and hence
    \begin{equation}\label{eq:mix_interval_width_only}
        K_{A,B}(Nx,Ny)
        \ge
        \frac{e^{N D(x\|y)}}{w}
        \sqrt{\frac{2\pi}{N}}\,
        \operatorname{erf}\!\left(w\sqrt{\frac{N}{8}}\right).
    \end{equation}
    The log-quadratic Gaussian-tail bound of Ref.~\cite{mastin_jaillet2013logquadratic} gives, for $z\ge0$,
    \begin{equation}
        \operatorname{erf}(z)
        \ge
        1-\exp\!\left(
            -\frac{2z^2}{\pi}
            -\frac{2z}{\sqrt{\pi}}
        \right).
    \end{equation}
    Together with $e^\theta\ge 1+\theta+\theta^2/2$ for $\theta\ge0$, this implies
    \begin{equation}\label{eq:erf_log_only_lower_bound}
        \operatorname{erf}(z)
        \ge
        \frac{2z}{\sqrt{\pi+4z^2}},
        \qquad z\ge 0.
    \end{equation}
    Applying \eqref{eq:erf_log_only_lower_bound} with $z=w\sqrt{N/8}$ gives
    \begin{equation}
        \frac{1}{w}\sqrt{\frac{2\pi}{N}}\,
        \operatorname{erf}\!\left(w\sqrt{\frac{N}{8}}\right)
        \ge
        \left(1+\frac{Nw^2}{2\pi}\right)^{-1/2}.
    \end{equation}
    Combining this with \eqref{eq:mix_interval_width_only} yields
    \begin{equation}\label{eq:mix_interval_width_only_log_bound}
        K_{A,B}(Nx,Ny)
        \ge
        e^{ND(x\|y)}
        \left(1+\frac{Nw^2}{2\pi}\right)^{-1/2}.
    \end{equation}
    Therefore \eqref{eq:mix_interval_kl_log_only} follows from \eqref{eq:mix_effective_epsilon_def}.
\end{proof}

\subsection{Closed-form rational bounds}
We now obtain relaxed but tight analytical bounds by applying the KL relaxation of Refs.~\cite{topsoe2007some,mannalath2025sharp} to the sufficient condition in Eq.~\eqref{eq:mix_interval_kl_log_only}.
\begin{lemma}\label{lem:mix_surrogates}
    Let $w:=B-A$ and define
    \begin{equation}
        \widetilde{\varepsilon}_{A,B}
        :=
        \varepsilon
        \left(1+\frac{Nw^2}{2\pi}\right)^{-1/2},
        \qquad
	        \widetilde{\kappa}:=\frac{2\ln(1/\widetilde{\varepsilon}_{A,B})}{9N}.
	    \end{equation}
	    The angle-bracket labels $\langle\mu\rangle$ and $\langle\Lambda\rangle$ indicate the later bounds in which these quantities are used (e.g., $\widetilde{\gamma}_{\langle\mu\rangle}^{\pm}$ enters the bounds on $\rvmu_N$, whereas $\widetilde{\zeta}_{\langle\Lambda\rangle}^{\pm}$ enters the bounds on $\rvLambda_N$), not a functional dependence.
	    Define
    \begin{equation}
        \begin{aligned}
            \widetilde{a}_{\langle\mu\rangle}
             & :=1+4\widetilde{\kappa},
            \\
            \widetilde{b}_{\langle\mu\rangle}(x)
             & :=-6\widetilde{\kappa}-2(1-2\widetilde{\kappa})x,
            \\
            \widetilde{c}_{\langle\mu\rangle}(x)
             & :=(1+\widetilde{\kappa})x^2-3\widetilde{\kappa}x,
        \end{aligned}
    \end{equation}
    \begin{equation}
        \begin{aligned}
            \widetilde{\gamma}_{\langle\mu\rangle}^{\pm}(x)
             &:=
            \frac{
                -\widetilde{b}_{\langle\mu\rangle}(x)
                \pm
                \sqrt{
                    \widetilde{b}_{\langle\mu\rangle}(x)^2
                    -4\widetilde{a}_{\langle\mu\rangle}
                    \widetilde{c}_{\langle\mu\rangle}(x)
                }
            }
            {2\widetilde{a}_{\langle\mu\rangle}}.
        \end{aligned}
        \label{eq:mix_gamma_def}
    \end{equation}
    \begin{equation}
        \widetilde{r}_{\langle\mu\rangle}^{-}
        :=
        \frac{3\widetilde{\kappa}}{1+\widetilde{\kappa}},
        \qquad
        \widetilde{r}_{\langle\mu\rangle}^{+}
        :=
        \frac{1-2\widetilde{\kappa}}{1+\widetilde{\kappa}},
    \end{equation}
    \begin{equation}
        \begin{aligned}
            \widetilde{a}_{\langle\Lambda\rangle}
             & :=1+\widetilde{\kappa},
            \\
            \widetilde{b}_{\langle\Lambda\rangle}(y)
             & :=-3\widetilde{\kappa}-2(1-2\widetilde{\kappa})y,
            \\
            \widetilde{c}_{\langle\Lambda\rangle}(y)
             & :=(1+4\widetilde{\kappa})y^2-6\widetilde{\kappa}y,
        \end{aligned}
    \end{equation}
    \begin{equation}
        \begin{aligned}
            \widetilde{\zeta}_{\langle\Lambda\rangle}^{\pm}(y)
             &:=
            \frac{
                -\widetilde{b}_{\langle\Lambda\rangle}(y)
                \pm
                \sqrt{
                    \widetilde{b}_{\langle\Lambda\rangle}(y)^2
                    -4\widetilde{a}_{\langle\Lambda\rangle}
                    \widetilde{c}_{\langle\Lambda\rangle}(y)
                }
            }
            {2\widetilde{a}_{\langle\Lambda\rangle}}.
        \end{aligned}
        \label{eq:mix_zeta_def}
    \end{equation}
    \begin{equation}
        \widetilde{r}_{\langle\Lambda\rangle}^{+}
        :=
        \frac{1-2\widetilde{\kappa}}{1+4\widetilde{\kappa}},
        \qquad
        \widetilde{r}_{\langle\Lambda\rangle}^{-}
        :=
        \frac{6\widetilde{\kappa}}{1+4\widetilde{\kappa}}.
    \end{equation}
    Then for $x,y\in(0,1)$, each displayed condition implies the bound:
    \begingroup
    \footnotesize
    \begin{equation}
        \left.
        \begin{aligned}
            x\in(0,\widetilde{r}_{\langle\mu\rangle}^{+}),\quad
            y\ge \widetilde{\gamma}_{\langle\mu\rangle}^{+}(x),\\
            x\in(\widetilde{r}_{\langle\mu\rangle}^{-},1),\quad
            y\le \widetilde{\gamma}_{\langle\mu\rangle}^{-}(x),\\
            y\in(\widetilde{r}_{\langle\Lambda\rangle}^{-},1),\quad
            x\le \widetilde{\zeta}_{\langle\Lambda\rangle}^{-}(y),\\
            y\in(0,\widetilde{r}_{\langle\Lambda\rangle}^{+}),\quad
            x\ge \widetilde{\zeta}_{\langle\Lambda\rangle}^{+}(y)
        \end{aligned}
        \right\}
        \Longrightarrow
        D(x\|y)\ge \frac{1}{N}\ln(1/\widetilde{\varepsilon}_{A,B}).
        \label{eq:mix_surrogate_cert}
    \end{equation}
    \endgroup
\end{lemma}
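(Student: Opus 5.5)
The proof rests on a rational lower bound for the Bernoulli KL divergence of Topsøe type, as used in Ref.~\cite{mannalath2025sharp}:
\begin{equation}
D(x\|y)\ge \frac{(x-y)^2}{\tfrac{2}{3}\,\bar m + \tfrac{1}{3}\,\bar g},
\end{equation}
where $\bar m$ and $\bar g$ are suitable mean-type quantities. Concretely, I would use the form
\begin{equation}
D(x\|y)\ge \frac{9(x-y)^2}{2\,(x+2y)\,(3-x-2y)}
\end{equation}
in the regime $x\le y$. In the regime $x\ge y$, the same bound holds with the roles of $x$ and $y$ swapped in the weights, giving the denominator $2(2x+y)(3-2x-y)$.

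These follow from the Topsøe bound $D(x\|y)\ge (x-y)^2/[2\,\mathrm{Var}]$, with the variance evaluated at the weighted point $(x+2y)/3$ (respectively $(2x+y)/3$). That bound is proved by integrating the second derivative of $D$ in $x$ along the segment from $y$ to $x$, and applying a Cauchy--Schwarz/convexity argument to the harmonic mean of $t(1-t)$. I would take this inequality as given from Refs.~\cite{topsoe2007some,mannalath2025sharp}, since the lemma only claims the relaxation already established there.

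Given the rational bound, each case reduces to a quadratic. The sufficient condition
\begin{equation}
\frac{9(x-y)^2}{2(x+2y)(3-x-2y)}\ge \widetilde{\kappa}\cdot\frac{9}{2}
\end{equation}
is equivalent to $(x-y)^2 \ge \widetilde{\kappa}(x+2y)(3-x-2y)$, since $\widetilde{\kappa}=2\ln(1/\widetilde{\varepsilon}_{A,B})/(9N)$.

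\emph{Bounds on $\mu$ (fixed $x$).} Expanding in $y$ gives
\begin{equation}
(1+4\widetilde{\kappa})y^2+\bigl[-6\widetilde{\kappa}-2(1-2\widetilde{\kappa})x\bigr]y+(1+\widetilde{\kappa})x^2-3\widetilde{\kappa}x\ge 0 ,
\end{equation}
which is exactly $\widetilde a_{\langle\mu\rangle}y^2+\widetilde b_{\langle\mu\rangle}y+\widetilde c_{\langle\mu\rangle}\ge0$. With positive leading coefficient, this holds for $y\ge\widetilde\gamma^+$ or $y\le\widetilde\gamma^-$. The choice of denominator weights must match the side of $y$ relative to $x$.

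\emph{Bounds on $\Lambda$ (fixed $y$).} The roles swap: expanding with the other weighting and solving in $x$ gives the coefficients $\widetilde a_{\langle\Lambda\rangle}=1+\widetilde\kappa$, $\widetilde b_{\langle\Lambda\rangle}$, $\widetilde c_{\langle\Lambda\rangle}$ and the roots $\widetilde\zeta^\pm$.

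In every case I would then check that the conclusion of Lemma~\ref{lem:mix_surrogates} follows. Once the quadratic inequality holds, $D(x\|y)\ge \ln(1/\widetilde\varepsilon_{A,B})/N$.

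The range conditions $x\in(0,\widetilde r^+_{\langle\mu\rangle})$, $x\in(\widetilde r^-_{\langle\mu\rangle},1)$, and so on are exactly what make the relevant root lie in $(0,1)$ on the correct side of $x$ (or $y$). This is verified by evaluating the quadratic at the endpoints $y=x$ and $y=1$ (or $y=0$). At $y=1$ the quadratic equals $(1+\widetilde\kappa)x^2-(2-4\widetilde\kappa)x+(1-2\widetilde\kappa)$ after simplification. Its sign change at $x=\widetilde r^+_{\langle\mu\rangle}=(1-2\widetilde\kappa)/(1+\widetilde\kappa)$ identifies the threshold; the lower threshold $3\widetilde\kappa/(1+\widetilde\kappa)$ comes from the constant term $\widetilde c_{\langle\mu\rangle}(x)$ vanishing, i.e.\ the $y=0$ endpoint.

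The analogous computation gives $\widetilde r^\pm_{\langle\Lambda\rangle}$. At these endpoints the bound becomes vacuous or the discriminant positivity is automatic: the value at $y=x$ is $-\widetilde\kappa(3x)(3-3x)<0$, so both roots are real and straddle $x$.

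\emph{Main obstacle.} The delicate part is the sign bookkeeping. I must confirm which weighting of the Topsøe denominator is valid on which side of $x=y$, so that the $\langle\mu\rangle$ and $\langle\Lambda\rangle$ coefficient sets both come from legitimate lower bounds on $D$. I must also confirm that the root selected in each of the four cases indeed lies on the side where $D$ is being lower-bounded. I would first carry out the check at $y=x$ and at the interval endpoints as above, and fall back on the monotonicity of $D(x\|\cdot)$ away from $x$ to extend the implication beyond the root.
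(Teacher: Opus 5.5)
\textbf{There is a genuine gap: you identify the wrong KL relaxation for three of the four cases.} Your overall route is the one the paper takes: a Tops\o{}e-type rational lower bound on $D(x\|y)$, then a quadratic solve. The paper gives no written proof beyond invoking that relaxation. Your first case ($y\ge\widetilde\gamma^+_{\langle\mu\rangle}(x)$, where $x<y$) is handled correctly.

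The problem is your claim that for $x\ge y$ one must switch to the swapped weighting $D(x\|y)\ge 9(x-y)^2/[2(2x+y)(3-2x-y)]$, and that the $\langle\Lambda\rangle$ coefficients come from this ``other weighting''. Both statements are false.
\begin{itemize}
\item The swapped inequality fails for $x>y$ in general. At $x=0.99$, $y=0.5$ its right-hand side is $\approx 0.838$, while $D(x\|y)=0.99\ln 1.98+0.01\ln 0.02\approx 0.637$.
\item Even as algebra it does not give the stated coefficients. Expanding $(x-y)^2\ge\widetilde\kappa(2x+y)(3-2x-y)$ in $x$ gives leading coefficient $1+4\widetilde\kappa$. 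That is the $\langle\mu\rangle$ coefficient set with $x\leftrightarrow y$, not $\widetilde a_{\langle\Lambda\rangle}=1+\widetilde\kappa$.
\item Because you restrict the $(x+2y)$-weighted bound to $x\le y$, your argument leaves the branch $y\le\widetilde\gamma^-_{\langle\mu\rangle}(x)$ (where $y<x$) unjustified. The two $\langle\Lambda\rangle$ branches are likewise not covered.
\end{itemize}

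\textbf{The fix: one weighting works on both sides of the diagonal.} This follows from exactly the Jensen argument you sketch. Write $t_s=y+s(x-y)$. Then $D(x\|y)=(x-y)^2\int_0^1(1-s)\,[t_s(1-t_s)]^{-1}\,ds$ for either sign of $x-y$. The density $2(1-s)$ on $[0,1]$ has mean $1/3$, and $t\mapsto 1/[t(1-t)]$ is convex. Hence $D(x\|y)\ge (x-y)^2/[2m(1-m)]$ with $m=(x+2y)/3$, for all $x,y\in(0,1)$.

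\textbf{Consequences for the four cases.}
\begin{itemize}
\item Every case reduces to $P(x,y):=(x-y)^2-\widetilde\kappa(x+2y)(3-x-2y)\ge 0$, since $\tfrac{9}{2}\widetilde\kappa=\ln(1/\widetilde\varepsilon_{A,B})/N$.
\item Expanded in $y$, $P$ has coefficients $(\widetilde a,\widetilde b,\widetilde c)_{\langle\mu\rangle}$. Expanded in $x$, the same $P$ has coefficients $(\widetilde a,\widetilde b,\widetilde c)_{\langle\Lambda\rangle}$.
\item Since $P(x,x)=-9\widetilde\kappa x(1-x)<0$, the roots in either variable are real and straddle the diagonal. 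So $P\ge 0$ outside them, and no monotonicity fallback is needed.
\item The range conditions only ensure the relevant root lies in $(0,1)$, i.e.\ the bound is non-vacuous. They follow from $P(x,0)=\widetilde c_{\langle\mu\rangle}(x)$, $P(x,1)=(x-1)[(1+\widetilde\kappa)x-(1-2\widetilde\kappa)]$, $P(0,y)=\widetilde c_{\langle\Lambda\rangle}(y)$, and $P(1,y)=(1-y)[(1-2\widetilde\kappa)-(1+4\widetilde\kappa)y]$.
\end{itemize}
Your $y=1$ expansion also has a slip: the linear coefficient is $-(2-\widetilde\kappa)$, not $-(2-4\widetilde\kappa)$. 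With your version, $x=1$ is not a root.
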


\begin{proposition}\label{thm:mix_explicit_bounds}
    For $s,u\in[0,N]$, write $x:=s/N$ and $y:=u/N$.
    Using the functions $\widetilde{\gamma}_{\langle\mu\rangle}^{\pm}$ and $\widetilde{\zeta}_{\langle\Lambda\rangle}^{\pm}$ and the endpoints $\widetilde{r}_{\langle\mu\rangle}^{\pm}$ and $\widetilde{r}_{\langle\Lambda\rangle}^{\pm}$ from Lemma~\ref{lem:mix_surrogates}, define
    \begingroup
    \footnotesize
    \begin{align}
        U_{A,B}^{\mu,\mathrm{rat}}(s)
         & :=
        \begin{cases}
            N\,\widetilde{\gamma}_{\langle\mu\rangle}^+(x),
             &
            \begin{aligned}[t]
                &\text{if }x\in(0,\widetilde{r}_{\langle\mu\rangle}^{+}),
                \\
                &\lambda^*\!\left[x,\widetilde{\gamma}_{\langle\mu\rangle}^+(x)\right]\in[A,B],
            \end{aligned}
            \\[1ex]
            N,
             & \text{otherwise,}
        \end{cases}     \\
        L_{A,B}^{\mu,\mathrm{rat}}(s)
         & :=
        \begin{cases}
            N\,\widetilde{\gamma}_{\langle\mu\rangle}^-(x),
             &
            \begin{aligned}[t]
                &\text{if }x\in(\widetilde{r}_{\langle\mu\rangle}^{-},1),
                \\
                &\lambda^*\!\left[1-x,1-\widetilde{\gamma}_{\langle\mu\rangle}^-(x)\right]\in[A,B],
            \end{aligned}
            \\[1ex]
            0,
             & \text{otherwise,}
        \end{cases} \\
        U_{A,B}^{\Lambda,\mathrm{rat}}(u)
         & :=
        \begin{cases}
            N\,\widetilde{\zeta}_{\langle\Lambda\rangle}^+(y),
             &
            \begin{aligned}[t]
                &\text{if }y\in(0,\widetilde{r}_{\langle\Lambda\rangle}^{+}),
                \\
                &\lambda^*\!\left[1-\widetilde{\zeta}_{\langle\Lambda\rangle}^+(y),1-y\right]\in[A,B],
            \end{aligned}
            \\[1ex]
            N,
             & \text{otherwise,}
        \end{cases} \\
        L_{A,B}^{\Lambda,\mathrm{rat}}(u)
         & :=
        \begin{cases}
            N\,\widetilde{\zeta}_{\langle\Lambda\rangle}^-(y),
             &
            \begin{aligned}[t]
                &\text{if }y\in(\widetilde{r}_{\langle\Lambda\rangle}^{-},1),
                \\
                &\lambda^*\!\left[\widetilde{\zeta}_{\langle\Lambda\rangle}^-(y),y\right]\in[A,B],
            \end{aligned}
            \\[1ex]
            0,
             & \text{otherwise.}
        \end{cases}
    \end{align}
    \endgroup
    Then
    \begin{align}
        \mathbb P\!\left[\rvmu_N> U_{A,B}^{\mu,\mathrm{rat}}(\rvLambda_N)\right]     & \le \varepsilon,
        \label{eq:mix_final_mu_upper}                                                                 \\
        \mathbb P\!\left[\rvmu_N< L_{A,B}^{\mu,\mathrm{rat}}(\rvLambda_N)\right]     & \le \varepsilon,
        \label{eq:mix_final_mu_lower}                                                                 \\
        \mathbb P\!\left[\rvLambda_N> U_{A,B}^{\Lambda,\mathrm{rat}}(\rvmu_N)\right] & \le \varepsilon,
        \label{eq:mix_final_lambda_upper}                                                             \\
        \mathbb P\!\left[\rvLambda_N< L_{A,B}^{\Lambda,\mathrm{rat}}(\rvmu_N)\right] & \le \varepsilon.
        \label{eq:mix_final_lambda_lower}
    \end{align}
\end{proposition}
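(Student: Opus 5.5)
We reduce each of the four claims to the corresponding numerical bound of Proposition~\ref{thm:mix_general_implicit_bounds} for $\pi=\pi_{A,B}$. It suffices to show, pointwise, that each rational boundary is at least as conservative as its numerical counterpart:
\begin{equation*}
U_{A,B}^{\mu,\mathrm{rat}}(s)\ge U_{A,B}^{\mu}(s),\quad
L_{A,B}^{\mu,\mathrm{rat}}(s)\le L_{A,B}^{\mu}(s),\quad
U_{A,B}^{\Lambda,\mathrm{rat}}(u)\ge U_{A,B}^{\Lambda}(u),\quad
L_{A,B}^{\Lambda,\mathrm{rat}}(u)\le L_{A,B}^{\Lambda}(u).
\end{equation*}
The events in \eqref{eq:mix_final_mu_upper}--\eqref{eq:mix_final_lambda_lower} are then contained in those of \eqref{eq:mix_general_mu_upper}--\eqref{eq:mix_general_lambda_lower}, and the probability bounds transfer. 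In the ``otherwise'' branches the rational value is the trivial one ($N$ or $0$), so these inequalities hold automatically because the numerical boundaries lie in $[0,N]$. Only the first branch of each case needs work.

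For the upper $\mu$ bound, fix $x\in(0,\widetilde r^+_{\langle\mu\rangle})$ and set $y_0:=\widetilde\gamma^+_{\langle\mu\rangle}(x)$, with $\lambda^*(x,y_0)\in[A,B]$. If $y_0\ge1$, then $U^{\mu,\mathrm{rat}}\ge N\ge U^\mu$ trivially, so assume $y_0<1$. Lemma~\ref{lem:mix_surrogates} gives $ND(x\|y_0)\ge\ln(1/\widetilde\varepsilon_{A,B})$.

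We also need $x\le y_0$, which we expect to read off from the quadratic. The condition $y\ge\widetilde\gamma^+$ is meant to certify that $y$ lies on the right of $x$. Concretely, evaluating $\widetilde a y^2+\widetilde b y+\widetilde c$ at $y=x$ gives $-3\widetilde\kappa x(1-x)\cdot\text{const}\le0$, so $x$ sits between the roots.

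Proposition~\ref{prop:mix_interval_width_only} then yields $K_{A,B}(s,Ny_0)\ge1/\varepsilon$. By Lemma~\ref{lem:mix_gpi_monotonicity}, $K_{A,B}(s,u)\ge1/\varepsilon$ for all $u\ge Ny_0$.

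One subtlety is that the infimum in $U^\mu$ uses the strict inequality $>1/\varepsilon$. We handle it by noting that $u\mapsto K_\lambda(s,u)$ is strictly increasing for $\lambda>0$. The measure $\pi_{A,B}$ puts positive mass on $\lambda>0$ since $B>0$, so $K_{A,B}(s,u)>1/\varepsilon$ for every $u>Ny_0$. Hence $U^\mu_{A,B}(s)\le Ny_0$.

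The lower $\Lambda$ bound is argued identically using the third line of \eqref{eq:mix_surrogate_cert} with $x_0=\widetilde\zeta^-_{\langle\Lambda\rangle}(y)$. Here we check $x_0\le y$, use the condition $\lambda^*(x_0,y)\in[A,B]$, and note that $s\mapsto K_{A,B}(s,u)$ is strictly decreasing. This gives $K_{A,B}(s,u)>1/\varepsilon$ for $s<Nx_0$, so $L^\Lambda_{A,B}(u)\ge Nx_0$.

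The complementary bounds $L^{\mu}$ and $U^{\Lambda}$ follow by passing to $(1-x,1-y)$, since $D(x\|y)=D(1-x\|1-y)$. The conditions in the second and fourth lines of \eqref{eq:mix_surrogate_cert} give $ND(1-x\|1-y)\ge\ln(1/\widetilde\varepsilon_{A,B})$, and the stated conditions $\lambda^*[1-x,1-\cdot]\in[A,B]$ are exactly the hypotheses of Proposition~\ref{prop:mix_interval_width_only} applied to the complemented pair. Proposition~\ref{prop:mix_interval_width_only} then gives $K_{A,B}(N-s,N-u)\ge1/\varepsilon$. The monotonicity statements for the complemented kernel in Lemma~\ref{lem:mix_gpi_monotonicity}, again made strict, finish the inversion.

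The main obstacle is the ordering requirement $x\le y$ (respectively $1-x\le1-y$) needed to invoke Proposition~\ref{prop:mix_interval_width_only}, together with the domain conditions ensuring the relevant root lies in $(0,1)$. If the sign check of the quadratic at $y=x$ does not go through directly, we fall back on the explicit root formula and the range restrictions $x<\widetilde r^{\pm}$, which appear designed to guarantee that the certified root sits on the correct side of $x$ and inside the unit interval.
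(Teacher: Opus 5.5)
Your proof is correct. Its core chain is the same as the paper's: Lemma~\ref{lem:mix_surrogates} certifies $D\ge\ln(1/\widetilde{\varepsilon}_{A,B})/N$ at the rational boundary, Proposition~\ref{prop:mix_interval_width_only} turns this into $K_{A,B}\ge1/\varepsilon$ there, and monotonicity extends it along the tail.

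The difference is only in the last step.
\begin{itemize}
\item The paper uses just the non-strict monotonicity of Lemma~\ref{lem:mix_gpi_monotonicity} to get $\{\rvmu_N>U_{A,B}^{\mu,\mathrm{rat}}(\rvLambda_N)\}\subseteq\{K_{A,B}(\rvLambda_N,\rvmu_N)\ge1/\varepsilon\}$. It then concludes directly from Proposition~\ref{thm:mix_joint_tail} (or Corollary~\ref{cor:mix_complement}), whose event is already non-strict.
\item You route through the numerical inversions of Proposition~\ref{thm:mix_general_implicit_bounds}. Their strict threshold $>1/\varepsilon$ forces you to upgrade to strict monotonicity. You justify this correctly: $\pi_{A,B}$ puts full mass on $\lambda>0$, and each $K_\lambda(s,u)$ is finite and continuous in $\lambda$ on $[A,B]$.
\end{itemize}
The detour costs that extra argument but gives a stronger conclusion. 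The rational maps are pointwise dominated by the numerical ones ($U^{\mu,\mathrm{rat}}_{A,B}\ge U^{\mu}_{A,B}$, $L^{\Lambda,\mathrm{rat}}_{A,B}\le L^{\Lambda}_{A,B}$, and so on). So the analytical relaxation can never be tighter than the numerical mixture bound for the same interval.

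The ordering issue you flag as the main obstacle is not one. For $x,y\in(0,1)$ one has $\lambda^*(x,y)\ge0$ if and only if $y\ge x$, so the branch condition $\lambda^*\in[A,B]\subseteq[0,\infty)$ already gives $x\le y$. Your sign check also goes through, so no fallback is needed:
\begin{itemize}
\item at $y=x$ the quadratic $\widetilde a_{\langle\mu\rangle}y^2+\widetilde b_{\langle\mu\rangle}(x)y+\widetilde c_{\langle\mu\rangle}(x)$ equals $-9\widetilde\kappa x(1-x)<0$;
\item at $y=1$ it equals $(1+\widetilde\kappa)(1-x)(\widetilde r^{+}_{\langle\mu\rangle}-x)>0$.
\end{itemize}
Hence $x<\widetilde\gamma^{+}_{\langle\mu\rangle}(x)<1$ on the branch. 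The analogous evaluations for the other three maps (for example at $y=0$, giving $(1+\widetilde\kappa)x(x-\widetilde r^{-}_{\langle\mu\rangle})$) place each certified root strictly inside $(0,1)$ on the correct side.
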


\begin{proof}
    It is enough to consider the nontrivial branch in each definition.
    There, \eqref{eq:mix_interval_kl_log_only} gives $K_{A,B}(s,u)\ge 1/\varepsilon$ on the boundary.
	    The monotonicity established in Lemma~\ref{lem:mix_gpi_monotonicity} extends this to the whole tail region, and Proposition~\ref{thm:mix_joint_tail} or Corollary~\ref{cor:mix_complement} gives the required probability bound.

    \medskip\noindent
    \eqref{eq:mix_final_mu_upper}: \eqref{eq:mix_surrogate_cert} gives $D\!\left[x\|\widetilde{\gamma}_{\langle\mu\rangle}^+(x)\right]\ge \ln(1/\widetilde{\varepsilon}_{A,B})/N$, so \eqref{eq:mix_interval_kl_log_only} yields $K_{A,B}\!\left[s,N\widetilde{\gamma}_{\langle\mu\rangle}^+(x)\right]\ge 1/\varepsilon$. Since $u\mapsto K_{A,B}(s,u)$ is nondecreasing, $\{\rvmu_N>U_{A,B}^{\mu,\mathrm{rat}}(\rvLambda_N)\}\subseteq\{K_{A,B}(\rvLambda_N,\rvmu_N)\ge 1/\varepsilon\}$. Proposition~\ref{thm:mix_joint_tail} gives the claim.

    \medskip\noindent
    \eqref{eq:mix_final_mu_lower}: \eqref{eq:mix_surrogate_cert} gives $D\!\left[x\|\widetilde{\gamma}_{\langle\mu\rangle}^-(x)\right]\ge \ln(1/\widetilde{\varepsilon}_{A,B})/N$.
    Since Bernoulli KL divergence is invariant under complementation,
    \[
        D\!\left[1-x\|1-\widetilde{\gamma}_{\langle\mu\rangle}^-(x)\right]
        =
        D\!\left[x\|\widetilde{\gamma}_{\langle\mu\rangle}^-(x)\right].
    \]
    Thus \eqref{eq:mix_interval_kl_log_only} yields $K_{A,B}\!\left[N-s,N-N\widetilde{\gamma}_{\langle\mu\rangle}^-(x)\right]\ge 1/\varepsilon$.
    Since $u\mapsto K_{A,B}(N-s,N-u)$ is nonincreasing, $\{\rvmu_N<L_{A,B}^{\mu,\mathrm{rat}}(\rvLambda_N)\}\subseteq\{K_{A,B}(N-\rvLambda_N,N-\rvmu_N)\ge 1/\varepsilon\}$. Corollary~\ref{cor:mix_complement} gives the claim.

    \medskip\noindent
    \eqref{eq:mix_final_lambda_upper}: \eqref{eq:mix_surrogate_cert} gives $D\!\left[\widetilde{\zeta}_{\langle\Lambda\rangle}^+(y)\|y\right]\ge \ln(1/\widetilde{\varepsilon}_{A,B})/N$.
    Since Bernoulli KL divergence is invariant under complementation,
    \[
        D\!\left[1-\widetilde{\zeta}_{\langle\Lambda\rangle}^+(y)\|1-y\right]
        =
        D\!\left[\widetilde{\zeta}_{\langle\Lambda\rangle}^+(y)\|y\right].
    \]
    Thus \eqref{eq:mix_interval_kl_log_only} yields $K_{A,B}\!\left[N-N\widetilde{\zeta}_{\langle\Lambda\rangle}^+(y),N-u\right]\ge 1/\varepsilon$.
    Since $s\mapsto K_{A,B}(N-s,N-u)$ is nondecreasing, $\{\rvLambda_N>U_{A,B}^{\Lambda,\mathrm{rat}}(\rvmu_N)\}\subseteq\{K_{A,B}(N-\rvLambda_N,N-\rvmu_N)\ge 1/\varepsilon\}$. Corollary~\ref{cor:mix_complement} gives the claim.

    \medskip\noindent
    \eqref{eq:mix_final_lambda_lower}: \eqref{eq:mix_surrogate_cert} gives $D\!\left[\widetilde{\zeta}_{\langle\Lambda\rangle}^-(y)\|y\right]\ge \ln(1/\widetilde{\varepsilon}_{A,B})/N$.
    Thus \eqref{eq:mix_interval_kl_log_only} yields $K_{A,B}\!\left[N\widetilde{\zeta}_{\langle\Lambda\rangle}^-(y),u\right]\ge 1/\varepsilon$.
    Since $s\mapsto K_{A,B}(s,u)$ is nonincreasing, $\{\rvLambda_N<L_{A,B}^{\Lambda,\mathrm{rat}}(\rvmu_N)\}\subseteq\{K_{A,B}(\rvLambda_N,\rvmu_N)\ge 1/\varepsilon\}$.
    Proposition~\ref{thm:mix_joint_tail} gives the claim.
\end{proof}

In summary, the analytical bounds of Ref.~\cite{mannalath2025sharp}, which control the departure of a sum of independent Bernoulli trials from its mean, lift to the non-independent and identically distributed (non-IID) setting, where they control the departure of the observed sum from the sum of conditional expectations.
The price of this lift is the mixture penalty $\frac12\ln[1+N(B-A)^2/(2\pi)]$, paid to obtain robustness over the interval of martingale parameters.

\suppsection{\texorpdfstring{Fixed-$\lambda$ martingale bounds}{Fixed-lambda closed forms and guess calibration}}
\label{app:fixed_lambda_closed_forms}

Taking $\pi=\delta_\lambda$ in Proposition~\ref{thm:mix_general_implicit_bounds} gives $K_\pi=K_\lambda$ and recovers the fixed-$\lambda$ martingale bounds.
The following proposition states the resulting closed forms.
\begin{proposition}\label{prop:fixed_lambda_bounds}
    Fix $\lambda>0$ and $\varepsilon\in(0,1)$.
    For $s,u\in[0,N]$, define
\begin{equation}
    K_\lambda(s,u)
    =
    e^{-\lambda s}
    \left[
        1-(1-e^{-\lambda})\frac{u}{N}
    \right]^{-N}.
\end{equation}
Then
\begin{equation}
    \mathbb P\!\left[
        K_\lambda(\rvLambda_N,\rvmu_N)\ge 1/\varepsilon
    \right]\le \varepsilon.
\end{equation}
Equivalently, we have that
\begin{equation}
    \mathbb P\!\left[
        -\lambda\rvLambda_N
        -N\ln\!\left(
            1-(1-e^{-\lambda})\frac{\rvmu_N}{N}
        \right)
        \ge \ln(1/\varepsilon)
    \right]\le \varepsilon.
\end{equation}
The upper-$\rvmu_N$ and lower-$\rvLambda_N$ maps are obtained by solving $K_\lambda(s,u)=1/\varepsilon$ in $u$ and $s$, respectively.
The lower-$\rvmu_N$ and upper-$\rvLambda_N$ maps are obtained by applying the same two inversions to $K_\lambda(N-s,N-u)=1/\varepsilon$.
Define the clipping map $\Pi_N(z):=\min\{N,\max\{0,z\}\}$.
Define
\begin{align}
    U_{\lambda}^{\mu}(s)
     &:=
    \Pi_N\!\left(
    \frac{
        N\!\left[
            1-e^{\frac{-\lambda s-\ln(1/\varepsilon)}{N}}
        \right]
    }
    {1-e^{-\lambda}}
    \right),
    \label{eq:fixed_lambda_u_mu}
    \\
    L_{\lambda}^{\mu}(s)
     &:=
    \Pi_N\!\left(
    \frac{
        N\!\left[
            1-e^{\frac{\lambda s-\ln(1/\varepsilon)}{N}}
        \right]
    }
    {1-e^{\lambda}}
    \right),
    \label{eq:fixed_lambda_l_mu}
    \\
    U_{\lambda}^{\Lambda}(u)
     &:=
    \Pi_N\!\left(
    \frac{
        \ln(1/\varepsilon)
        +N\ln\!\left[
            1-(1-e^{\lambda})\frac{u}{N}
        \right]
    }
    {\lambda}
    \right),
    \label{eq:fixed_lambda_u_lambda}
    \\
    L_{\lambda}^{\Lambda}(u)
     &:=
    \Pi_N\!\left(
    \frac{
        -\ln(1/\varepsilon)
        -N\ln\!\left[
            1-(1-e^{-\lambda})\frac{u}{N}
        \right]
    }
    {\lambda}
    \right).
    \label{eq:fixed_lambda_l_lambda}
\end{align}
Then
\begin{align}
    \mathbb P\!\left[\rvmu_N>U_{\lambda}^{\mu}(\rvLambda_N)\right]
     &\le
    \varepsilon,
    &
    \mathbb P\!\left[\rvmu_N<L_{\lambda}^{\mu}(\rvLambda_N)\right]
     &\le
    \varepsilon,
    \nonumber\\
    \mathbb P\!\left[\rvLambda_N>U_{\lambda}^{\Lambda}(\rvmu_N)\right]
     &\le
    \varepsilon,
    &
    \mathbb P\!\left[\rvLambda_N<L_{\lambda}^{\Lambda}(\rvmu_N)\right]
     &\le
    \varepsilon.
    \label{eq:fixed_lambda_four_valid}
\end{align}
\end{proposition}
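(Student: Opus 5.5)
The plan is to specialise Proposition~\ref{thm:mix_general_implicit_bounds} to $\pi=\delta_\lambda$ and compute the four boundary maps in closed form.

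\textbf{Joint tail.} With $\pi=\delta_\lambda$ we get $K_\pi=K_\lambda$. The tail inequality $\mathbb P[K_\lambda(\rvLambda_N,\rvmu_N)\ge1/\varepsilon]\le\varepsilon$ is then Proposition~\ref{thm:mix_joint_tail} verbatim. Its logarithmic form follows because $\ln$ is strictly increasing. Corollary~\ref{cor:mix_complement} gives the same statement for $K_\lambda(N-\rvLambda_N,N-\rvmu_N)$.

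\textbf{Four explicit maps.} Rather than evaluating the infima and suprema of Proposition~\ref{thm:mix_general_implicit_bounds} abstractly, I will prove each bound by a direct event inclusion, as in that proof.

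For the upper bound on $\rvmu_N$, fix $s$ and use $\lambda>0$. The map $u\mapsto K_\lambda(s,u)$ is continuous and strictly increasing on $[0,N]$, since $1-(1-e^{-\lambda})u/N\ge e^{-\lambda}>0$. Solving $-\lambda s-N\ln[1-(1-e^{-\lambda})u/N]=\ln(1/\varepsilon)$ gives
\[
u_0=N\bigl[1-e^{(-\lambda s-\ln(1/\varepsilon))/N}\bigr]/(1-e^{-\lambda}).
\]
Hence $u>u_0$ implies $K_\lambda(s,u)>1/\varepsilon$. Clipping is harmless. If $u_0\ge N$, then $U^\mu_\lambda=N$ and $\{\rvmu_N>N\}$ is empty. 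If $u_0<0$, which cannot actually occur since the exponent is negative, every $u\ge0$ lies in the tail. So $\{\rvmu_N>U^\mu_\lambda(\rvLambda_N)\}\subseteq\{K_\lambda\ge1/\varepsilon\}$.

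For the lower bound on $\rvLambda_N$, fix $u$. The map $s\mapsto K_\lambda(s,u)$ is strictly decreasing, and solving for $s$ gives the formula for $L^\Lambda_\lambda$. Then $s<L^\Lambda_\lambda(u)$ implies $K_\lambda(s,u)>1/\varepsilon$. The clipping is again harmless, because $\rvLambda_N\in[0,N]$.

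For the two complementary maps, I will write out $K_\lambda(N-s,N-u)=1/\varepsilon$ using
\[
1-(1-e^{-\lambda})\frac{N-u}{N}=e^{-\lambda}\Bigl[1-(1-e^{\lambda})\frac{u}{N}\Bigr].
\]
The identity follows by expanding both sides. The condition then reads
\[
-\lambda(N-s)+N\lambda-N\ln\Bigl[1-(1-e^{\lambda})\frac{u}{N}\Bigr]=\lambda s-N\ln\Bigl[1-(1-e^{\lambda})\frac{u}{N}\Bigr]=\ln(1/\varepsilon).
\]
Solving for $u$ gives $L^\mu_\lambda$, using that $1-e^{\lambda}<0$. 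Solving for $s$ gives $U^\Lambda_\lambda$. Monotonicity comes from Lemma~\ref{lem:mix_gpi_monotonicity}: the left-hand side is nonincreasing in $u$ and nondecreasing in $s$. Combined with Corollary~\ref{cor:mix_complement}, this yields the remaining two inequalities in \eqref{eq:fixed_lambda_four_valid}.

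\textbf{Main obstacle.} The only delicate part is bookkeeping: getting the signs right when inverting $u\mapsto K_\lambda(N-s,N-u)$, since dividing by $1-e^{\lambda}<0$ flips inequalities, and checking that clipping to $[0,N]$ never breaks an inclusion. I will handle each clipping case separately. A clipped upper bound equal to $N$, or a clipped lower bound equal to $0$, makes the bad event empty. A lower bound clipped at $N$, or an upper bound clipped at $0$, stays inside the tail region by monotonicity.
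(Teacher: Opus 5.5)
Your proposal is correct and follows essentially the same route as the paper: you specialize Proposition~\ref{thm:mix_joint_tail}, Corollary~\ref{cor:mix_complement} and Proposition~\ref{thm:mix_general_implicit_bounds} to $\pi=\delta_\lambda$, then invert the strictly monotone kernel $K_\lambda$ in closed form, handling its complemented form through $1-(1-e^{-\lambda})(N-u)/N=e^{-\lambda}\bigl[1-(1-e^{\lambda})u/N\bigr]$. Your explicit event inclusions, with each clipping case treated separately, make precise what the paper leaves implicit, namely that these closed forms coincide with the infimum/supremum boundaries.
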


We first identify, for fixed deterministic inputs $s$ and $u$, the values of $\lambda$ that optimize its four bounds. 
For $s,u\in(0,N)$, on branches where clipping is inactive and the bounds are nontrivial, the optimization programs are
\begin{align}
    \lambda_{\mu,+}(s)
     &\in \arg\min_{\lambda>0} U_\lambda^\mu(s),
    &
    \lambda_{\mu,-}(s)
     &\in \arg\max_{\lambda>0} L_\lambda^\mu(s),
    \nonumber\\
    \lambda_{\Lambda,+}(u)
     &\in \arg\min_{\lambda>0} U_\lambda^\Lambda(u),
    &
    \lambda_{\Lambda,-}(u)
     &\in \arg\max_{\lambda>0} L_\lambda^\Lambda(u).
    \label{eq:fixed_lambda_calibration_programs}
\end{align}
Writing $x:=s/N$ and $y:=u/N$, the solutions of Eq.~\eqref{eq:fixed_lambda_calibration_programs} can be expressed through exact KL boundaries, defined by the equations
\begin{equation}
    \begin{aligned}
        D\!\left[x\middle\|y_\pm(x)\right]
         &=\frac{\ln(1/\varepsilon)}{N},
        & y_-(x)<x<y_+(x),
        \\
        D\!\left[x_\pm(y)\middle\|y\right]
         &=\frac{\ln(1/\varepsilon)}{N},
        & x_-(y)<y<x_+(y).
    \end{aligned}
    \label{eq:fixed_lambda_exact_kl_boundaries}
\end{equation}
The exact optimizers of Eq.~\eqref{eq:fixed_lambda_calibration_programs} are then given by

\renewcommand{\minalignsep}{0pt}
\begin{align}
    \lambda_{\mu,+}(s) &= \lambda^*\!\left[x,y_+(x)\right], \nonumber\\
    \lambda_{\mu,-}(s) &= \lambda^*\!\left[1-x,1-y_-(x)\right], \nonumber\\
    \lambda_{\Lambda,+}(u) &= \lambda^*\!\left[1-x_+(y),1-y\right], \nonumber\\
    \lambda_{\Lambda,-}(u) &= \lambda^*\!\left[x_-(y),y\right].
\end{align}

These values would yield the tightest bounds for observed inputs $s$ and $u$. With this in mind, in the same spirit as the Kato bounds, we calibrate $\lambda$ using pre-data guesses $g_\Lambda,g_\mu\in[0,N]$, setting $x_g:=g_\Lambda/N$ and $y_g:=g_\mu/N$.

Solving the exact KL equations in Eq.~\eqref{eq:fixed_lambda_exact_kl_boundaries} at $x_g$ and $y_g$ requires numerical inversion.
To obtain explicit fixed calibrations, we use the rational relaxations of Lemma~\ref{lem:mix_surrogates} with the interval penalty removed, so that $\widetilde{\varepsilon}_{A,B}=\varepsilon$.
Denote the resulting functions and endpoints by $\gamma_{\langle\mu\rangle,\varepsilon}^{\pm}$, $\zeta_{\langle\Lambda\rangle,\varepsilon}^{\pm}$, $r_{\langle\mu\rangle,\varepsilon}^{\pm}$, and $r_{\langle\Lambda\rangle,\varepsilon}^{\pm}$.

Fix a margin $\iota>0$ such that the trimmed intervals below are nonempty.
Set
\begin{align*}
    \bar x_g
     &:=
    \min\!\left\{
    r_{\langle\mu\rangle,\varepsilon}^{+}-\iota,
    \max\!\left\{
    r_{\langle\mu\rangle,\varepsilon}^{-}+\iota,
    x_g
    \right\}
    \right\},
    \\
    \bar y_g
     &:=
    \min\!\left\{
    r_{\langle\Lambda\rangle,\varepsilon}^{+}-\iota,
    \max\!\left\{
    r_{\langle\Lambda\rangle,\varepsilon}^{-}+\iota,
    y_g
    \right\}
    \right\}.
\end{align*}
The margin prevents the endpoint cases where the corresponding value of $\lambda^*$ diverges.
For $0<x\le y<1$, let
\(
    \lambda^*(x,y)
\)
be given by Eq.~\eqref{eq:mix_lambda_star}.
The calibrated parameters are
\begin{align}
    \lambda_{\mu,+}^{\mathrm{rat}}(g_\Lambda)
     &:=
    \lambda^*\!\left[\bar x_g,\gamma_{\langle\mu\rangle,\varepsilon}^+(\bar x_g)\right],
    \label{eq:fixed_lambda_mu_plus_rat}
    \\
    \lambda_{\mu,-}^{\mathrm{rat}}(g_\Lambda)
     &:=
    \lambda^*\!\left[1-\bar x_g,1-\gamma_{\langle\mu\rangle,\varepsilon}^-(\bar x_g)\right],
    \label{eq:fixed_lambda_mu_minus_rat}
    \\
    \lambda_{\Lambda,+}^{\mathrm{rat}}(g_\mu)
     &:=
    \lambda^*\!\left[1-\zeta_{\langle\Lambda\rangle,\varepsilon}^+(\bar y_g),1-\bar y_g\right],
    \label{eq:fixed_lambda_lambda_plus_rat}
    \\
    \lambda_{\Lambda,-}^{\mathrm{rat}}(g_\mu)
     &:=
    \lambda^*\!\left[\zeta_{\langle\Lambda\rangle,\varepsilon}^-(\bar y_g),\bar y_g\right].
    \label{eq:fixed_lambda_lambda_minus_rat}
\end{align}

By Eq.~\eqref{eq:fixed_lambda_four_valid}, the calibrated fixed-$\lambda$ bounds satisfy
\begingroup
\small
\begin{align}
    \mathbb P\!\left[
        \rvmu_N>U_{\lambda_{\mu,+}^{\mathrm{rat}}(g_\Lambda)}^{\mu}(\rvLambda_N)
    \right]
     &\le
    \varepsilon,
    &
    \mathbb P\!\left[
        \rvmu_N<L_{\lambda_{\mu,-}^{\mathrm{rat}}(g_\Lambda)}^{\mu}(\rvLambda_N)
    \right]
     &\le
    \varepsilon,
    \nonumber\\
    \mathbb P\!\left[
        \rvLambda_N>U_{\lambda_{\Lambda,+}^{\mathrm{rat}}(g_\mu)}^{\Lambda}(\rvmu_N)
    \right]
     &\le
    \varepsilon,
    &
    \mathbb P\!\left[
        \rvLambda_N<L_{\lambda_{\Lambda,-}^{\mathrm{rat}}(g_\mu)}^{\Lambda}(\rvmu_N)
    \right]
     &\le
    \varepsilon.
    \label{eq:fixed_calibrated_kl_valid}
\end{align}
\endgroup

\suppsection{Binomial benchmark bounds}
\label{sec:mix_optimality}

Here we prove the binomial benchmark claim used in the main text.
Denote the adaptive Bernoulli class by $\mathcal A_N$ and the IID Bernoulli subclass by $\mathcal B_N$.
Since $\mathcal B_N\subset\mathcal A_N$, any confidence boundary valid uniformly over $\mathcal A_N$ must also be valid for every IID Bernoulli process.
The binomial tail inversion is therefore a benchmark for bounds valid uniformly over the adaptive class.

For $s,u\in[0,N]$, write the lower binomial tail as
\begin{equation}
    B_N(s,u)
    :=
    \Prob\!\left[\operatorname{Bin}(N,u/N)\le s\right].
    \label{eq:opt_bin_tail_def}
\end{equation}

\begin{proposition}\label{prop:binomial-benchmark-sm}
    Fix $\eps\in(0,1)$.
    Let $U^\mu,L^\Lambda:[0,N]\to[0,N]$ be nondecreasing functions satisfying
    \[
        \Prob[\rvmu_N>U^\mu(\rvLambda_N)]\le\eps,
        \qquad
        \Prob[\rvLambda_N<L^\Lambda(\rvmu_N)]\le\eps
    \]
    uniformly over $\mathcal A_N$.
    Define
    \begin{align}
        U^\mu_{\mathrm{Bin}}(s)
         &:=
        \sup\!\left\{v\in[0,N]:B_N(s,v)>\eps\right\},
        \nonumber\\
        L^\Lambda_{\mathrm{Bin}}(u)
         &:=
        \inf\!\left\{r\in[0,N]:B_N(r,u)>\eps\right\}.
        \label{eq:binomial-benchmark-def-sm}
    \end{align}
    Then, for all $s,u\in[0,N]$,
    \begin{equation}
        U^\mu(s)\ge U^\mu_{\mathrm{Bin}}(s),
        \qquad
        L^\Lambda(u)\le L^\Lambda_{\mathrm{Bin}}(u).
        \label{eq:binomial-benchmark-sm}
    \end{equation}
\end{proposition}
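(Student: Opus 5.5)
The plan is to argue by contradiction, testing the supposedly valid bounds $U^\mu$ and $L^\Lambda$ against suitable members of the IID subclass $\mathcal B_N\subset\mathcal A_N$. Take $\rvxi_i$ IID Bernoulli with success probability $v/N$, with $\mathcal F_i$ the natural filtration. Then $\rvp_i=v/N$ for every $i$, so $\rvmu_N=v$ deterministically and $\rvLambda_N\sim\operatorname{Bin}(N,v/N)$. Under such a process the events in the hypothesis reduce to binomial tail events, whose probabilities are exactly $B_N(\cdot,\cdot)$.

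\textbf{Preliminary monotonicity facts.} I would first record two properties of $B_N$. The map $v\mapsto B_N(s,v)$ is nonincreasing, by stochastic ordering of $\operatorname{Bin}(N,q)$ in $q$ (e.g.\ via the standard coupling of uniforms). The map $r\mapsto B_N(r,u)$ is nondecreasing, since it is a CDF. Consequently $\{v:B_N(s,v)>\eps\}$ is an initial segment of $[0,N]$ and $\{r:B_N(r,u)>\eps\}$ is a final segment. Both sets are nonempty, because $B_N(s,0)=1>\eps$ and $B_N(N,u)=1>\eps$, so the sup and inf are well defined. This step is the only place needing a small argument, and I expect it to be the main (though mild) technical point. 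The coupling argument handles the non-integer $s$ and the endpoints $v\in\{0,N\}$ uniformly.

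\textbf{Upper bound on $\rvmu_N$.} Suppose $U^\mu(s)<U^\mu_{\mathrm{Bin}}(s)$ for some $s$. By the definition of the supremum and the initial-segment property, there is a $v\in(U^\mu(s),U^\mu_{\mathrm{Bin}}(s)]$ with $B_N(s,v)>\eps$. Run the IID process with parameter $v/N$. On the event $\{\rvLambda_N\le s\}$, monotonicity of $U^\mu$ gives
\[
U^\mu(\rvLambda_N)\le U^\mu(s)<v=\rvmu_N .
\]
Hence
\[
\Prob[\rvmu_N>U^\mu(\rvLambda_N)]\ge\Prob[\rvLambda_N\le s]=B_N(s,v)>\eps,
\]
which contradicts the validity of $U^\mu$ on $\mathcal A_N$.

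\textbf{Lower bound on $\rvLambda_N$.} Suppose $L^\Lambda(u)>L^\Lambda_{\mathrm{Bin}}(u)$. By the final-segment property, pick $r\in[L^\Lambda_{\mathrm{Bin}}(u),L^\Lambda(u))$ with $B_N(r,u)>\eps$. Run the IID process with parameter $u/N$, so that $\rvmu_N=u$. On $\{\rvLambda_N\le r\}$ we have $\rvLambda_N\le r<L^\Lambda(u)=L^\Lambda(\rvmu_N)$. This event has probability $B_N(r,u)>\eps$, again a contradiction; monotonicity of $L^\Lambda$ is not even needed here.

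Together the two cases give Eq.~\eqref{eq:binomial-benchmark-sm}.
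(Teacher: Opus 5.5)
Your proof is correct and follows essentially the same route as the paper. Both test $U^\mu$ and $L^\Lambda$ against the IID Bernoulli$(v/N)$ (resp.\ Bernoulli$(u/N)$) process, where $\rvmu_N$ is deterministic and the failure event contains $\{\rvLambda_N\le s\}$ (resp.\ $\{\rvLambda_N\le r\}$), whose probability $B_N$ exceeds $\eps$. The differences are cosmetic: the paper proves the pointwise implications $B_N(s,v)>\eps\Rightarrow U^\mu(s)\ge v$ and then takes the supremum. Also, your preliminary monotonicity of $B_N$ is not actually needed, because the definition of supremum/infimum by itself supplies the required witness $v$ (resp.\ $r$).
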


\begin{proof}
    It suffices to prove the following two implications:
    \begin{equation}
        \begin{aligned}
            B_N(s,v)>\eps
            \quad&\Longrightarrow\quad
            U^\mu(s)\ge v,
            \\
            B_N(r,u)>\eps
            \quad&\Longrightarrow\quad
            L^\Lambda(u)\le r.
        \end{aligned}
        \label{eq:binomial-benchmark-proof-targets}
    \end{equation}
    Taking the supremum over admissible $v$ and the infimum over admissible $r$ then gives \eqref{eq:binomial-benchmark-sm}.

    We first prove the upper-$\rvmu_N$ implication.
    Fix $s\in[0,N]$ and choose $v\in[0,N]$ such that $B_N(s,v)>\eps$.
    Consider the IID Bernoulli$(v/N)$ process.
    This process belongs to $\mathcal B_N\subset\mathcal A_N$, has $\rvmu_N=v$ deterministically, and gives $\rvLambda_N\sim\operatorname{Bin}(N,v/N)$.
    If $U^\mu(s)<v$, then monotonicity of $U^\mu$ gives $U^\mu(\rvLambda_N)\le U^\mu(s)<v=\rvmu_N$ on the event $\{\rvLambda_N\le s\}$.
    Therefore
    \begin{equation}
        \begin{aligned}
            \Prob\!\left[\rvmu_N>U^\mu(\rvLambda_N)\right]
             &\ge
            \Prob\!\left[\rvLambda_N\le s\right]
            \\
             &=
            B_N(s,v)
            >\eps,
        \end{aligned}
    \end{equation}
    This contradicts the assumed upper-$\rvmu_N$ validity.
    Hence $U^\mu(s)\ge v$ for every such $v$.

    We next prove the lower-$\rvLambda_N$ implication.
    Fix $u\in[0,N]$ and take any $r\in[0,N]$ such that $B_N(r,u)>\eps$.
    Consider the IID Bernoulli$(u/N)$ process, for which $\rvmu_N=u$ deterministically.
    Then $\rvLambda_N\sim\operatorname{Bin}(N,u/N)$.
    If $L^\Lambda(u)>r$, then $\{\rvLambda_N\le r\}\subseteq\{\rvLambda_N<L^\Lambda(\rvmu_N)\}$.
    Therefore
    \begin{equation}
        \begin{aligned}
            \Prob\!\left[\rvLambda_N<L^\Lambda(\rvmu_N)\right]
             &\ge
            \Prob\!\left[\rvLambda_N\le r\right]
            \\
             &=
            B_N(r,u)
            >
            \eps,
        \end{aligned}
    \end{equation}
    This contradicts the assumed lower-$\rvLambda_N$ validity.
    Hence $L^\Lambda(u)\le r$ for every such $r$.
    Applying the same argument to the complemented variables gives the analogous lower-$\rvmu_N$ and upper-$\rvLambda_N$ limits.
\end{proof}

\suppsection{Kato bounds}\label{app:kato_bounds}
For Bernoulli variables $\rvxi_1,\ldots,\rvxi_N$, Kato's concentration result~\cite{kato2020concentration} gives the following inequalities for any constants $a,b\in\mathbb R$ satisfying $b\ge |a|$.
The parameter $a$ is chosen before observing the data and can be optimized from a pre-data guess of either $\rvLambda_N$ or $\rvmu_N$~\cite{curraslorenzo2021tf}.
This section records the Kato formulas used for the numerical comparisons in the main text.

\begin{equation}
    \begin{aligned}
        &\mathbb P\!\left[
            \rvmu_N-\rvLambda_N
            \ge
            \left[
                b+a\!\left(\frac{2\rvLambda_N}{N}-1\right)
            \right]\sqrt{N}
        \right]
        \\
        &\quad \le
        \exp\!\left[
            \frac{-2(b^2-a^2)}
            {\left(1+\frac{4a}{3\sqrt N}\right)^2}
        \right].
    \end{aligned}
    \label{eq:kato_A1}
\end{equation}
Replacing $\rvxi_u$ with $1-\rvxi_u$ and $a$ with $-a$ in \eqref{eq:kato_A1} gives
\begin{equation}
    \begin{aligned}
        &\mathbb P\!\left[
            \rvLambda_N-\rvmu_N
            \ge
            \left[
                b+a\!\left(\frac{2\rvLambda_N}{N}-1\right)
            \right]\sqrt{N}
        \right]
        \\
        &\quad \le
        \exp\!\left[
            \frac{-2(b^2-a^2)}
            {\left(1-\frac{4a}{3\sqrt N}\right)^2}
        \right].
    \end{aligned}
    \label{eq:kato_A2}
\end{equation}

We now write the four inversions of Eqs.~\eqref{eq:kato_A1}--\eqref{eq:kato_A2}~\cite{curraslorenzo2021randomsampling}.
Fix a target failure probability $\varepsilon\in(0,1)$.
Define the clipping map $\Pi_N(z):=\min\{N,\max\{0,z\}\}$.
All square roots below denote the principal nonnegative branch.
Let $\rho\in\{-1,+1\}$ and define
\begin{equation}
    b_\rho(a)
    :=
    \sqrt{
        a^2
        -\frac{\ln\varepsilon}{2}
        \left(1-\rho\frac{4a}{3\sqrt N}\right)^2
    } .
    \label{eq:kato_numeric_b}
\end{equation}
With $b=b_\rho(a)$, Kato's condition $b\ge |a|$ is satisfied, and the exponential bound is exactly $\varepsilon$.
The signs $\rho=-1$ and $\rho=+1$ correspond to Eqs.~\eqref{eq:kato_A1} and~\eqref{eq:kato_A2}, respectively.

\begin{widetext}
In each paired expression below, the upper signs give the lower-bound coefficient $a_{-1}$ and the lower signs give the upper-bound coefficient $a_{+1}$.

For the $\rvmu_N$ guess $\guessmu$, the  coefficient programs are
\begin{equation}
    \begin{aligned}
        a^\Lambda_{\mp1}(\guessmu)
         &\in
        \underset{a}{\arg\max/\arg\min}
        \left\{
            \frac{N}{\sqrt N\pm 2a}
            \left[
                \frac{\guessmu}{\sqrt N}
                \pm a
                \mp b_{\mp1}(a)
            \right]
        \right\}.
    \end{aligned}
    \label{eq:kato_numeric_lambda_program}
\end{equation}

For the $\rvLambda_N$ guess $\guessLambda$, the coefficient programs are
\begin{equation}
    \begin{aligned}
        a^\mu_{\mp1}(\guessLambda)
         &\in
        \underset{a}{\arg\max/\arg\min}
        \left\{
            \guessLambda
            \mp
            \left[
                b_{\pm1}(a)
                +a\left(\frac{2\guessLambda}{N}-1\right)
            \right]\sqrt N
        \right\}.
    \end{aligned}
    \label{eq:kato_numeric_mu_program}
\end{equation}

The closed forms below are the analytical solutions of these programs~\cite{curraslorenzo2021tf,curraslorenzo2021randomsampling,navarrete2022trojan}.

For bounds on $\rvmu_N$, define
\begin{equation}
    Q_\Lambda:=9\guessLambda(N-\guessLambda)-2N\ln\varepsilon.
\end{equation}
Set
\begin{equation}
    \begin{aligned}
        a^\mu_{\mp1}(\guessLambda)
         &:=
        \min/\max\!\left\{
            \frac{
                3\!\left[
                    \mp72\sqrt N\,\guessLambda(N-\guessLambda)\ln\varepsilon
                    \pm16N^{3/2}(\ln\varepsilon)^2
                    +9\sqrt2\,(N-2\guessLambda)
                    \sqrt{-N^2\ln\varepsilon\, Q_\Lambda}
                \right]
            }
            {
                4(9N-8\ln\varepsilon)Q_\Lambda
            },
            \pm\frac{\sqrt N}{2}
        \right\}.
    \end{aligned}
    \label{eq:kato_mu_coeffs}
\end{equation}
With these coefficients, the Kato bounds on $\rvmu_N$ are
\begin{align}
    L_{\mathrm{Kato}(\guessLambda)}^{\rvmu}(s)
     &:=
    \Pi_N\!\left[
        s-
        \left[
            b_{+1}\!\left[a^\mu_{-1}(\guessLambda)\right]
            +a^\mu_{-1}(\guessLambda)\left(\frac{2s}{N}-1\right)
        \right]\sqrt N
    \right],
    \nonumber\\
    U_{\mathrm{Kato}(\guessLambda)}^{\rvmu}(s)
     &:=
    \Pi_N\!\left[
        s+
        \left[
            b_{-1}\!\left[a^\mu_{+1}(\guessLambda)\right]
            +a^\mu_{+1}(\guessLambda)\left(\frac{2s}{N}-1\right)
        \right]\sqrt N
    \right].
    \label{eq:kato_mu_bounds}
\end{align}
They satisfy
\begin{equation}
    \mathbb P\!\left[\rvmu_N\le L_{\mathrm{Kato}(\guessLambda)}^{\rvmu}(\rvLambda_N)\right]\le \varepsilon,
    \qquad
    \mathbb P\!\left[\rvmu_N\ge U_{\mathrm{Kato}(\guessLambda)}^{\rvmu}(\rvLambda_N)\right]\le \varepsilon.
    \label{eq:kato_mu_validity}
\end{equation}

For bounds on $\rvLambda_N$, set
\begin{equation}
    \begin{aligned}
        a^\Lambda_{\mp1}(\guessmu)
         &:=
        \frac{
            3\sqrt N\!\left[
                \mp9(3N^2-8N\guessmu+8\guessmu^2)\ln\varepsilon
                \mp4N(\ln\varepsilon)^2
                +9(N-2\guessmu)
                \sqrt{N\ln\varepsilon\left[N\ln\varepsilon-18\guessmu(N-\guessmu)\right]}
            \right]
        }
        {
            4\!\left[
                36(N^2-2N\guessmu+2\guessmu^2)\ln\varepsilon
                +81N\guessmu(N-\guessmu)
                +4N(\ln\varepsilon)^2
            \right]
        } .
    \end{aligned}
    \label{eq:kato_lambda_coeffs}
\end{equation}
With these coefficients, the Kato bounds on $\rvLambda_N$ are
\begin{align}
    L_{\mathrm{Kato}(\guessmu)}^{\rvLambda}(u)
     &:=
    \begin{cases}
        \Pi_N\!\left[
            \dfrac{N}{\sqrt N+2a^\Lambda_{-1}(\guessmu)}
            \left[
                \dfrac{u}{\sqrt N}
                +a^\Lambda_{-1}(\guessmu)
                -b_{-1}\!\left[a^\Lambda_{-1}(\guessmu)\right]
            \right]
        \right],
         & \sqrt N+2a^\Lambda_{-1}(\guessmu)>0,\\[3ex]
        0,
         & \text{otherwise,}
    \end{cases}
    \nonumber\\
    U_{\mathrm{Kato}(\guessmu)}^{\rvLambda}(u)
     &:=
    \begin{cases}
        \Pi_N\!\left[
            \dfrac{N}{\sqrt N-2a^\Lambda_{+1}(\guessmu)}
            \left[
                \dfrac{u}{\sqrt N}
                -a^\Lambda_{+1}(\guessmu)
                +b_{+1}\!\left[a^\Lambda_{+1}(\guessmu)\right]
            \right]
        \right],
         & \sqrt N-2a^\Lambda_{+1}(\guessmu)>0,\\[3ex]
        N,
         & \text{otherwise.}
    \end{cases}
    \label{eq:kato_lambda_bounds}
\end{align}
They satisfy
\begin{equation}
    \mathbb P\!\left[\rvLambda_N\le L_{\mathrm{Kato}(\guessmu)}^{\rvLambda}(\rvmu_N)\right]\le \varepsilon,
    \qquad
    \mathbb P\!\left[\rvLambda_N\ge U_{\mathrm{Kato}(\guessmu)}^{\rvLambda}(\rvmu_N)\right]\le \varepsilon.
    \label{eq:kato_lambda_validity}
\end{equation}
\end{widetext}

\suppsection{Azuma--Hoeffding bounds}\label{app:azuma_hoeffding_bounds}

The Azuma--Hoeffding inequality~\cite{azuma1967weighted}, applied to the martingale $\rvZ_n:=\sum_{i=1}^n(\rvxi_i-\rvp_i)=\rvLambda_n-\rvmu_n$, gives, for any $t>0$,
\begin{align}
    \mathbb P\!\left[\rvLambda_N-\rvmu_N \ge t\right]
     &\le
    e^{-\frac{2t^2}{N}},
    \nonumber \\
    \mathbb P\!\left[\rvmu_N-\rvLambda_N \ge t\right]
     &\le
    e^{-\frac{2t^2}{N}}.
\end{align}
Fix a failure probability $\varepsilon\in(0,1)$ and define
\begin{equation}
    \Delta_{\mathrm{Azuma}}(N,\varepsilon):=\sqrt{\frac{N\ln(1/\varepsilon)}{2}}.
\end{equation}
Define the bounds on $\rvmu_N$ by
\begin{align}
    U_{\mathrm{Azuma}}^{\rvmu}(s)
     &:=
    s+\Delta_{\mathrm{Azuma}}(N,\varepsilon),
    \\
    L_{\mathrm{Azuma}}^{\rvmu}(s)
     &:=
    s-\Delta_{\mathrm{Azuma}}(N,\varepsilon).
\end{align}
Then
\begin{align}
    \mathbb P\!\left[\rvmu_N \ge U_{\mathrm{Azuma}}^{\rvmu}(\rvLambda_N)\right]
     &\le \varepsilon,
    \\
    \mathbb P\!\left[\rvmu_N \le L_{\mathrm{Azuma}}^{\rvmu}(\rvLambda_N)\right]
     &\le \varepsilon.
\end{align}
Define the bounds on $\rvLambda_N$ by
\begin{align}
    U_{\mathrm{Azuma}}^{\rvLambda}(u)
     &:=
    u + \Delta_{\mathrm{Azuma}}(N,\varepsilon),
    \\
    L_{\mathrm{Azuma}}^{\rvLambda}(u)
     &:=
    u - \Delta_{\mathrm{Azuma}}(N,\varepsilon).
\end{align}
Then
\begin{align}
    \mathbb P\!\left[\rvLambda_N \ge U_{\mathrm{Azuma}}^{\rvLambda}(\rvmu_N)\right]
     &\le \varepsilon,
    \\
    \mathbb P\!\left[\rvLambda_N \le L_{\mathrm{Azuma}}^{\rvLambda}(\rvmu_N)\right]
     &\le \varepsilon.
\end{align}

\suppsection{Asymmetric passive decoy-state BB84 protocol}\label{app:passive_bb84}

Here we provide a description of the asymmetric passive decoy-state BB84 protocol from Ref.~\cite{mizutani2026passive} for completeness.

\subsection{Protocol description}

Alice and Bob agree on the protocol inputs $N$, $p_\alpha$, $p_\omega$, $\nu_\omega$, $q$, $f_{\mathrm{ec}}$, $\epsilon_{\mathrm{ph}}$, $\epsilon_{\mathrm{cor}}$, $\epsilon_{\mathrm{PA}}$, $\Delta_{\mathrm{PA}}$, and $\mathcal C$, specified in Table~\ref{tab:passive_app_inputs}.
Here $\nu_\omega$ denotes the intensity value, with $\nu_V=0$ and $0<\nu_D<\nu_S$.
The intensity probabilities obey $\sum_{\omega\in\{S,D,V\}}p_\omega=1$, and the basis probabilities obey $\sum_{\alpha\in\{Z,X\}}p_\alpha=1$.

The protocol runs as follows.
For $i=1,\ldots,N$, steps 1 and 2 below are repeated.
\begin{enumerate}
\item{}\emph{State preparation:}
Alice chooses a bit value $a_i\in\{0,1\}$ uniformly at random, a basis $\alpha_i\in\{Z,X\}$ with probabilities $p_Z$ and $p_X$, and an intensity label $\omega_i\in\{S,D,V\}$ with probabilities $p_S,p_D \text{ and }p_V$.
She sends Bob a phase-randomized weak coherent pulse with intensity $\nu_{\omega_i}$ and BB84 polarization determined by $(a_i,\alpha_i)$.

\item{}\emph{Asymmetric passive measurement:}
Bob sends the incoming pulse through a beam splitter of ratio $q\in(0,1/2)$ and uses four threshold single-photon detectors to measure the $Z$ and $X$ arms.
If no detector clicks, Bob records a no-click event.
If clicks occur only on the $Z$ ($X$) arm, Bob sets $\beta_i=Z$ ($X$) and assigns the bit value $b_i$ from the clicked detector.
If both detectors in one arm click and the other arm is silent, Bob assigns $b_i$ uniformly at random.
If detectors in both arms click, Bob records a cross-click event.
\end{enumerate}

The post-processing and the public discussion over an authenticated channel run as follows:
\begin{enumerate}
\setcounter{enumi}{2}
\item{}\emph{Data announcement:}
Bob records the index sets $\mathcal D_Z$, $\mathcal D_X$ and $\mathcal D_{\mathrm{cross}}$ for events with $\beta_i=Z$, events with $\beta_i=X$, and cross-click events.
He obtains the $X$-basis string $\bm{b}_X:=(b_i)_{i\in\mathcal D_X}$.
He announces $\mathcal D_Z$, $\mathcal D_X$, $\mathcal D_{\mathrm{cross}}$, and $\bm{b}_X$ over an authenticated classical channel.

\item{}\emph{Sifting:}
Alice forms
\begin{equation}
    \begin{aligned}
        \mathcal D'_Z
         &:=
        \{i\in\mathcal D_Z:\alpha_i=Z\},
        \\
        N_{\mathrm{sift}}
         &:=
        |\mathcal D'_Z|.
    \end{aligned}
\end{equation}
Alice announces $(\alpha_i)_{i\in\mathcal D'_Z}$ to Bob, who obtains his sifted key $\bm{b}_B:=(b_i)_{i\in\mathcal D'_Z}$.
Alice obtains her sifted key $\bm{a}_A:=(a_i)_{i\in\mathcal D'_Z}$.

\item{}\emph{Parameter estimation:}
Alice forms the observable counters
\begin{equation}
    \begin{aligned}
        N_{Z,\omega}
         &:=
        |\{i\in\mathcal D'_Z:\omega_i=\omega\}|,
        \\
        N_{X,\omega}^{\mathrm{Error}}
         &:=
        |\{i\in\mathcal D_X:\alpha_i=X,
        \\
        &\qquad
        \omega_i=\omega,a_i\ne b_i\}|,
        \\
        N_{\omega}^{\mathrm{Cross}}
         &:=
        |\{i\in\mathcal D_{\mathrm{cross}}:\omega_i=\omega\}|.
    \end{aligned}
\end{equation}
The parties use the counters $\{N_{Z,\omega}\}_{\omega}$, $\{N_{X,\omega}^{\mathrm{Error}}\}_{\omega}$, and $\{N_{\omega}^{\mathrm{Cross}}\}_{\omega}$ to evaluate a lower bound $N_Z^{1,L}$ on the single-photon $Z$-basis counts and an upper bound $N_{\mathrm{ph}}^{1,U}$ on the corresponding phase-error counts.
In Ref.~\cite{mizutani2026passive}, these bounds are obtained with Kato's inequality~\cite{kato2020concentration}.
Here we obtain them from the concentration family $\mathcal C$ described in the concentration-families subsection below.
The privacy-amplification amount is set to
\begin{equation}
    N_{\mathrm{PA}}
    =
    N_{\mathrm{sift}}
    -
    N_Z^{1,L}
    \left[
        1-h\!\left(\frac{N_{\mathrm{ph}}^{1,U}}{N_Z^{1,L}}\right)
    \right]
    +\Delta_{\mathrm{PA}},
    \label{eq:passive_app_npa}
\end{equation}
where $h(x):=-x\log_2 x-(1-x)\log_2(1-x)$ is the binary entropy function.

\item{}\emph{Error correction:}
Alice announces a bit-error-correcting code and sends a syndrome on $\bm{a}_A$ encrypted via the one-time pad by consuming a pre-shared secret key of length $N_{\mathrm{EC}}$.
Bob uses the syndrome to correct $\bm{b}_B$ and obtains the reconciled key.

\item{}\emph{Privacy amplification:}
Alice and Bob perform privacy amplification on their reconciled keys, sacrificing $N_{\mathrm{PA}}$ bits as determined above, to obtain final keys $\bm{k}_A$ and $\bm{k}_B$.
Their length is given by
\begin{equation}
    \ell
    =
    \max\!\left\{
    0,\,
    N_{\mathrm{sift}}
    -N_{\mathrm{PA}}
    -N_{\mathrm{EC}}
    \right\}.
    \label{eq:passive_app_key_length}
\end{equation}
\end{enumerate}

\begin{center}
    \refstepcounter{table}
    \label{tab:passive_app_inputs}
    \small
    \textbf{TABLE~\thetable.}
    Asymmetric passive decoy-state BB84 inputs
    \begin{tabular}{@{}ll@{}}
        \hline\hline
        Symbol & Meaning \\
        \hline
        $N$ & Number of emitted pulses \\
        $p_\alpha$ & Alice basis probability, $\alpha\in\{Z,X\}$ \\
        $p_\omega$ & Intensity-label probability, $\omega\in\{S,D,V\}$ \\
        $\nu_\omega$ & Intensity for label $\omega$\\
        $q$ & Passive beam-splitter ratio \\
        $f_{\mathrm{ec}}$ & Error-correction leakage efficiency \\
        $\eps_{\mathrm{ph}}$ & Failure probability of the parameter-estimation step \\
        $\eps_{\mathrm{cor}}$ & Failure probability of the error-correction step \\
        $\eps_{\mathrm{PA}}$ & Failure probability of the privacy-amplification step \\
        $\Delta_{\mathrm{PA}}$ & Privacy-amplification overhead, with $\eps_{\mathrm{PA}}=2^{-\Delta_{\mathrm{PA}}}$ \\
        $\mathcal C$ & Concentration family for finite-statistics bounds \\
        \hline\hline
    \end{tabular}
\end{center}

\subsection{Simulation details}

Let $\eta_{\mathrm{sys}}$ denote the overall system transmittance including detector efficiency (which, for simplicity, we assume is the same for all detectors). Also, let $d$ be the per-detector dark-count probability, and let $\delta_{\mathrm{mis}}$ be the misalignment contribution from the channel and Bob's receiver. 
For the satellite-loss plots, a loss value $L$ in dB is converted to the transmittance
\begin{equation}
    \eta_{\mathrm{sys}}=10^{-L/10}.
\end{equation}
For a pulse of intensity $\nu$, we define the single-round detection and error probabilities (neglecting for the moment the effect of $\delta_{\mathrm{mis}}$) as follows
\begin{align}
    P_Z(\nu)
     & =
    e^{-\nu\eta_{\mathrm{sys}}q}(1-d)^2
    \nonumber \\
     &\quad \times
    \Bigl[
        1-e^{-\nu\eta_{\mathrm{sys}}(1-q)}(1-d)^2
    \Bigr],
    \\
    P_X(\nu)
     & =
    e^{-\nu\eta_{\mathrm{sys}}(1-q)}(1-d)^2
    \nonumber \\
     &\quad \times
    \Bigl[
        1-e^{-\nu\eta_{\mathrm{sys}}q}(1-d)^2
    \Bigr],
    \\
    P_Z^{\mathrm{err}}(\nu)
     & =
    e^{-\nu\eta_{\mathrm{sys}}q}(1-d)^2
    \nonumber \\
     &\quad \times
    \Bigl\{
        e^{-\nu\eta_{\mathrm{sys}}(1-q)}
        \bigl[d(1-d)+d^2/2\bigr]
    \nonumber \\
     &\qquad
        +\bigl[1-e^{-\nu\eta_{\mathrm{sys}}(1-q)}\bigr]d/2
    \Bigr\},
    \\
    P_X^{\mathrm{err}}(\nu)
     & =
    e^{-\nu\eta_{\mathrm{sys}}(1-q)}(1-d)^2
    \nonumber \\
     &\quad \times
    \Bigl\{
        e^{-\nu\eta_{\mathrm{sys}}q}
        \bigl[d(1-d)+d^2/2\bigr]
    \nonumber \\
     &\qquad
        +\bigl[1-e^{-\nu\eta_{\mathrm{sys}}q}\bigr]d/2
    \Bigr\}.
\end{align}
The cross-click probability is
\begin{align}
    P_{\mathrm{Cross}}(\nu)
     & =
    \bigl[1-e^{-\nu\eta_{\mathrm{sys}}(1-q)}\bigr]
    \bigl[1-e^{-\nu\eta_{\mathrm{sys}}q}\bigr]
    \nonumber \\
     & \quad
    +\bigl[1-e^{-\nu\eta_{\mathrm{sys}}(1-q)}\bigr]
    e^{-\nu\eta_{\mathrm{sys}}q}
    \bigl[1-(1-d)^2\bigr]
    \nonumber \\
     & \quad
    +\bigl[1-e^{-\nu\eta_{\mathrm{sys}}q}\bigr]
    e^{-\nu\eta_{\mathrm{sys}}(1-q)}
    \bigl[1-(1-d)^2\bigr]
    \nonumber \\
     & \quad
    +e^{-\nu\eta_{\mathrm{sys}}}
    \bigl[1-(1-d)^2\bigr]^2 .
\end{align}
For each $\mathcal B\in\{Z,X\}$ and $\omega\in\{S,D,V\}$, the simulated number of counts are given by
\begin{equation}
    N_{\mathcal B,\omega}=N p_{\mathcal B}p_\omega P_{\mathcal B}(\nu_\omega),
    \label{eq:passive_app_nb}
\end{equation}
and
\begin{equation}
    N_{\mathrm{sift}}=\sum_{\omega\in\{S,D,V\}} N_{Z,\omega}.
\end{equation}
The corresponding $Z$-basis bit-error rate (now including the effect of $\delta_{\mathrm{mis}}$) is modeled as
\begin{equation}
    e_{\mathrm{bit}}
    =
    \frac{N_{\mathrm{sift}}^{\mathrm{Error}}}{N_{\mathrm{sift}}}
    +\delta_{\mathrm{mis}},
\end{equation}
where
\begin{equation}
    N_{\mathrm{sift}}^{\mathrm{Error}}
    =
    \sum_{\omega\in\{S,D,V\}}
    Np_Zp_\omega P_Z^{\mathrm{err}}(\nu_\omega).
\end{equation}
The $X$-basis error counts are
\begin{equation}
    \begin{aligned}
        N_{X,\omega}^{\mathrm{Error}}
         & =
        Np_Xp_\omega P_X^{\mathrm{err}}(\nu_\omega)
        +\delta_{\mathrm{mis}} N_{X,\omega},
    \end{aligned}
    \label{eq:passive_app_nx_error}
\end{equation}
where the cross-click counts are
\begin{equation}
    N_{\omega}^{\mathrm{Cross}}
    =
    Np_\omega P_{\mathrm{Cross}}(\nu_\omega).
    \label{eq:passive_app_cross}
\end{equation}
The error-correction leakage is modelled as
\begin{equation}
    N_{\mathrm{EC}}=f_{\mathrm{ec}} N_{\mathrm{sift}} h(e_{\mathrm{bit}}).
\end{equation}
For all satellite key-rate evaluations, the error-correction leakage efficiency is assumed to be $f_{\mathrm{ec}}=1.16$, and the privacy-amplification overhead is $\Delta_{\mathrm{PA}}=71$.
We set $\eps_{\mathrm{cor}}=5\times10^{-11}$ and assign each individual concentration call the failure probability $\eps=10^{-20}/144$, so $\eps_{\mathrm{ph}}=9\eps$ and $\eps_{\mathrm{sec}}\le\eps_{\mathrm{cor}}+\sqrt{2(\eps_{\mathrm{ph}}+\eps_{\mathrm{PA}})}\approx10^{-10}$.

\subsection{Concentration-families}\label{app:passive_bb84_concentration_families}
Let $\mathcal C$ denote any concentration family that supplies the four bounds
$
    U_{\mathcal C}^{\mu}(s),
    L_{\mathcal C}^{\mu}(s),
    U_{\mathcal C}^{\Lambda}(u) \text{ and }
    L_{\mathcal C}^{\Lambda}(u).
    $
Define the photon-number distribution and the averaged one-photon probability by
\begin{align}
    p_\omega^{\mathrm{int}}(n)
     &:=
    e^{-\nu_\omega}\nu_\omega^n/n!,
    \\
    p^{\mathrm{int}}(1)
     &:=
    \sum_{\omega\in\{S,D,V\}} p_\omega p_\omega^{\mathrm{int}}(1),
\end{align}
and let the parameters $c_1 \text{and } c_2 $ be given by
\begin{equation}
    c_1=\frac{(1-q)p_Z}{q p_X},
    \qquad
    c_2=
    \frac{p_Z(1-q)(1-2q)(1-d)^2}
    {1-\bigl[q^2+(1-q)^2\bigr](1-d)^2}.
\end{equation}
For the phase-error bound, define the intermediate quantity~\cite{mizutani2026passive}
\begin{align}
    \mu_{\mathrm{ph},\mathcal C}^{U}
     & =
    \frac{p^{\mathrm{int}}(1)}{\nu_D}
    \Biggl[
        c_1\biggl(
        e^{\nu_D}\frac{U_{\mathcal C}^{\mu}(N_{X,D}^{\mathrm{Error}})}{p_D}
        -\frac{L_{\mathcal C}^{\mu}(N_{X,V}^{\mathrm{Error}})}{p_V}
        \biggr)
    \nonumber \\
     &\quad
        +c_2\biggl(
        e^{\nu_D}\frac{U_{\mathcal C}^{\mu}(N_{D}^{\mathrm{Cross}})}{p_D}
        -\frac{L_{\mathcal C}^{\mu}(N_{V}^{\mathrm{Cross}})}{p_V}
        \biggr)
        \Biggr].
    \label{eq:passive_app_mu_ph}
\end{align}
The resulting phase-error bound is
\begin{equation}
    \begin{aligned}
    N_{\mathrm{ph},\mathcal C}^{1,U}
     & =
    U_{\mathcal C}^{\Lambda}\!\left(\mu_{\mathrm{ph},\mathcal C}^{U}\right).
    \end{aligned}
    \label{eq:passive_app_nph}
\end{equation}
For the single-photon $Z$-basis decoy reduction, define~\cite{mizutani2026passive}
\begin{align}
    \mu_{1Z,\mathcal C}^{L}
     & =
    p^{\mathrm{int}}(1)
    \Biggl[
        -\frac{\nu_D e^{\nu_S}}{p_S\nu_S(\nu_S-\nu_D)}
        U_{\mathcal C}^{\mu}(N_{Z,S})
    \nonumber \\
     &\quad
        +\frac{\nu_S e^{\nu_D}}{p_D\nu_D(\nu_S-\nu_D)}
        L_{\mathcal C}^{\mu}(N_{Z,D})
    \nonumber \\
     &\quad
        -\frac{\nu_S}{p_V\nu_D(\nu_S-\nu_D)}
        U_{\mathcal C}^{\mu}(N_{Z,V})
        \Biggr],
    \label{eq:passive_app_mu_1z}
\end{align}
The resulting single-photon $Z$-basis bound is
\begin{equation}
    \begin{aligned}
    N_{Z,\mathcal C}^{1,L}
     & =
    L_{\mathcal C}^{\Lambda}\!\left(\mu_{1Z,\mathcal C}^{L}\right).
    \end{aligned}
    \label{eq:passive_app_n1z}
\end{equation}

\suppsection{Additional concentration and satellite key-rate plots}\label{app:additional_satellite_keyrate_plots}

\subsection{Full-range concentration comparison}
\label{subsec:app_full_range_concentration}

Figure~\ref{fig:app-mu-gap-full} compares the Kato and fixed-$\lambda$ upper-$\rvmu_N$ bounds against the exact binomial benchmark.
The inset shows that near the calibration point $g_\Lambda/N=0.1$, the fixed-$\lambda$ curve overlaps Kato's bound, with both lying above the binomial benchmark.
Across the full range, Kato's gap grows toward large $x$, whereas the fixed-$\lambda$ curve follows the binomial benchmark's downturn as $x$ approaches one.

\begin{figure}[!htbp]
    \centering
    \includegraphics[width=0.98\columnwidth]{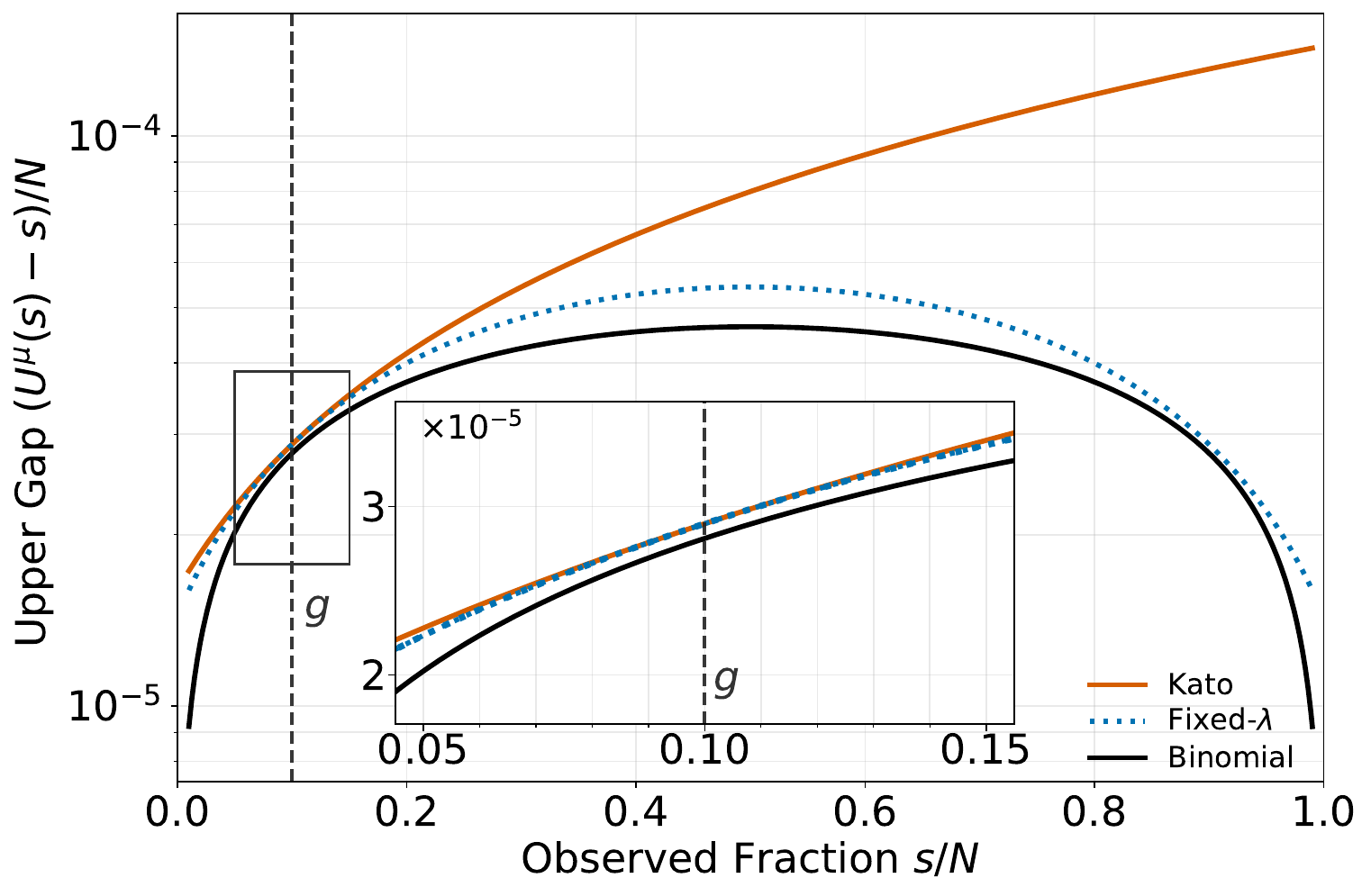}
    \caption{Full-range upper-$\mu_N$ concentration benchmarks for $N=10^{10}$ and $\eps=10^{-20}$.
    The horizontal axis displays $0\le x=s/N\le1$.
    The inset zooms the region $0.05\le x\le0.15$ around the normalized guess $g_\Lambda/N=0.1$.
    The dashed vertical line marks this normalized reference guess.
    The blue dotted curve is the fixed-$\lambda$ bound of Proposition~1 of the Article, calibrated at this guess.
    The remaining curves are Kato's bound~\cite{kato2020concentration} (orange) and the exact Clopper--Pearson binomial inversion (black).}
    \label{fig:app-mu-gap-full}
\end{figure}

\subsection{Union bounds}
\label{subsec:app_union_bounds}

Let $\mathcal M$ denote either the numerical or analytical mixture family and $\mathcal F$ denote the fixed-$\lambda$ family.
For a target one-sided failure probability $\eps$, all component maps on the right-hand side below are evaluated at failure probability $\eps/2$.
Define the union family $\mathcal U$ by
\begingroup
\small
\begin{align}
    U_{\mathcal U}^{\mu}(s)
     &:=
    \min\!\left\{
        U_{\mathcal M}^{\mu}(s),
        U_{\mathcal F}^{\mu}(s)
    \right\},
    \nonumber\\
    L_{\mathcal U}^{\mu}(s)
     &:=
    \max\!\left\{
        L_{\mathcal M}^{\mu}(s),
        L_{\mathcal F}^{\mu}(s)
    \right\},
    \nonumber\\
        U_{\mathcal U}^{\Lambda}(u)
     &:=
    \min\!\left\{
        U_{\mathcal M}^{\Lambda}(u),
        U_{\mathcal F}^{\Lambda}(u)
    \right\},
    \label{eq:app_union_bounds}
    \nonumber\\
    L_{\mathcal U}^{\Lambda}(u)
     &:=
    \max\!\left\{
        L_{\mathcal M}^{\Lambda}(u),
        L_{\mathcal F}^{\Lambda}(u)
    \right\}.
\end{align}
\endgroup
For the upper-$\rvmu_N$ bound, set
\begin{equation}
    E_{\mathcal C}^{\mu}
    :=
    \left\{\rvmu_N>U_{\mathcal C}^{\mu}(\rvLambda_N)\right\},
    \qquad
    \mathcal C\in\{\mathcal M,\mathcal F\}.
    \label{eq:app_union_mu_event}
\end{equation}
Then
\begin{align}
    \Prob\!\left[\rvmu_N>U_{\mathcal U}^{\mu}(\rvLambda_N)\right]
     &=
    \Prob\!\left[E_{\mathcal M}^{\mu}\cup E_{\mathcal F}^{\mu}\right]
    \nonumber\\
     &\le
    \Prob\!\left[E_{\mathcal M}^{\mu}\right]
    +
    \Prob\!\left[E_{\mathcal F}^{\mu}\right]
    \nonumber\\
     &\le
    \frac{\eps}{2}+\frac{\eps}{2}
    =
    \eps.
    \label{eq:app_union_bound_example}
\end{align}
The same argument gives the corresponding guarantees for $L_{\mathcal U}^{\mu}$, $L_{\mathcal U}^{\Lambda}$ and $U_{\mathcal U}^{\Lambda}$.

\subsection{Expected-loss optimization}
\label{subsec:app_expected_loss_optimization}

Figures~\ref{fig:app-leo-expected-key-seven} and~\ref{fig:app-geo-expected-key-seven} are the main-text expected-loss-optimized key-rate plots with the union bounds of Eq.~\eqref{eq:app_union_bounds} added.
Thus the protocol parameters, expected losses, and observed-loss panels are the same as in Figs.~2 and~3 of the Article.
We find that the union-bound curves remain above the Kato curve and lower the positive-key emitted-pulse threshold.
Comparing the Kato curve with the union-bound curve, the latter lowers this threshold by about $9\%$ and $66\%$ for the LEO panels at $L_{\rm obs}=35$ and $40\,\mathrm{dB}$, respectively.
For the GEO panels, the corresponding reductions are about $13\%$ and $36\%$ at $L_{\rm obs}=55$ and $60\,\mathrm{dB}$, respectively.

\begin{figure}[!htbp]
    \centering
    \columnratepanel{(a)}{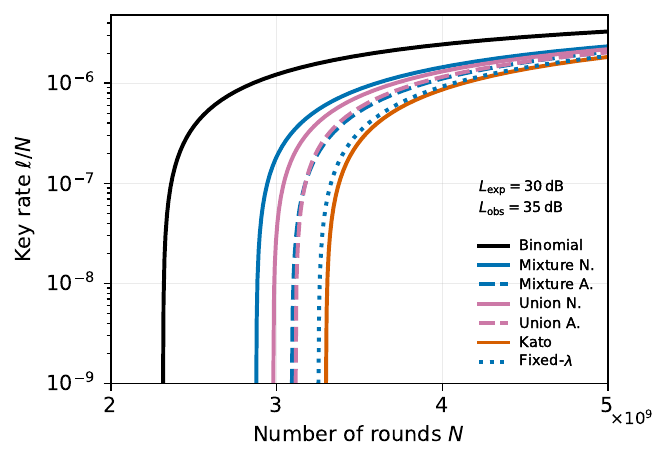}\\[0.6em]
    \columnratepanel{(b)}{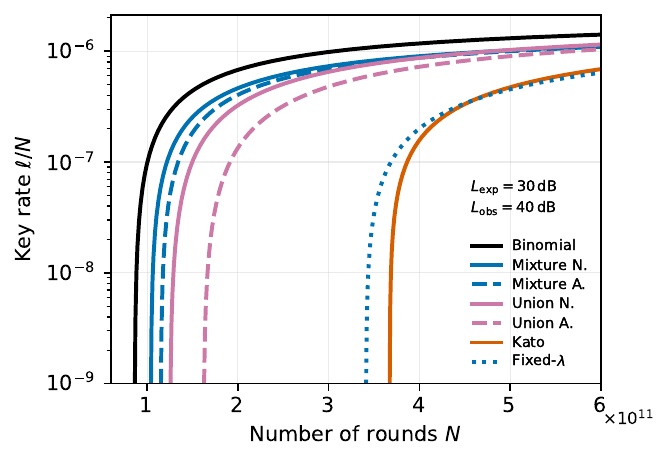}
    \caption{Finite secret key rate versus number of rounds for an asymmetric passive decoy-state BB84 protocol over a LEO-satellite link with channel parameters $d=10^{-7}$ and $\delta_{\mathrm{mis}}=0.03$.
    The expected loss is $L_{\rm exp}=30\,\mathrm{dB}$, while panels (a) and (b) use $L_{\rm obs}=35$ and $40\,\mathrm{dB}$, respectively.
    Protocol parameters are optimized at the expected loss separately for each concentration family.
    Blue curves use the numerical mixture-martingale bounds of Proposition~1 of the Article with $[A,B]=[0,10]$ (solid), their corresponding analytical relaxations (dashed), and the fixed-$\lambda$ martingale bound of Proposition~1 of the Article (dotted).
    Pink curves use the union bounds defined in Eq.~\eqref{eq:app_union_bounds} (numerical union, solid; analytical union, dashed).
    The remaining curves use the Clopper--Pearson binomial benchmark (black) and Kato's bound~\cite{kato2020concentration} (orange).}
    \label{fig:app-leo-expected-key-seven}
\end{figure}

\begin{figure}[!htbp]
    \centering
    \columnratepanel{(a)}{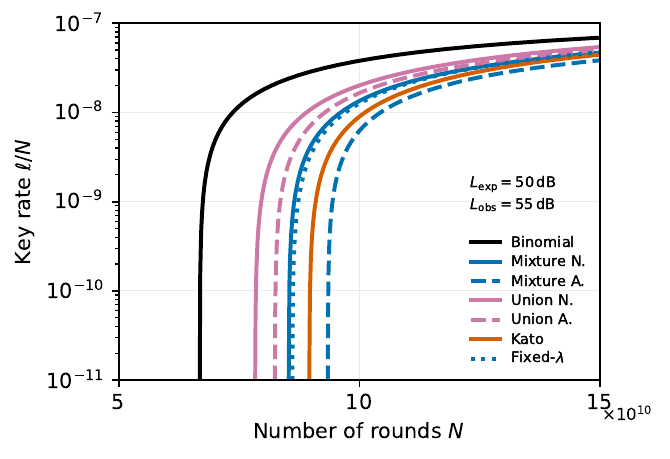}\\[0.6em]
    \columnratepanel{(b)}{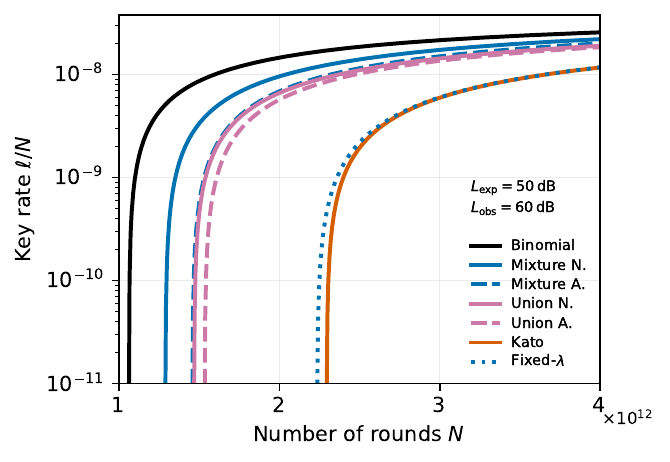}
    \caption{Finite secret key rate versus number of rounds for an asymmetric passive decoy-state BB84 protocol over a GEO-satellite link with channel parameters $d=10^{-9}$ and $\delta_{\mathrm{mis}}=0.01$.
    The expected loss is $L_{\rm exp}=50\,\mathrm{dB}$, while panels (a) and (b) use $L_{\rm obs}=55$ and $60\,\mathrm{dB}$, respectively.
    Protocol parameters are optimized at the expected loss separately for each concentration family.
    All color and line-style conventions match Fig.~\ref{fig:app-leo-expected-key-seven}.}
    \label{fig:app-geo-expected-key-seven}
\end{figure}

\begin{figure*}[!t]
    \centering
    \ratepanel{(a)}{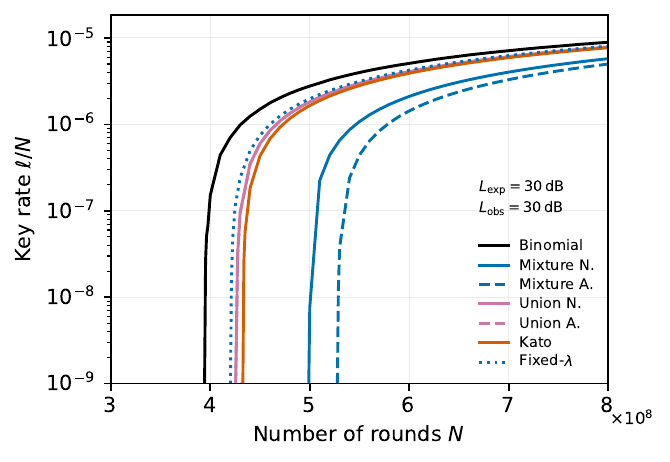}\hfill
    \ratepanel{(b)}{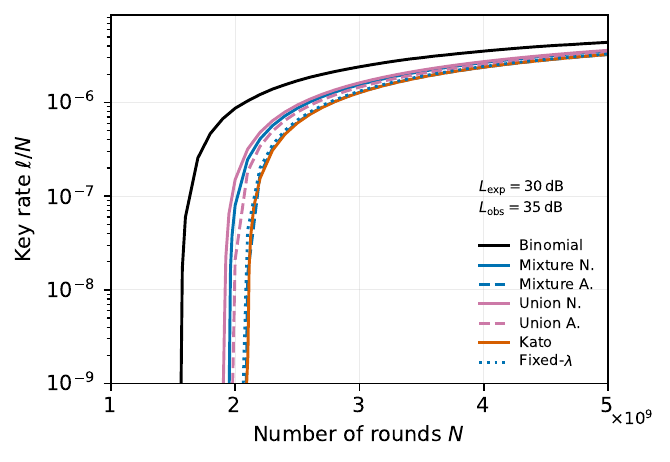}\\[0.8em]
    \ratepanel{(c)}{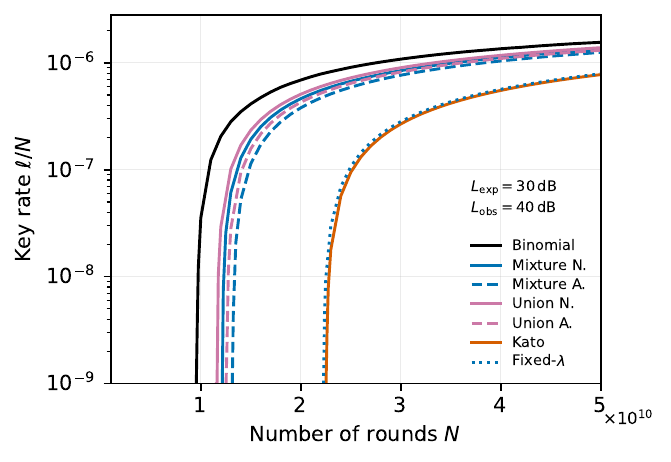}\hfill
    \ratepanel{(d)}{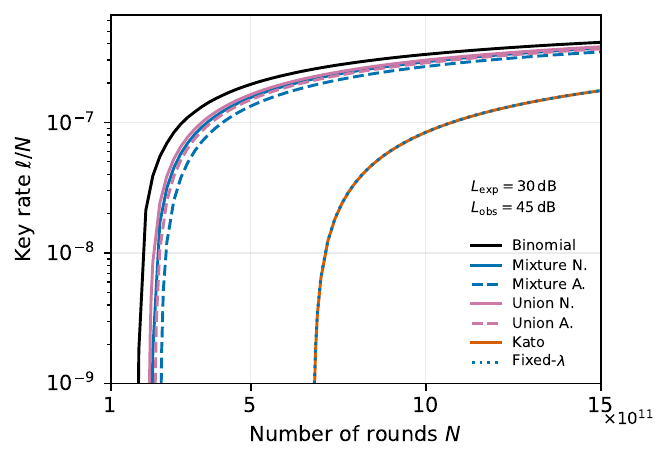}
    \caption{Finite secret key rate versus number of rounds for an asymmetric passive decoy-state BB84 protocol over a LEO-satellite link with channel parameters $d=10^{-7}$ and $\delta_{\mathrm{mis}}=0.03$.
    The expected loss is $L_{\rm exp}=30\,\mathrm{dB}$.
    Panels (a)--(d) use observed losses $L_{\rm obs}=30$, $35$, $40$, and $45\,\mathrm{dB}$, respectively.
    Each displayed $L_{\rm obs}$ is treated as a fixed design loss, and protocol parameters are optimized separately for each concentration family.
    In panel (a), the numerical and analytical union curves are visually indistinguishable at the plotted scale.
    All color and line-style conventions match Fig.~\ref{fig:app-leo-expected-key-seven}.}
    \label{fig:app-leo-observed-key}
\end{figure*}

\subsection{Observed-loss optimization}
\label{subsec:app_observed_loss_optimization}

Figures~\ref{fig:app-leo-observed-key} and~\ref{fig:app-geo-observed-key} instead optimize the protocol parameters at the observed loss $L_{\rm obs}$.
The guess values for the concentration bounds are still calculated under the expected channel $L_{\rm exp}$ for these observed-optimal intensities and probabilities.
They extend the mismatch comparison by adding the no-mismatch panels $L_{\rm obs}=L_{\rm exp}$ and the $15\,\mathrm{dB}$ mismatch panels to the $5$ and $10\,\mathrm{dB}$ cases shown in the main text.
In the no-mismatch panels, expected-loss optimization and observed-loss optimization coincide.
The Kato and fixed-$\lambda$ references are still constructed from the expected-loss design model, so the mismatched panels isolate concentration-bound mismatch from protocol-parameter mismatch.
Across all displayed observed-loss-optimized panels, the union-bound curves remain above the Kato curve.
Comparing the Kato curve with the union-bound curve in each panel, the latter lowers the positive-key emitted-pulse threshold by about $1\%$, $9\%$, $49\%$, and $69\%$ in the LEO panels and about $5\%$, $11\%$, $42\%$, and $68\%$ in the GEO panels.

\begin{figure*}[!htbp]
    \centering
    \ratepanel{(a)}{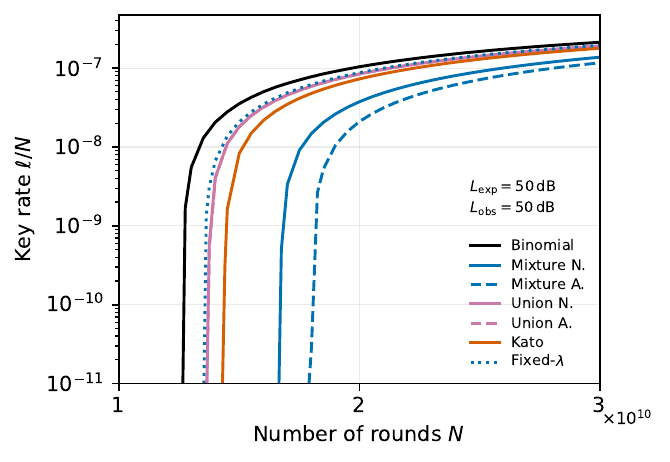}\hfill
    \ratepanel{(b)}{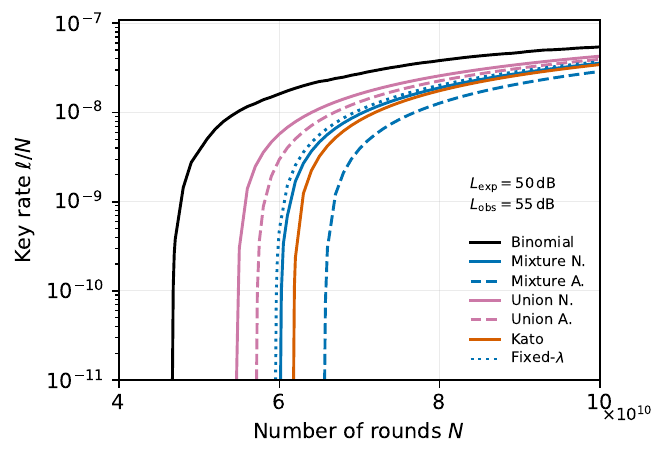}\\[0.8em]
    \ratepanel{(c)}{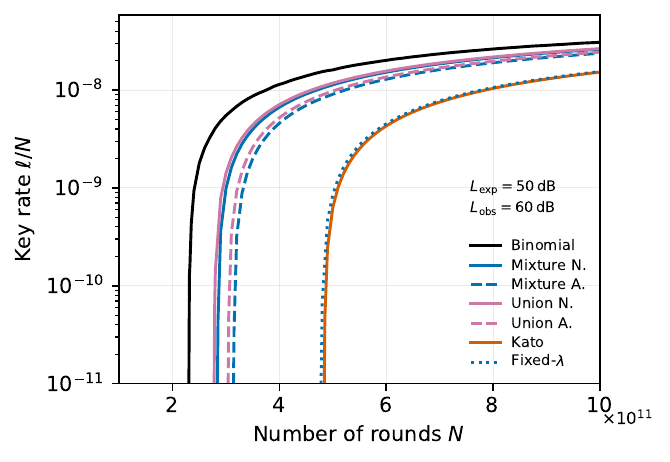}\hfill
    \ratepanel{(d)}{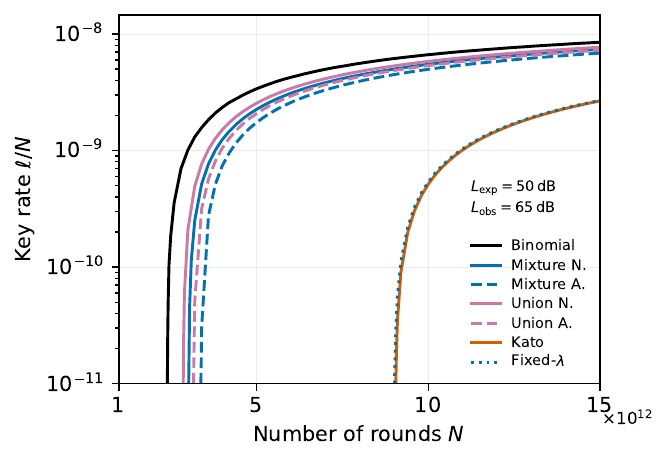}
    \caption{Finite secret key rate versus number of rounds for an asymmetric passive decoy-state BB84 protocol over a GEO-satellite link with channel parameters $d=10^{-9}$ and $\delta_{\mathrm{mis}}=0.01$.
    The expected loss is $L_{\rm exp}=50\,\mathrm{dB}$.
    Panels (a)--(d) use observed losses $L_{\rm obs}=50$, $55$, $60$, and $65\,\mathrm{dB}$, respectively.
    Each displayed $L_{\rm obs}$ is treated as a fixed design loss, and protocol parameters are optimized separately for each concentration family.
    In panel (a), the numerical and analytical union curves are visually indistinguishable at the plotted scale.
    All color and line-style conventions match Fig.~\ref{fig:app-leo-expected-key-seven}.}
    \label{fig:app-geo-observed-key}
\end{figure*}

\clearpage
\bibliography{refs}

\let\bibliography\combinedbibliography
\clearpage
\bibliography{refs}

\end{document}